\documentclass[11pt,a4paper]{article}
\usepackage{amsmath,amssymb,amsthm,newlfont,graphicx,amscd,bbm,enumerate,galois,mathrsfs,xypic,geometry}
\usepackage{simplewick}
\usepackage{tcolorbox}
\usepackage{indentfirst,hyperref}
\usepackage{booktabs,caption}   
\usepackage{setspace}
\usepackage{color}

\newcommand{\lmd}{\lambda}

\newcommand{\p}{\partial}
      
\newcommand{\afa}{\alpha}

\newcommand{\veps}{\varepsilon}

\newcommand{\rme}{{\mathrm e}}

\newcommand{\mcalA}{\mathcal{A}}

\newcommand{\mcalP}{\mathcal{P}}

\newcommand{\bsP}{\boldsymbol{P}}

\DeclareMathOperator{\Res}{Res}
\newcommand*{\pp}[1]
{\frac{\partial   }
	{\partial #1}
}

\newcommand*{\pfrac}[2]
{\frac{\partial #1}
	{\partial #2}
}

\newcommand*{\pair}[2]                   
{
	\left\langle
	#1,#2
	\right\rangle
}

\newcommand*{\Bigset}[2]
{
	\left\{ #1 \,\middle|\, #2 \right\}             
}

\newcommand{\beq}{\begin{equation}}
	\newcommand{\eeq}{\end{equation}}

\DeclareMathOperator{\res}{Res}

\DeclareMathOperator{\grad}{grad}

\DeclareMathOperator{\td}{d\!}

\newtheorem{thm}{Theorem}[section]

\newtheorem{rmk}[thm]{Remark}

\newtheorem{lem}[thm]{Lemma}

\newtheorem{prop}[thm]{Proposition}
\newtheorem{defn}{Definition}[section]
\newtheorem{ex}{Example}[section]

\numberwithin{equation}{section}

\allowdisplaybreaks[4]

\begin{document}
	\title{{Tri-Hamiltonian structure of an asymmetric generalized Ablowitz–Ladik hierarchy and a Frobenius manifold}
				\footnotetext{{\footnotesize$^*$Correspondence should be addressed to Qiulan Zhao; } \\
			{\footnotesize Electronic mail:} {\footnotesize
				 qlzhao@sdust.edu.cn}}}
				\author{{
			Hanxue Zhang,~ Qiulan Zhao$^*$,~ Xinyue Li} \\
		[1em]
		\footnotesize College of Mathematics and Systems Science, Shandong University of Science and Technology,\\
		\footnotesize Qingdao, 266590, Shandong, P. R. China. \\
		}

	\maketitle
		\begin{abstract}
	We construct a local tri-Hamiltonian structure 
	$(\mathcal{P}_1,\mathcal{P}_2,\mathcal{P}_3)$ of the asymmetric $(3,1)$-type generalized Ablowitz--Ladik (gAL) hierarchy at the full-dispersive level and rigorously prove its validity using the supervariable technique. All central invariants of the corresponding bi-Hamiltonian structures are computed, with the notable result that those of $(\mathcal{P}_1,\mathcal{P}_2)$ are equal to $\frac{1}{24}$. In addition, we construct a Frobenius manifold $M$ arising from the dispersionless limit of this hierarchy and show that the dispersionless limits of the first flows of the $(3,1)$-type gAL hierarchy belong to the Principal Hierarchy of $M$.
	\end{abstract}
	\textbf{Keywords} Ablowitz–Ladik hierarchy, Generalized Ablowitz–Ladik hierarchy, Tri-Hamiltonian structure, Central invariant,  Frobenius manifold.
	\\[6pt]\textbf{MSC(2020)} 37K10, 37K25, 53D45
	\tableofcontents
	\section{Introduction}
The Ablowitz--Ladik (AL) equation \cite{AL1975,AL1976}
\begin{equation}\label{AL}
	i\partial_t q_n = q_{n+1} - 2q_n + q_{n-1} \pm q_n^* q_n (q_{n+1} + q_{n-1}), \quad n \in \mathbb{Z}
\end{equation}
is one of the most paradigmatic nonlinear differential-difference equations in the theory of integrable systems, serving as an integrable discretization of the nonlinear Schr\"odinger equation. It has attracted considerable attention \cite{Fan,Geng,AL symmetries1} and is closely related to several classical integrable systems \cite{Universality}, including the two-dimensional Toda (2D-Toda) lattice \cite{AL&2DT}, the Volterra lattice \cite{Suris}, and the relativistic Toda lattice \cite{relativistic Toda2,relativistic Toda1}. The AL hierarchy \cite{Suris} consists of infinitely many differential-difference equations commuting with \eqref{AL}, playing an important role in the description of the Gromov--Witten invariants of local $\mathbb{CP}^1$ \cite{Brini CMP,Brini 2012}. This hierarchy possesses a local tri-Hamiltonian structure \cite{AL-triham}, and its dispersionless limit corresponds to a generalized Frobenius manifold $M_{AL}$ with a non-flat unity \cite{Liu2024,GFM}. The extended AL hierarchy further confirms the consistency of the AL hierarchy with the topological deformation of the Principal Hierarchy of $M_{AL}$ \cite{extend AL}.

Inspired by Brini's conjecture \cite{Brini CMP,Brini 2012}, Takasaki found that the associated Lax operators of the Toda hierarchy take a particular factorized form, which coincides with the condition characterizing the AL hierarchy as a reduction of the Toda hierarchy. In 2014, he identified the fundamental integrable structure of topological string theory on generalized conifolds as the generalized Ablowitz--Ladik (gAL) hierarchy \cite{Takasaki GAL}, whose Lax operators admit several  equivalent factorized forms. One such form is given by
\begin{equation}\label{gAL}
	L_{(N,N)}=C^{-1}B \varLambda^{1-N}, \quad N\in\mathbb{Z}_{>0},
\end{equation}
where $\varLambda = \mathrm{e}^{\varepsilon \partial_x}$ is the shift operator acting as $\varLambda f(x) = f(x+\varepsilon)$ for any $f(x)$, $\varepsilon$ is a dispersion parameter, and
\begin{equation*}
	B=\varLambda^{N}+\alpha_{1}\varLambda^{N-1}+\cdots+\alpha_{N},\quad
	C=1+\beta_{1}\varLambda^{-1}+\cdots+\beta_{N}\varLambda^{-N}
\end{equation*}
with unknown functions $\alpha_1,\dots,\alpha_N$ and $\beta_1,\dots,\beta_N$. The factorized form~\eqref{gAL} is preserved by the flows of the Toda hierarchy; see \cite{Takasaki GAL} for more details.
	
Motivated by Takasaki's work, we generalize \eqref{gAL} to the following form 
\begin{equation}\label{ab type gAL}
	L_{(a,b)} = \left( 1 + \beta_1 \varLambda^{-1} + \cdots + \beta_b \varLambda^{-b}\right)^{-1}
	\left( \varLambda + \alpha_{1} +\alpha_2\varLambda^{-1} +\cdots+\alpha_{a}\varLambda^{1-a} \right),
\end{equation}
where $a,b \in \mathbb{Z}_{>0}$. When $a=b$, we call it the symmetric gAL hierarchy, which is equivalent to \eqref{gAL}. In the special case $a=b=1$, it reduces further to the AL hierarchy. When $a\neq b$, we call it the asymmetric $(a,b)$-type gAL hierarchy. This paper focuses on the asymmetric $(3,1)$-type gAL hierarchy, whose Lax operator reads
\begin{equation}\label{Lax-operator}	L=(1+w\varLambda^{-1})^{-1}(\varLambda+u_{1}+u_{2}\varLambda^{-1}+u_{3}\varLambda^{-2}).
\end{equation}
We explicitly derive three Hamiltonian operators \(\mathcal{P}_1\), \(\mathcal{P}_2\) and \(\mathcal{P}_3\) for this hierarchy, and show that they constitute a local tri-Hamiltonian structure by the supervariable technique~\cite{Kersten}. All central invariants of the resulting bi-Hamiltonian structures $(\mathcal P_{i},\mathcal P_{j})$ ($i,j=1,2,3$, $i\neq j$) are computed; in particular, those of the pair $(\mathcal P_{1},\mathcal P_{2})$ are equal to $\frac{1}{24}$. On the geometric side, we consider the dispersionless limit of this hierarchy and construct an associated Frobenius manifold $M$.  We then show that the dispersionless limits of the  first flows of the \((3,1)\)-type gAL hierarchy are contained in the Principal Hierarchy of $M$. Finally, we present three explicit Legendre transformations of $M$.

This paper is organized as follows. In Sect.~2, we introduce the Lax representation and Hamiltonian formalism of the $(3,1)$-type gAL hierarchy. Sect.~3 is devoted to the proof of its local tri-Hamiltonian structure and the computation of the associated central invariants. In Sect.~4, we construct a Frobenius manifold arising from the dispersionless limit of this hierarchy. Sect.~5 summarizes our conclusions and outlines future directions.
		\section{The asymmetric $(3,1)$-type gAL hierarchy }
 In this section, we present Lax representation of $(3,1)$-type gAL hierarchy and derive its Hamiltonian formalism.
	\subsection{Lax representation of the $(3,1)$-type gAL hierarchy}
	For convenience, we introduce the operators corresponding to 	\eqref{Lax-operator} 
	\begin{equation}
		X=\varLambda+u_{1}+u_{2}\varLambda^{-1}+u_{3}\varLambda^{-2},~~Y=1+w\varLambda^{-1},
	\end{equation}
 and denote
	\begin{align}
		L=Y^{-1}X,\quad \widehat{L}=XY^{-1}.
	\end{align}
	
	To facilitate the definition of the positive flows of the $(3,1)$-type gAL hierarchy, we expand the operators $L$ and $\widehat{L}$ as 
		\begin{align}
		&L=\varLambda+(u_{1}-w)+(u_{2}-wu_{1}^{-}+ww^{-})\varLambda^{-1}+\cdots,\label{L1}\\[6pt]
		&\widehat{L}=\varLambda+(u_{1}-w^{+})+(u_{2}-u_{1}w+ww^{+})\varLambda^{-1}+\cdots,\label{Lh1}
	\end{align}
	where $f^\pm = \varLambda^{\pm1} f$, $f^{[i]} = \varLambda^i (f)$ for all $i \in \mathbb{Z}$.
Correspondingly, for the negative flows of the hierarchy, we expand $L$ and $\widehat{L}$ as
	\begin{align}
			&L=\,\frac{u_{3}^{+}}{w^{+}}\varLambda^{-1}+\left(\frac{u_{2}^{+}}{w^{+}}-\frac{u_{3}^{++}}{w^{+}w^{++}}\right)+\left(\frac{u_{1}^{+}}{w^{+}}-\frac{u_{2}^{++}}{w^{+}w^{++}}+\frac{u_{3}^{+++}}{w^{+}w^{++}w^{+++}}\right)\varLambda+\cdots,\label{L2} \\[6pt]
	&\widehat{L}=\,\frac{u_{3}}{w^{-}}\varLambda^{-1}+\left(\frac{u_{2}}{w}-\frac{u_{3}}{ww^{-}}\right)+\left(\frac{u_{1}}{w^{+}}-\frac{u_{2}}{ww^{+}}+\frac{u_{3}}{w^{-}ww^{+}}\right)\varLambda+\cdots.\label{Lh2}
	\end{align}
		For a formal Laurent difference operator $A=\sum_{j\in\mathbb{Z}}a_j\varLambda^j$, we define its positive and negative parts by
	\begin{equation*}
		A_+ = \sum_{j\geq 0} a_j \varLambda^j, \qquad A_- = \sum_{j< 0} a_j \varLambda^j,
	\end{equation*}
	so that $A = A_+ + A_-$.
	\begin{defn}
		The $(3,1)$-type gAL hierarchy consists of the following differential–difference equations:
			\begin{align}
			\veps\frac{\partial X}{\partial t^{4,m}}&=\,\frac{1}{(m+1)!}\left(\left(\widehat{L}^{m+1}\right)_{+}X-X\left(L^{m+1}\right)_{+}\right),\label{Lax eq1}\\ 
			\veps\frac{\partial Y}{\partial t^{4,m}}&=\,\frac{1}{(m+1)!}\left(\left(\widehat{L}^{m+1}\right)_{+}Y-Y\left(L^{m+1}\right)_{+}\right),\label{Lax eq2}\\
			\veps\frac{\partial X}{\partial  t^{2,m}}&=\,-\frac{1}{(m+1)!}\left(\left(\widehat{L}^{m+1}\right)_{-}X-X\left(L^{m+1}\right)_{-}\right),\label{Lax eq3}\\
			\veps\frac{\partial Y}{\partial  t^{2,m}}&=\,-\frac{1}{(m+1)!}\left(\left(\widehat{L}^{m+1}\right)_{-}Y-Y\left(L^{m+1}\right)_{-}\right)  \label{Lax eq4}
		\end{align}
for all $m \geq 0$.
Here, $\{t^{2,m}, t^{4,m}\}_{m\geq 0}$ denotes the family of time variables, and we refer to the flows $\frac{\partial}{\partial t^{2,m}}$ and $\frac{\partial}{\partial t^{4,m}}$ as the negative  and positive flows of the hierarchy, respectively. In particular, identifying the spatial variable $x$ with $t^{0,0}$, we have
\begin{equation}
	\frac{\partial}{\partial t^{0,0}} = \frac{\partial}{\partial x},
\end{equation}
and we write $t^{0,0}$ as $x$ in what follows.
\end{defn}
	
By direct calculation, equations \eqref{Lax eq1}--\eqref{Lax eq4} are equivalent to
	\begin{align}
		\veps\frac{\partial  L}{\partial t^{4,m}}
		&=\,\frac{1}{(m+1)!}\left[\left(L^{m+1}\right) _+, L\right],\quad m\geq 0,
		\\
		\veps\frac{\partial  L}{\partial t^{2,m}}
		&=\,
		-\frac{1}{(m+1)!}\left[\left(L^{m+1}\right)_-,L\right], \quad m\geq 0,
	\end{align}
	which are precisely the Lax equations of the $(3,1)$-type gAL hierarchy.

	Using \eqref{L1} and \eqref{Lh1}, we write the Lax operators $L$ and $\widehat{L}$ as
	\begin{align}\label{LmLhm1}
		L^{m}&=\sum_{r\geq 0}a^{m}_{r}\varLambda ^{m-r},
	\qquad
	\widehat{L}^{m}=\sum_{r\geq 0}b^{m}_{r}\varLambda ^{m-r},
	\end{align}
from which the positive flows of the hierarchy are generated by
	\begin{align}
		\veps\pfrac{u_1}{t^{4,m}}
		&= \frac{1}{(m+1)!} \biggl(
		u_{1}\bigl(b_{m+1}^{m+1}-a_{m+1}^{m+1}\bigr)
		+ u_{2}^{\vphantom{+}+}b_{m}^{m+1}
		- u_{2}\bigl(a_{m}^{m+1}\bigr)^{\vphantom{+}-} + u_{3}^{\vphantom{++}++}b_{m-1}^{m+1}
		\notag \\
		&\phantom{= \frac{1}{(m+1)!} \biggl(}
		- u_{3}\bigl(a_{m-1}^{m+1}\bigr)^{\vphantom{--}--}
		\biggr),
		\label{Du1t4m} \\
		\veps\pfrac{u_2}{t^{4,m}}
		&= \frac{1}{(m+1)!} \biggl(
		u_{2}b_{m+1}^{m+1}
		- u_{2}\bigl(a_{m+1}^{m+1}\bigr)^{\vphantom{+}-}
		+ u_{3}^{\vphantom{+}+}b_{m}^{m+1}
		- u_{3}\bigl(a_{m}^{m+1}\bigr)^{\vphantom{--}--}
		\biggr),
		\label{Du2t4m} \\
		\veps\pfrac{u_3}{t^{4,m}}
		&= \frac{1}{(m+1)!} \biggl(
		u_{3}b_{m+1}^{m+1}
		- u_{3}\bigl(a_{m+1}^{m+1}\bigr)^{\vphantom{--}--}
		\biggr),
		\label{Du3t4m} \\
		\veps\pfrac{w}{t^{4,m}}
		&= \frac{1}{(m+1)!} \biggl(
		wb_{m+1}^{m+1}
		- w\bigl(a_{m+1}^{m+1}\bigr)^{\vphantom{+}-}
		\biggr).
		\label{Dwt4m}
	\end{align}
Correspondingly, using \eqref{L2} and \eqref{Lh2}, we have
		\begin{align}\label{LmLhm2}
		L^{m}=\sum_{r\geq 0}c^{m}_{r}\varLambda ^{-m+r},\qquad
		\widehat{L}^{m}=\sum_{r\geq 0}d^{m}_{r}\varLambda ^{-m+r},
	\end{align}
from which the negative flows of the hierarchy take the following form
	\begin{align}
		\veps\pfrac{u_1}{t^{2,m}}
		&= -\frac{1}{(m+1)!} \biggl(
		d^{m+1}_{m} - (c^{m+1}_{m})^{\vphantom{+}+}
		\biggr),
		\label{Du1t2m} \\
		\veps\pfrac{u_2}{t^{2,m}}
		&= -\frac{1}{(m+1)!} \biggl(
		u_{1}^{\vphantom{+}-}d_{m}^{m+1} - u_{1}c^{m+1}_{m}
		+ d^{m+1}_{m-1} - (c^{m+1}_{m-1})^{\vphantom{+}+}
		\biggr),
		\label{Du2t2m} \\
		\veps\pfrac{u_3}{t^{2,m}}
		&= -\frac{1}{(m+1)!} \biggl(
		u_{1}^{\vphantom{--}--}d^{m+1}_{m-1} - u_{1}c^{m+1}_{m-1}
		+ u_{2}^{\vphantom{+}-}d^{m+1}_{m} - u_{2}(c^{m+1}_{m})^{\vphantom{+}-}+ d^{m+1}_{m-2}
		\notag \\
		&\phantom{= -\frac{1}{(m+1)!} \biggl(}
		 - (c^{m+1}_{m-2})^{\vphantom{+}+}
		\biggr),
		\label{Du3t2m} \\
		\veps\pfrac{w}{t^{2,m}}
		&= -\frac{1}{(m+1)!} \biggl(
		d_{m}^{m+1} - c^{m+1}_{m}
		\biggr).
		\label{Dwt2m}
	\end{align}
	Using the identity
	\begin{align}\label{XYLLh}
		XL^{m}&=\widehat{L}^{m}X=\widehat{L}^{m+1}Y=YL^{m+1},
	\end{align}
	the recurrence relations for the coefficients $a_{r}^{m}$, $b_{r}^{m}$ and $c_{r}^{m}$, $d_{r}^{m}$ are obtained via
	\begin{equation}\label{abrm}
		\begin{aligned}
			&(a_{r+1}^{m})^{+} + u_{1}a_{r}^{m} + u_{2}(a_{r-1}^{m})^{-} + u_{3}(a_{r-2}^{m})^{--} \\
			={}&b_{r+1}^{m} + b_{r}^{m}\varLambda^{m-r}(u_{1}) + b_{r-1}^{m}\varLambda^{m+1-r}(u_{2}) + b_{r-2}^{m}\varLambda^{m+2-r}(u_{3})~~~~~~~ \\
			={}&b_{r+1}^{m+1} + b_{r}^{m+1}\varLambda^{m+1-r}(w) \\
			={}&a_{r+1}^{m+1} + w(a_{r}^{m+1})^{-},
		\end{aligned}
	\end{equation}
	and
	\begin{equation}\label{cdrm}
		\begin{aligned}
			&(c_{r-1}^{m})^{+} + u_{1}c_{r}^{m} + u_{2}(c_{r+1}^{m})^{-} + u_{3}(c_{r+2}^{m})^{--} \\
			={}&d_{r-1}^{m} + d_{r}^{m}\varLambda^{-m+r}(u_{1}) + d_{r+1}^{m}\varLambda^{-m+1+r}(u_{2}) + d_{r+2}^{m}\varLambda^{-m+2+r}(u_{3}) \\
			={}&d_{r+1}^{m+1} + d_{r+2}^{m+1}\varLambda^{-m+1+r}(w) \\
			={}&c_{r+1}^{m+1} + w(c_{r+2}^{m+1})^{-}.
		\end{aligned}
	\end{equation}
With the initial conditions	
	\begin{align*}
		a_0^m &= b_0^m = 1, \quad a_r^0 = b_r^0 = 0, \quad m \geq 0,\ r \geq 1, \\
		c_0^0 &= d_0^0 = 1, \quad c_r^0 = d_r^0 = 0, \quad r \geq 1, \\
		c_0^m &= \prod_{i=0}^{m-1} \varLambda^{-i}\left( \frac{u_3^{+}}{w^{+}} \right), \quad d_0^m = \prod_{i=0}^{m-1} \varLambda^{-i}\left( \frac{u_3}{w^{-}} \right), \quad m \geq 1,
	\end{align*}
	all coefficients $a_r^m, b_r^m$ and $c_r^m, d_r^m$ are uniquely determined by the \eqref{abrm} and \eqref{cdrm}. The first examples read
	\begin{equation*}
		a_1^1 = u_1-w,\qquad b_1^1 = u_1-w^+, 
	\end{equation*}
	and
	\begin{equation*}
		c_1^1 = \frac{u_{2}^{+}}{w^+} - \frac{u_3^{++}}{w^+w^{++}},\qquad d_1^1 = \frac{u_2}{w} - \frac{u_3}{ww^-}.
	\end{equation*}
	
	\begin{ex}
		Using \eqref{Du1t4m}--\eqref{Dwt2m} and by straightforward calculation, we obtain the first positive flows of the $(3,1)$-type gAL hierarchy
		\begin{alignat*}{2}
			\veps\pfrac{u_1}{t^{4,0}}&=\,
			u_{1}(w-w^{+})+u^{+}_{2}-u_{2},\quad &
			\veps\pfrac{u_2}{t^{4,0}}&=\,	u_{2}(u_{1}-u_{1}^{-}+w^{-}-w^{+})+u^{+}_{3}-u_{3},\\
			\veps\pfrac{u_3}{t^{4,0}}&=\,u_{3}(u_{1}-u_{1}^{--}+w^{--}-w^{+}),\quad &
			\veps\pfrac{w}{t^{4,0}}
			&=\,w(u_{1}-u_{1}^{-}+w^{-}-w^{+}),
		\end{alignat*}
		along with the corresponding first negative flows
\begin{alignat*}{2}
	\varepsilon \pfrac{u_1}{t^{2,0}} &= \frac{u_3^{++}}{w^{++}} - \frac{u_3}{w^-}, \qquad &
	\varepsilon \pfrac{u_2}{t^{2,0}} &= \frac{u_1 u_3^+}{w^+} - \frac{u_1^- u_3}{w^-}, \\[4pt]
	\varepsilon \pfrac{u_3}{t^{2,0}} &= \frac{u_2 u_3}{w} - \frac{u_2^- u_3}{w^-}, \qquad &
	\varepsilon \pfrac{w}{t^{2,0}} &= \frac{u_3^+}{w^+} - \frac{u_3}{w^-}.
\end{alignat*}
	\end{ex}

			\subsection{Hamiltonian formalism of the $(3,1)$-type gAL hierarchy}
To derive the Hamiltonian formalism of the $(3,1)$-type gAL hierarchy, we begin by defining the corresponding Hamiltonians
\begin{equation*}
	H_{4,m} = \int h_{4,m+1} \, dx, \quad H_{2,m} = \int h_{2,m+1} \, dx, \quad m \geq -1,
\end{equation*}
where the Hamiltonian densities are defined as
\begin{align*}
	h_{4,m} = \frac{1}{(m+2)!} \operatorname{Res} L^{m+2}, \quad 
	h_{2,m} = -\frac{1}{(m+2)!} \operatorname{Res} L^{m+2}, \quad m \geq 0.
\end{align*}
	Here, we define the residue of an operator $A=\sum_{j\in\mathbb{Z}}a_j\varLambda^j$ by $\operatorname{Res} A = a_0$, and introduce the variational derivatives of a functional $H = \int h\,\mathrm{d}x$ as
	\begin{equation*}
		\frac{\delta H}{\delta u_i(n)} = \sum_{r\in\mathbb{Z}} \varLambda^{-r} \frac{\partial h(n)}{\partial u_i(n+r)},
		\quad
		\frac{\delta H}{\delta w(n)} = \sum_{r\in\mathbb{Z}} \varLambda^{-r} \frac{\partial h(n)}{\partial w(n+r)},\quad i=1,2,3.
	\end{equation*}
 In what follows, we restrict the Hamiltonian formalism to the $\pp{t^{4,m}}$-flows; the derivation for the $\pp{t^{2,m}}$-flows is completely analogous and thus omitted. We first introduce some lemmas.
\begin{lem}\label{var-derivative}
	For every integer $m \geq 0$, the variational derivatives of the Hamiltonian $H_{4,m}$ with respect to the functions $u_1, u_2, u_3, w$ satisfy the following identities:
\begin{alignat*}{2}
	\frac{\delta H_{4,m}}{\delta u_1} &= \frac{1}{(m+1)!} k_{m+1}^{m+1}, \qquad &
	\frac{\delta H_{4,m}}{\delta u_2} &= \frac{1}{(m+1)!} (k_{m}^{m+1})^-, \\[4pt]
	\frac{\delta H_{4,m}}{\delta u_3} &= \frac{1}{(m+1)!} (k_{m-1}^{m+1})^{--}, \qquad &
	\frac{\delta H_{4,m}}{\delta w} &= -\frac{1}{(m+1)!} (k_{m+1}^{m+2})^-
\end{alignat*}
		with the coefficients $k_r^m$ given by
		\begin{equation}\label{kmr}
			L^m Y^{-1} = \sum_{r\in\mathbb{Z}} k_r^m \varLambda^{m-r}.
		\end{equation}
\end{lem}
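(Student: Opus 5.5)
We compute the first variation of $\operatorname{Res} L^{m+2}$ and use the standard trace identity for difference operators: $\int \operatorname{Res}(AB)\,dx = \int \operatorname{Res}(BA)\,dx$ for Laurent difference operators $A,B$. With $h_{4,m+1} = \frac{1}{(m+3)!}\operatorname{Res} L^{m+3}$, we get
\begin{equation*}
\delta H_{4,m} = \frac{1}{(m+3)!}\int \operatorname{Res}\,\delta(L^{m+3})\,dx = \frac{1}{(m+2)!}\int \operatorname{Res}\bigl(L^{m+2}\,\delta L\bigr)\,dx .
\end{equation*}
Since $L = Y^{-1}X$, we have $\delta L = Y^{-1}\delta X - Y^{-1}\,\delta Y\, Y^{-1}X$. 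Cycling under the trace then gives
\begin{equation*}
\delta H_{4,m} = \frac{1}{(m+2)!}\int \operatorname{Res}\bigl(L^{m+2}Y^{-1}\delta X\bigr)\,dx \;-\; \frac{1}{(m+2)!}\int \operatorname{Res}\bigl(L^{m+3}Y^{-1}\delta Y\bigr)\,dx .
\end{equation*}
Here we used $Y^{-1}XL^{m+2}Y^{-1} = L^{m+3}Y^{-1}$.

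Next we insert the expansions. We write $\delta X = \delta u_1 + \delta u_2\varLambda^{-1} + \delta u_3\varLambda^{-2}$ and $\delta Y = \delta w\,\varLambda^{-1}$, and use \eqref{kmr} for $L^{m+2}Y^{-1}$ and $L^{m+3}Y^{-1}$. The residue of $k^{n}_r\varLambda^{n-r}\,f\,\varLambda^{-j}$ is nonzero only when $r = n-j$, in which case it equals $k^{n}_{n-j}\,f^{[j]}$. Integrating, $\int k\, f^{[j]}\,dx = \int k^{[-j]} f\,dx$, which is where the shifts $^-$ and $^{--}$ in the statement come from. This yields
\begin{equation*}
\frac{\delta H_{4,m}}{\delta u_1} = \frac{k^{m+2}_{m+2}}{(m+2)!},\quad
\frac{\delta H_{4,m}}{\delta u_2} = \frac{(k^{m+2}_{m+1})^-}{(m+2)!},\quad
\frac{\delta H_{4,m}}{\delta u_3} = \frac{(k^{m+2}_{m})^{--}}{(m+2)!},\quad
\frac{\delta H_{4,m}}{\delta w} = -\frac{(k^{m+3}_{m+2})^-}{(m+2)!}.
\end{equation*}

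These formulas are those of the statement with the index $m$ replaced by $m+1$, so the remaining issue is the indexing convention for $H_{4,m}$. I expect this to be the only delicate point, together with checking that the cyclic-trace manipulation is legitimate. The integrand $\operatorname{Res}(L^{m+2}\delta L)$ involves only finitely many terms, because $L^{m+2}$ is bounded above in degree and $\delta L$ is bounded above too. For the residue we only need finitely many coefficients of $Y^{-1}$ expanded in $\varLambda^{-1}$, so the trace property holds term by term.

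I would rederive the normalization carefully, i.e.\ check whether $h_{4,m}$ is meant to use $L^{m+1}$ with prefactor $\frac{1}{(m+1)!}$. With $H_{4,m} = \frac{1}{(m+2)!}\int\operatorname{Res}L^{m+2}\,dx$, the computation above with $m+2$ in place of $m+3$ gives exactly the stated formulas: $\frac{1}{(m+1)!}k^{m+1}_{m+1}$, $\frac{1}{(m+1)!}(k^{m+1}_m)^-$, $\frac{1}{(m+1)!}(k^{m+1}_{m-1})^{--}$, and $-\frac{1}{(m+1)!}(k^{m+2}_{m+1})^-$. So I would present the proof with this normalization, which the statement's indices fix. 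The shift bookkeeping I would verify on $m=0$ against $a^1_1=u_1-w$, by checking that $\delta H_{4,0}/\delta u_1 = k^1_1 = 1$ matches $\partial_{u_1}\operatorname{Res}L^2$ up to total shifts.
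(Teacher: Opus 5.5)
Your proposal is correct and follows the paper's proof: you vary the residue, write $\delta L = Y^{-1}(\delta X - \delta Y\,L)$, move factors cyclically under $\int\operatorname{Res}$ so that $L^{m+1}Y^{-1}$ and $L^{m+2}Y^{-1}$ appear, and read off coefficients from the expansion of $L^{n}Y^{-1}$. Your normalization $H_{4,m}=\frac{1}{(m+2)!}\int\operatorname{Res}L^{m+2}\,\mathrm{d}x$ is exactly the one the paper's proof uses implicitly when it writes $\delta h_{4,m}\sim\frac{1}{(m+1)!}\operatorname{Res}(L^{m+1}\delta L)$; one slip in your sanity check is that $k^1_1=u_1-w-w^{+}$ (it is $k^1_0$ that equals $1$), and this value does agree with $\frac{\delta}{\delta u_1}\frac12\int\operatorname{Res}L^2\,\mathrm{d}x$.
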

\begin{proof}
	Let us first define the space of variational 1-forms by 
	\begin{equation*}
		\omega = \sum_{r\in \mathbb{Z}} \left( f_r \delta u_1^{(r)} + g_r \delta u_2^{(r)} + p_r \delta u_3^{(r)} + q_r \delta w^{(r)} \right),
	\end{equation*}
	where $f_r$, $g_r$, $p_r$ and $q_r$ are smooth functions of $u_i^{(k)}(n) = u_i(n+k)$ for $i=1,2,3$ and $w^{(k)}(n) = w(n+k)$ $(k\in \mathbb{Z})$. We next equip this space with an equivalence relation $\sim$: two 1-forms $\omega_1$ and $\omega_2$ are equivalent if and only if 
 $\omega_1 - \omega_2 = (\varLambda - 1)\omega'$
	for some variational 1-form $\omega'$. Under this equivalence, one has
		\begin{align*}
			\delta h_{4,m} &=\, \sum_{r\in \mathbb{Z}} \left( \frac{\partial h_{4,m}}{\partial u_1^{(r)}} \delta u_1^{(r)} + \frac{\partial h_{4,m}}{\partial u_2^{(r)}} \delta u_2^{(r)} + \frac{\partial h_{4,m}}{\partial u_3^{(r)}} \delta u_3^{(r)} + \frac{\partial h_{4,m}}{\partial w^{(r)}} \delta w^{(r)} \right) \\[2pt]		&\sim\, \frac{\delta H_{4,m}}{\delta u_1} \delta u_1 + \frac{\delta H_{4,m}}{\delta u_2} \delta u_2 + \frac{\delta H_{4,m}}{\delta u_3} \delta u_3 + \frac{\delta H_{4,m}}{\delta w} \delta w.
		\end{align*}
	Moreover, from the definition of $h_{4,m}$, we deduce
	\begin{align*}
		\delta h_{4,m} &\sim \frac{1}{(m+1)!} \operatorname{Res} \left( L^{m+1} \delta L \right)\\&=\frac{1}{(m+1)!} \operatorname{Res} \left( L^{m+1} Y^{-1} \left( \delta u_1 + \delta u_2 \varLambda^{-1} + \delta u_3 \varLambda^{-2} - \delta w \varLambda^{-1} L \right) \right) \\
		&\sim \frac{1}{(m+1)!} \operatorname{Res} \big( \delta u_1 L^{m+1} Y^{-1} + \delta u_2 \varLambda^{-1} L^{m+1} Y^{-1} + \delta u_3 \varLambda^{-2} L^{m+1} Y^{-1} \\
		&\qquad \qquad \qquad \quad- \delta w \varLambda^{-1} L^{m+2} Y^{-1} \big).
	\end{align*}
	Combining the above expressions and matching coefficients, we arrive at the variational derivative relation
	\begin{align*}
		\frac{\delta H_{4,m}}{\delta u_1} &=\frac{1}{(m+1)!} \operatorname{Res} \left( L^{m+1} Y^{-1} \right) =\frac{1}{(m+1)!} k_{m+1}^{m+1}.
	\end{align*}
	We can derive $\frac{\delta H_{4,m}}{\delta u_2},\frac{\delta H_{4,m}}{\delta u_3}$ and $\frac{\delta H_{4,m}}{\delta w}$ in a similar way. This completes the proof of the Lemma~\ref{var-derivative}.	
\end{proof}
\begin{lem}\label{abk}
	For every integer $m \geq 0$, the following relations hold true:
\begin{equation}
	\begin{aligned}
		a_{r}^{m}&=k_{r}^{m}+k_{r-1}^{m} w^{[m+1-r]},\\ b_{r}^{m}&=k_{r}^{m}+w(k_{r-1}^{m})^{-},\\
		a_{r}^{m+1}&=k_{r}^{m}+k_{r-1}^{m}u_{1}^{[m+1-r]}+k_{r-2}^{m}u_{2}^{[m+2-r]}+k_{r-3}^{m}u_{3}^{[m+3-r]},\\
		b_{r}^{m+1}&=(k_{r}^{m})^{+}+u_{1}k_{r-1}^{m}+u_{2}(k_{r-2}^{m})^{-}+u_{3}(k_{r-3}^{m})^{--}.
	\end{aligned}
\end{equation}
\end{lem}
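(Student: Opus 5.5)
The plan is to obtain all four relations by writing $L^m$, $\widehat{L}^m$ and $L^{m+1}$, $\widehat{L}^{m+1}$ as products of $L^mY^{-1}=\sum_r k_r^m\varLambda^{m-r}$ with $X$ or $Y$ on the left or right, and then comparing coefficients of $\varLambda^{m-r}$ (respectively $\varLambda^{m+1-r}$). The required factorizations come directly from $L=Y^{-1}X$ and $\widehat{L}=XY^{-1}$, together with identity \eqref{XYLLh}.

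\emph{Relations for $a_r^m$.} Write $L^m=(L^mY^{-1})Y$. The factor $Y=1+w\varLambda^{-1}$ multiplies on the right, so we use $\varLambda^{m-r+1}\,w\,\varLambda^{-1}=w^{[m+1-r]}\varLambda^{m-r}$. Then
\[
L^m=\sum_r k_r^m\varLambda^{m-r}+\sum_r k_{r-1}^m w^{[m+1-r]}\varLambda^{m-r},
\]
which gives $a_r^m=k_r^m+k_{r-1}^m w^{[m+1-r]}$.

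\emph{Relations for $a_r^{m+1}$.} Write $L^{m+1}=L^mY^{-1}X$ and expand $X=\varLambda+u_1+u_2\varLambda^{-1}+u_3\varLambda^{-2}$ on the right. Shifting indices so that every term lands on $\varLambda^{m+1-r}$ produces $k_{r-j}^m u_j^{[m+j-r+1]}$ with $u_0:=1$. This is the stated formula.

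\emph{Relations for $b_r^m$.} I use $\widehat{L}^m=YL^mY^{-1}$, which holds because $\widehat{L}^mY=YL^m$; this follows from $\widehat{L}^{m}X=YL^{m+1}$ in \eqref{XYLLh}, or directly from $XY^{-1}Y=YY^{-1}X$ iterated. Then
\[
\widehat{L}^m=Y\sum_r k_r^m\varLambda^{m-r}.
\]
Here $Y$ acts on the left, so $w\varLambda^{-1}k_{r-1}^m\varLambda^{m-r+1}=w(k_{r-1}^m)^-\varLambda^{m-r}$, which gives $b_r^m=k_r^m+w(k_{r-1}^m)^-$.

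\emph{Relations for $b_r^{m+1}$.} Similarly, $\widehat{L}^{m+1}=XY^{-1}\widehat{L}^m=X L^mY^{-1}$, using $Y^{-1}\widehat{L}^m=L^mY^{-1}$. Multiplying $X$ on the left gives $(k_r^m)^+ + u_1k_{r-1}^m+u_2(k_{r-2}^m)^-+u_3(k_{r-3}^m)^{--}$.

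The only delicate point is bookkeeping, not any conceptual step. First, $L^mY^{-1}$ must be expanded in the same direction (powers $\varLambda^{m-r}$, $r\ge 0$) as in \eqref{LmLhm1}, so that $k_r^m=0$ for $r<0$ and the coefficient comparison is legitimate. Second, the shift exponents must be tracked carefully when moving $\varLambda$ past a function. I would check the formulas against $r=0,1$, $m=0$, using $k_0^0=1$, $k_1^0=-w$ and $a_1^1=u_1-w$.
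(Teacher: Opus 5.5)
Your proof is correct and takes the paper's approach: the paper only asserts that the relations follow from \eqref{LmLhm1}, \eqref{kmr} and \eqref{XYLLh}, and your coefficient comparisons in $L^m=(L^mY^{-1})Y$, $\widehat{L}^m=YL^mY^{-1}$, $L^{m+1}=L^mY^{-1}X$ and $\widehat{L}^{m+1}=XL^mY^{-1}$ carry out exactly that argument. One index slip needs fixing: the general term of $L^mY^{-1}X$ is $k_{r-j}^m u_j^{[m+j-r]}$, not $k_{r-j}^m u_j^{[m+j-r+1]}$; as you wrote it, it would give $a_1^1=u_1^{+}-w$, and the check $a_1^1=u_1-w$ that you propose would catch this.
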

\noindent These relations follow directly from the notations \eqref{LmLhm1}, \eqref{kmr} and the identity \eqref{XYLLh}.

For later convenience, we introduce the difference operators
\begin{alignat*}{3}
	\varDelta &= \varLambda-1, \quad &	\varGamma &= \varLambda+1, \quad &
	\varPhi &= \varLambda^2+\varLambda+1,
	\\
	\varTheta &=\varLambda^{-1}-1,\quad &	\varOmega &= \varLambda^{-1}+1, \quad &
	\varPsi &= \varLambda^{-2}+\varLambda^{-1}+1.
\end{alignat*}
\begin{thm}\label{thm:HamFormalism-1}
	The $(3,1)$-type gAL hierarchy \eqref{Lax eq1}--\eqref{Lax eq4}
	can be represented by the following Hamiltonian formalism:
	\begin{equation*}
		\veps\left(\pfrac{u_1}{t^{i,m}}, \pfrac{u_2}{t^{i,m}}, \pfrac{u_3}{t^{i,m}}, \pfrac{w}{t^{i,m}}\right)^\mathrm{T}=\mathcal P_{1}\left(\frac{\delta{H_{i,m}}}{\delta{u_{1}}},\frac{\delta{H_{i,m}}}{\delta{u_{2}}},\frac{\delta{H_{i,m}}}{\delta{u_{3}}},\frac{\delta{H_{i,m}}}{\delta{w}} \right)^\mathrm{T}
	\end{equation*}
	for $i\in\{2,4\}$, $m\geq 0$, where
	\begin{equation}\label{HamP1}
		\mathcal P_{1}=
		\begin{pmatrix}
			\varLambda w-w\varLambda^{-1}& \varDelta u_{2} &\varDelta\varGamma u_{3} &\varDelta w\\[6pt]
			-u_{2}\varTheta&\varLambda u_{3}-u_{3}\varLambda^{-1}& w\varDelta u_{3} &0\\[6pt]
			-u_{3}\varTheta\varOmega&-u_{3}\varTheta w & 0 &0\\[6pt]
			-w\varTheta&0 & 0 &0
		\end{pmatrix}.
	\end{equation}
\end{thm}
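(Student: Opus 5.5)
We verify the claim for the positive flows $t^{4,m}$; the negative flows $t^{2,m}$ are handled the same way, using the expansions \eqref{L2}, \eqref{Lh2}, the coefficients $c^m_r,d^m_r$ and the analogous coefficients of $L^mY^{-1}$ in $\varLambda^{-1}$. Fix $m\geq 0$ and write $n=m+1$. By Lemma~\ref{var-derivative}, up to the common factor $\frac1{n!}$, the gradient vector is
\begin{equation*}
\bigl(k^n_n,\ (k^n_{n-1})^-,\ (k^n_{n-2})^{--},\ -(k^{n+1}_{n})^-\bigr)^{\mathrm T}.
\end{equation*}
Multiplying by the rows of \eqref{HamP1} then expresses each component of $\mathcal P_1\nabla H_{4,m}$ as an explicit shift-combination of $k^n_n,k^n_{n-1},k^n_{n-2},k^{n+1}_n$ and the fields. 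We compare these row by row with the right-hand sides of \eqref{Du1t4m}--\eqref{Dwt4m}. Using Lemma~\ref{abk}, those right-hand sides can be written entirely in terms of the $k$'s.

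We start with the simplest rows. For the fourth row we need
\begin{equation*}
-w\varTheta k^n_n=w\bigl(k^n_n-(k^n_n)^-\bigr).
\end{equation*}
From \eqref{Dwt4m} and the second and fourth lines of Lemma~\ref{abk},
\begin{equation*}
b^{n}_{n}-(a^{n}_{n})^-=k^n_n+w(k^n_{n-1})^- -(k^n_n)^- -(k^n_{n-1})^-w,
\end{equation*}
since $w^{[n+1-n]}$ shifted back is $w$. So the fourth row holds at once.

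The third row, $-u_3\varTheta\varOmega k^n_n+u_3\varTheta w\,(k^n_{n-1})^-$, should equal $u_3\bigl(b^n_n-(a^n_n)^{--}\bigr)$. Expanding $(a^n_n)^{--}$ gives $(k^n_n)^{--}+(k^n_{n-1})^{--}w^-$, and the match is again immediate.

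The second row, $-u_2\varTheta k^n_n+(\varLambda u_3-u_3\varLambda^{-1})(k^n_{n-1})^- +w\varDelta u_3 (k^n_{n-2})^{--}$, must be compared with \eqref{Du2t4m}. Here we use both expressions for $b^n_{r}$, namely $k^n_r+w(k^n_{r-1})^-$ and the $u$-form obtained from $b^{m+1}_r$ via $k^{m}$. The key consistency relations among the $k$'s come from $L^nY^{-1}=Y^{-1}XL^{n-1}Y^{-1}$, i.e. $YL^nY^{-1}=XL^{n-1}Y^{-1}$, which gives the recursion
\begin{equation*}
k^n_r+w(k^n_{r-1})^-=(k^{n-1}_r)^++u_1k^{n-1}_{r-1}+u_2(k^{n-1}_{r-2})^-+u_3(k^{n-1}_{r-3})^{--}.
\end{equation*}
Equating the two expressions for $b^{n}_r$ in Lemma~\ref{abk} is exactly this identity.

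The first row is the main obstacle. It involves $\varLambda w-w\varLambda^{-1}$ acting on $k^n_n$, the $u_2,u_3$ terms acting on the lower $k$'s, and $\varDelta w$ acting on $-(k^{n+1}_n)^-$. So it couples $k^{n+1}$ with $k^{n}$, while \eqref{Du1t4m} involves $b^{n}_{n},b^n_{n-1},b^n_{n-2}$ and the $a$'s. To eliminate $k^{n+1}_n$ we will use the right-multiplied relation $L^{n+1}Y^{-1}=L^nY^{-1}\cdot XY^{-1}$, or equivalently $L^{n+1}Y^{-1}Y=L^nY^{-1}X$ together with the first and third formulas of Lemma~\ref{abk}, which expresses $k^{n+1}_n+k^{n+1}_{n-1}w^{[\cdot]}$ through $k^n_{n},\dots,k^n_{n-3}$ and the $u_i$. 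After this substitution, the first row reduces to telescoping terms. We expect the remaining discrepancy to be a total difference $\varDelta(\cdot)$ that vanishes by the residue identity
\begin{equation*}
\operatorname{Res}[L^{n},Y^{-1}\cdots]=\varDelta(\cdot),
\end{equation*}
that is, the standard fact that $\operatorname{Res}$ of a commutator is a total difference. Handling this bookkeeping carefully is the core computation. As a check, we will first verify the case $m=0$ against the explicit first flows in the Example.
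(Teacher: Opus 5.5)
\textbf{Rows 2--4.} These are fine and follow the paper: substitute the first two formulas of Lemma~\ref{abk} into \eqref{Du2t4m}--\eqref{Dwt4m}. The left recursion you quote is not actually needed for row 2. There is one sign slip. The $(3,2)$ entry of $\mathcal P_1$ is $-u_3\varTheta w$, so the third row reads $-u_3\varTheta\varOmega k^n_n-u_3\varTheta w\,(k^n_{n-1})^-$. With your $+$ sign the ``immediate'' match fails.

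\textbf{The gap is in row 1.} After substituting Lemma~\ref{abk}, \eqref{Du1t4m} contains the term $(w\varLambda^2u_3-u_3\varLambda w)(k^n_{n-3})^{---}$, which is not a gradient component. What must be proved is the pointwise identity
\begin{equation*}
\varDelta w\,(k^{n+1}_n)^-=(\varLambda w-w\varLambda^{-1})k^n_n+u_1\varDelta w\,(k^n_{n-1})^-+(u_2\varLambda w-w\varLambda u_2)(k^n_{n-2})^{--}-(w\varLambda^2u_3-u_3\varLambda w)(k^n_{n-3})^{---}.
\end{equation*}
Your plan fails here for two reasons.
\begin{enumerate}
\item The right-multiplied relation alone (the $a^{n+1}_n$ identity) only controls $k^{n+1}_n+k^{n+1}_{n-1}w^{++}$. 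It therefore introduces a new unknown $k^{n+1}_{n-1}$, which your plan never eliminates.
\item Your fallback does not work. You argue that the leftover is a total difference and vanishes because the residue of a commutator is a total difference. But the theorem is an equality of functions: $\veps\,\partial u_1/\partial t^{4,m}$ against the first component of $\mathcal P_1\,\delta H_{4,m}$. It is not an equality of local functionals, so a remainder of the form $\varDelta(\cdot)$ does not vanish. The residue-of-commutator fact only helps under $\int$.
\end{enumerate}

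\textbf{The missing step.} Use both recursions at level $n+1$, $r=n$.
\begin{enumerate}
\item Take your left-multiplied relation (the $b^{n+1}_n$ identity),
\[
k^{n+1}_n+w(k^{n+1}_{n-1})^-=(k^n_n)^++u_1k^n_{n-1}+u_2(k^n_{n-2})^-+u_3(k^n_{n-3})^{--},
\]
and multiply it by $w^+$.
\item Take the $\varLambda^{-1}$-shift of the $a^{n+1}_n$ identity
\[
k^{n+1}_n+k^{n+1}_{n-1}w^{++}=k^n_n+k^n_{n-1}u_1^{+}+k^n_{n-2}u_2^{++}+k^n_{n-3}u_3^{+++},
\]
multiply it by $w$, and subtract it from the first.
\end{enumerate}
The $k^{n+1}_{n-1}$ terms cancel exactly, since $w^+w(k^{n+1}_{n-1})^--ww^+(k^{n+1}_{n-1})^-=0$. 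What remains is precisely the displayed identity. Substituting it into the first row gives an exact match with no remainder. This is the paper's step of ``considering the coefficients $a^{m+2}_{m+1}$ and $b^{m+2}_{m+1}$''.
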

\begin{proof}
	Using Lemmas \ref{var-derivative} and \ref{abk},
	equations \eqref{Du1t4m}--\eqref{Dwt4m} can be represented as follows:
	\begin{align*}
		\veps \pfrac{u_1}{t^{4,m}}
		&= \frac{1}{(m+1)!} \biggl(\Bigl(w\varLambda^{2}u_{3}-u_{3}\varLambda w\Bigr)(k^{m+1}_{m-2})^{---}+
		\Bigl(\varDelta u_{2}-u_{1}\varDelta w\Bigr)(k^{m+1}_{m})^{-} \nonumber\\
		&\qquad \qquad \qquad + \Bigl(w\varLambda u_{2}-u_{2}\varLambda w+\varDelta\varGamma u_{3}\Bigr)(k^{m+1}_{m-1})^{--}
		\biggr),\nonumber\\
		\veps \pfrac{u_2}{t^{4,m}}
		&= \frac{1}{(m+1)!} \biggl(
		-u_{2}\varTheta k^{m+1}_{m+1}
		+ \bigl(\varLambda u_{3}-u_{3}\varLambda^{-1}\bigr)(k^{m+1}_{m})^{-}
		+ w\varDelta u_3(k^{m+1}_{m-1})^{--}
		\biggr) \nonumber\\
		&= -u_{2}\varTheta\frac{\delta H_{4,m}}{\delta u_1}
		+ \bigl(\varLambda u_{3}-u_{3}\varLambda^{-1}\bigr)\frac{\delta H_{4,m}}{\delta u_2}
		+ w\varDelta u_3\frac{\delta H_{4,m}}{\delta u_3}, \nonumber\\[6pt]
		\veps \pfrac{u_3}{t^{4,m}}
		&= \frac{1}{(m+1)!} \biggl(
		-u_{3}\varTheta\varOmega k^{m+1}_{m+1}
		- u_{3}\varTheta w(k^{m+1}_{m})^{-}
		\biggr)\\& = -u_{3}\varTheta\varOmega\frac{\delta H_{4,m}}{\delta u_1}
		- u_{3}\varTheta w\frac{\delta H_{4,m}}{\delta u_2}, \nonumber\\[6pt]
		\veps \pfrac{w}{t^{4,m}}
		&= \frac{1}{(m+1)!} \biggl( -w\varTheta k^{m+1}_{m+1} \biggr) = -w\varTheta \frac{\delta H_{4,m}}{\delta u_1}.
	\end{align*}
	Using Lemma \ref{abk}, we consider the coefficients $a^{m+2}_{m+1}$ and $b^{m+2}_{m+1}$, from which we obtain
	\begin{align*}
		\left(w\varLambda^{2}u_{3}-u_{3}\varLambda w\right)(k^{m+1}_{m-2})^{---}=&\left(\varLambda w-w\varLambda^{-1}\right)k^{m+1}_{m+1}+u_{1}\varDelta w(k^{m+1}_{m})^{-}\\&+\left(u_2\varLambda w-w\varLambda u_2\right)(k^{m+1}_{m-1})^{--}-\varDelta w(k^{m+2}_{m+1})^{-}.
	\end{align*}
	Substituting this identity into the $\veps \pfrac{u_1}{t^{4,m}}$-flow, this completes the proof of Theorem~\ref{thm:HamFormalism-1}.
\end{proof}

\begin{thm}\label{thm:HamFormalism-2}
	The $(3,1)$-type gAL hierarchy \eqref{Lax eq1}--\eqref{Lax eq4}
	can also be represented by the following Hamiltonian formalism:
	\begin{equation*}
		\veps\left(\pfrac{u_1}{t^{i,m+1}}, \pfrac{u_2}{t^{i,m+1}}, \pfrac{u_3}{t^{i,m+1}}, \pfrac{w}{t^{i,m+1}}\right)^\mathrm{T}=c_2^{i,m}\mathcal P_{2}\left(\frac{\delta{H_{i,m}}}{\delta{u_{1}}},\frac{\delta{H_{i,m}}}{\delta{u_{2}}},\frac{\delta{H_{i,m}}}{\delta{u_{3}}},\frac{\delta{H_{i,m}}}{\delta{w}} \right)^\mathrm{T}
	\end{equation*}
	with $c_2^{i,m}=\frac{1}{m+2}$ for $i=2,4$ and $m\geq -1$. The skew-adjoint
	operator-valued matrix $\mathcal{P}_2$ takes the form
	\begin{equation}\label{HamP2}
		\mathcal{P}_{2}=
		\begin{pmatrix}
			\varLambda u_{2} - u_{2}\varLambda^{-1}
			& \mathcal{P}_2^{12} & u_{1}\varDelta\varGamma u_{3} & u_{1}\varDelta w\\[8pt]
			-\left(\mathcal{P}_2^{12}\right)^\dagger &\mathcal{P}_2^{22} & u_{2}\varDelta \varOmega\varGamma u_{3} & u_{2}\varDelta \varOmega w\\[8pt]
			-u_{3}\varTheta\varOmega u_{1} & u_{3}\varDelta \varOmega^2 u_{2} & u_{3}\varDelta \varPsi \varGamma u_{3} & u_{3}\varDelta \varPsi w\\[8pt]
			-w\varTheta u_{1} & w\varDelta \varOmega u_{2} & w\varTheta \varPhi u_{3} & w\varDelta \varOmega w
		\end{pmatrix},
	\end{equation}
where \begin{equation*}
	\mathcal{P}_2^{12} = \varLambda^{2} u_{3} - u_{3}\varLambda^{-1} + u_{1}\varDelta u_{2},\quad 
	\mathcal{P}_2^{22} = u_{1}\varLambda u_{3} - u_{3}\varLambda^{-1}u_{1} + u_{2} \varDelta \varOmega u_{2}.
\end{equation*}
\end{thm}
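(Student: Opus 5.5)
We follow the proof of Theorem~\ref{thm:HamFormalism-1} for the positive flows $i=4$; the case $i=2$ is analogous, with \eqref{LmLhm2} and \eqref{cdrm} replacing \eqref{LmLhm1} and \eqref{abrm}. The strategy is to write both sides of the claimed identity in terms of the coefficients $k^{m+1}_r$ of $L^{m+1}Y^{-1}$. Then we check each row using the recurrences of Lemma~\ref{abk}, in the same way as the earlier theorem.

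\emph{Left-hand side.} Replacing $m$ by $m+1$ in \eqref{Du1t4m}--\eqref{Dwt4m}, we get $\veps\,\partial_{t^{4,m+1}}$ of $u_1,u_2,u_3,w$ as $\frac{1}{(m+2)!}$ times combinations of $a^{m+2}_r$ and $b^{m+2}_r$ for $r=m,m+1,m+2$. By the last two relations of Lemma~\ref{abk} with $m\to m+1$, we have
\[
a^{m+2}_r=k^{m+1}_r+k^{m+1}_{r-1}u_1^{[m+2-r]}+k^{m+1}_{r-2}u_2^{[m+3-r]}+k^{m+1}_{r-3}u_3^{[m+4-r]},
\]
and similarly for $b^{m+2}_r$. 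So every flow becomes a linear expression in shifts of $k^{m+1}_{m+2},\dots,k^{m+1}_{m-2}$, with coefficients polynomial in $u_1,u_2,u_3,w$ and their shifts.

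\emph{Right-hand side.} By Lemma~\ref{var-derivative}, the gradient $\delta H_{4,m}$ equals $\frac{1}{(m+1)!}\bigl(k^{m+1}_{m+1},(k^{m+1}_m)^-,(k^{m+1}_{m-1})^{--},-(k^{m+2}_{m+1})^-\bigr)$. Since $c_2^{4,m}=\frac{1}{m+2}$, the prefactors $\frac{1}{(m+2)!}$ on both sides match. It remains to compare $\mathcal P_2$ applied to this vector with the expressions from the left-hand side.

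The $w$-component contains $k^{m+2}_{m+1}$. We eliminate it with the first two relations of Lemma~\ref{abk} at level $m+2$, combined with the $a^{m+2}$ and $b^{m+2}$ expansions. This gives a formula for $\varDelta w\,(k^{m+2}_{m+1})^-$ in terms of $k^{m+1}_r$, exactly as in the displayed identity in the proof of Theorem~\ref{thm:HamFormalism-1}. The identity $XL^{m+1}=YL^{m+2}$ from \eqref{XYLLh} gives
\[
k^{m+2}_r+w^{[m+3-r]}k^{m+2}_{r-1}=k^{m+1}_r+\dots ,
\]
which lets us trade $k^{m+2}$ for $k^{m+1}$.

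\emph{Row-by-row check.} After the substitution, we verify the four rows separately, in the order $w,u_3,u_2,u_1$. The lower rows involve fewer terms and serve as a consistency check on the shift bookkeeping. The main obstacle is the $u_1$- and $u_2$-rows, which contain the entries $\mathcal P_2^{12}$ and $\mathcal P_2^{22}$. There, terms like $k^{m+1}_{m+2}$ and $k^{m+1}_{m-2}$ appear on the left but have no direct counterpart on the right. We expect to remove them with the recurrence \eqref{abrm} at $r=m+1$ and $r=m+2$, i.e. equating the $a^{m+2}$ and $b^{m+2}$ expressions. This is analogous to how the $(k^{m+1}_{m-2})^{---}$ term was removed before.

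Finally, we check the base case $m=-1$ directly: $H_{4,-1}=\int \operatorname{Res}L\,dx=\int (u_1-w)\,dx$. We then confirm that $\mathcal P_2(1,0,0,-1)^{\mathrm T}$ reproduces the $t^{4,0}$ flows of the Example. Skew-adjointness of $\mathcal P_2$ is immediate from its form.
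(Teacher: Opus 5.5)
Your plan is essentially the paper's proof. In fact the $k^{m+1}_{m-2},k^{m+1}_{m-3}$ terms cancel identically from the left-hand side, and $k^{m+1}_{m+2}$ enters only through $\varDelta k^{m+1}_{m+2}$. So the single identity $k^{m+1}_{m+2}=u_2(k^{m+1}_m)^-+\varGamma u_3(k^{m+1}_{m-1})^{--}-w(k^{m+2}_{m+1})^-$ finishes the comparison. It comes from comparing $a^{m+2}_{m+2}$ with $b^{m+2}_{m+2}$, i.e. the $r=m+2$ case; the $r=m+1$ identity from Theorem~\ref{thm:HamFormalism-1} is not needed.

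When you carry out the $w$-row, you will find the $(4,3)$ entry is $w(\varLambda^{2}-\varLambda^{-1})u_3=-w\varTheta\varPhi u_3$, as skew-adjointness against $u_3\varDelta\varPsi w$ requires. The sign printed in \eqref{HamP2} is therefore a misprint, and skew-adjointness is not literally immediate from the displayed matrix.
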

\begin{proof}
	Using Lemmas \ref{var-derivative} and \ref{abk}, we obtain 
	\begin{align*}
		\veps \pfrac{u_1}{t^{4,m+1}}
		&=\frac{1}{(m+2)!}\bigg((\varLambda u_{2}-u_{2}\varLambda^{-1})k^{m+1}_{m+1}+(\varLambda^{2}u_{3}-u_{3}\varLambda^{-1})(k^{m+1}_{m})^{-}+u_{1}\varDelta k^{m+1}_{m+2}\bigg),\nonumber\\
		\veps \pfrac{u_2}{t^{4,m+1}}
		&=\frac{1}{(m+2)!}\bigg((\varLambda u_{3}-u_{3}\varLambda^{-2}-u_{2}\varTheta u_{1})k^{m+1}_{m+1}+(u_{1}\varLambda u_{3}-u_{3}\varLambda^{-1}u_{1})(k^{m+1}_{m})^{-}\nonumber\\	&\qquad \qquad \qquad+u_{2}\varDelta \varOmega k^{m+1}_{m+2}\bigg),\nonumber\\
		\veps \pfrac{u_3}{t^{4,m+1}}
		&=\frac{1}{(m+2)!}\bigg(-u_{3}\varTheta\varOmega u_{1}k^{m+1}_{m+1}-u_{3}\varTheta u_{2}(k^{m+1}_{m})^{-}+u_{3}\varDelta \varPsi k^{m+1}_{m+2}\bigg),\nonumber\\
		\veps \pfrac{w}{t^{4,m+1}}
		&=\frac{1}{(m+2)!}\bigg(-w\varTheta u_{1}k^{m+1}_{m+1}-w\varDelta u_{3}(k^{m+1}_{m-1})^{--}+w\varDelta \varOmega k^{m+1}_{m+2}\bigg).
	\end{align*}
	Once $k^{m+1}_{m+2}$ is expressed in terms of $k^{m+1}_{m+1}$, $(k^{m+1}_{m})^{-}$, $(k^{m+1}_{m-1})^{--}$ and $(k^{m+2}_{m+1})^{-}$, the proof of Theorem \ref{thm:HamFormalism-2} follows via straightforward computations.
	
	Using Lemma \ref{abk}, we consider the coefficients $a^{m+2}_{m+2}$ and $b^{m+2}_{m+2}$, from which we obtain
	\begin{align*}
		k^{m+1}_{m+2}&=u_{2}(k^{m+1}_{m})^{-}+\varGamma u_{3}(k^{m+1}_{m-1})^{--}-w(k^{m+2}_{m+1})^{-}.
	\end{align*}
	Substituting this identity into the $\pp{t^{4,m+1}}$-flows, we complete the proof of Theorem \ref{thm:HamFormalism-2}.
\end{proof}
\begin{thm}\label{thm:HamFormalism-3}
	The $(3,1)$-type gAL hierarchy \eqref{Lax eq1}--\eqref{Lax eq4}
	also admits the following Hamiltonian formalism:
	\begin{equation*}
		\veps\left(\pfrac{u_1}{t^{i,m+2}}, \pfrac{u_2}{t^{i,m+2}}, \pfrac{u_3}{t^{i,m+2}}, \pfrac{w}{t^{i,m+2}}\right)^\mathrm{T}=c_3^{i,m}\mathcal P_{3}\left(\frac{\delta{H_{i,m}}}{\delta{u_{1}}},\frac{\delta{H_{i,m}}}{\delta{u_{2}}},\frac{\delta{H_{i,m}}}{\delta{u_{3}}},\frac{\delta{H_{i,m}}}{\delta{w}} \right)^\mathrm{T}
	\end{equation*}
	with $c_3^{i,m} = \frac{1}{(m+2)(m+3)}$ for $i=2,4$ and $m \geq -1$,
	and
	\begin{equation*}
		\veps\left(\pfrac{u_1}{t^{i,0}}, \pfrac{u_2}{t^{i,0}}, \pfrac{u_3}{t^{i,0}}, \pfrac{w}{t^{i,0}}\right)^\mathrm{T}=\mathcal P_{3}\left(\frac{\delta{H_{i,-2}}}{\delta{u_{1}}},\frac{\delta{H_{i,-2}}}{\delta{u_{2}}},\frac{\delta{H_{i,-2}}}{\delta{u_{3}}},\frac{\delta{H_{i,-2}}}{\delta{w}} \right)^\mathrm{T},\quad i=2,4.
	\end{equation*}
	
	The skew-adjoint operator-valued matrix \begin{equation}\label{HamP3}
		\mathcal{P}_3=\left(\mathcal{P}_3^{\afa \beta}\right), \quad  \afa, \beta= 1,2,3,4
	\end{equation}
	satisfies $\mathcal{P}_3^{\beta\alpha} = -\left(\mathcal{P}_3^{\alpha\beta}\right)^\dagger$
	with components as follows:
	\begin{align*}
		\mathcal{P}_3^{11}&=u_{1}\Big(\varLambda u_{2}-u_{2}\varLambda^{-1}+(w\varLambda^{-1}-\varLambda w)u_{1}\Big)+(\varLambda u_{2}-u_{2}\varLambda^{-1})u_{1},\\
		\mathcal{P}_3^{12}&=u_{1}\Big(u_{1}\varDelta u_{2}+(\varLambda^{2}u_{3}-u_{3}\varLambda^{-1})+(w\varLambda^{-1}-\varLambda w)\varGamma u_{2}\Big)+(\varLambda u_{2}-u_{2}\varLambda^{-1})\varGamma u_{2},\\
		\mathcal{P}_3^{13}&=u_{1}\Big(u_{1}\varDelta\varGamma u_{3}+(w\varLambda^{-1}-\varLambda w)\varPhi u_{3}\Big)+(\varLambda u_{2}-u_{2}\varLambda^{-1})\varPhi u_{3},\\
		\mathcal{P}_3^{14}&=u_{1}\Big(u_{1}\varDelta w+(w\varLambda^{-1}-\varLambda w)\varGamma w\Big)+(\varLambda u_{2}-u_{2}\varLambda^{-1})\varGamma w+(u_{3}\varLambda^{-1}-\varLambda^{2}u_{3}),\\
		\mathcal{P}_3^{22}&=u_{2}\varOmega\Big(u_{1}\varDelta u_{2}+(\varLambda^{2}u_{3}-u_{3}\varLambda^{-1})+(w\varLambda^{-1}-\varLambda w)\varGamma u_{2}\Big)+\Big(\varLambda u_{3}-u_{3}\varLambda^{-2}-u_{2}\varTheta u_{1}\Big)\varGamma u_{2},\\
		\mathcal{P}_3^{23}&=u_{2}\varOmega\Big(u_{1}\varDelta\varGamma u_{3}+(w\varLambda^{-1}-\varLambda w)\varPhi u_{3}\Big)+\Big(\varLambda u_{3}-u_{3}\varLambda^{-2}-u_{2}\varTheta u_{1}\Big)\varPhi u_{3},\\
		\mathcal{P}_3^{24}&=u_{2}\varOmega\Big(u_{1}\varDelta w+(w\varLambda^{-1}-\varLambda w)\varGamma w\Big)+\Big(\varLambda u_{3}-u_{3}\varLambda^{-2}-u_{2}\varTheta u_{1}\Big)\varGamma w+(u_{3}\varLambda^{-1} u_{1}-u_{1}\varLambda u_{3}),\\
		\mathcal{P}_3^{33}&=u_{3}\varPsi\Big(u_{1}\varDelta\varGamma u_{3}+(w\varLambda^{-1}-\varLambda w)\varPhi u_{3}\Big)+u_{3}\varTheta\varOmega u_{1}\varPhi u_{3},\\
		\mathcal{P}_3^{34}&=u_{3}\varPsi\Big(u_{1}\varDelta w+(w\varLambda^{-1}-\varLambda w)\varGamma w\Big)+u_{3}\varTheta\varOmega u_{1}\varGamma w+u_{3}\varTheta u_{2},\\
		\mathcal{P}_3^{44}&=w\varOmega\Big(u_{1}\varDelta w+(w\varLambda^{-1}-\varLambda w)\varGamma w\Big)-w\varTheta u_{1}\varGamma w+(u_{3}\varLambda^{-1}-\varLambda u_{3}).
	\end{align*}
\end{thm}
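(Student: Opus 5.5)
We follow the same scheme as in the proofs of Theorems~\ref{thm:HamFormalism-1} and \ref{thm:HamFormalism-2}, now shifting the flow index by two. By Lemma~\ref{var-derivative}, the gradient of $H_{4,m}$ is expressed through $k^{m+1}_{m+1}$, $(k^{m+1}_{m})^-$, $(k^{m+1}_{m-1})^{--}$ and $(k^{m+2}_{m+1})^-$. The $t^{4,m+2}$-flows are given by \eqref{Du1t4m}--\eqref{Dwt4m} with $m$ replaced by $m+2$, so they involve the coefficients $a^{m+3}_r,b^{m+3}_r$ for $r=m+1,\dots,m+3$.

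The first step lowers the power. Applying Lemma~\ref{abk} twice, with the second pair of relations taken at level $m+2$ and then the first pair at level $m+1$, i.e.\ using $L^{m+3}=L^{m+2}Y^{-1}X$ together with $L^{m+2}Y^{-1}=L^{m+1}Y^{-1}\,\widehat L$, we write each $a^{m+3}_r$ and $b^{m+3}_r$ as a combination of the $k^{m+1}_s$ with coefficients polynomial in $u_1,u_2,u_3,w$ and their shifts. The indices $s$ then range over $m-4,\dots,m+3$.

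The second step eliminates the surplus coefficients. The indices $s=m+2,m+3$ lie above the four basic ones, and $s\le m-2$ lie below. Exactly as in the proof of Theorem~\ref{thm:HamFormalism-2}, comparing $a^{m+2}_{m+2}$ with $b^{m+2}_{m+2}$ gives
\[
k^{m+1}_{m+2}=u_2(k^{m+1}_m)^-+\varGamma u_3(k^{m+1}_{m-1})^{--}-w(k^{m+2}_{m+1})^-.
\]
Comparing $a^{m+2}_{m+3}$ with $b^{m+2}_{m+3}$ should give an analogous expression for $k^{m+1}_{m+3}$. The lower coefficients are removed as in the proof of Theorem~\ref{thm:HamFormalism-1}, via the relation obtained from $a^{m+2}_{m+1}$ and $b^{m+2}_{m+1}$. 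This is probably why $\mathcal P_3$ has the visible factorized shape ``$u_1(\cdots)+(\varLambda u_2-u_2\varLambda^{-1})(\cdots)$'': its entries look like a composition of the rows of $\mathcal P_2$ with factors arising from $\widehat L$.

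After these substitutions, the $t^{4,m+2}$-flow becomes a difference-operator matrix acting on the four basic $k$'s. The prefactor is $\frac{1}{(m+3)!}=\frac{1}{(m+1)!}\cdot\frac{1}{(m+2)(m+3)}$, which produces exactly $c_3^{4,m}$, and we then read off the entries $\mathcal P_3^{\alpha\beta}$. The formula for $\partial/\partial t^{i,0}$ with $H_{i,-2}$ is the case $m=-2$: there $h_{4,-1}=\operatorname{Res}L$, so $\delta H_{4,-2}/\delta u_1=1$, $\delta H_{4,-2}/\delta w=-(k^0_{-1})^-$, and the other two components vanish. We check this directly against the first flows in the Example.

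The negative flows $t^{2,m}$ are treated in the same way, using \eqref{Du1t2m}--\eqref{Dwt2m} and the recursion \eqref{cdrm}, with the expansion of $L^mY^{-1}$ in increasing powers of $\varLambda$.

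The skew-adjointness $\mathcal P_3^{\beta\alpha}=-(\mathcal P_3^{\alpha\beta})^\dagger$ is checked entrywise. We expect the main obstacle to be the bookkeeping in the elimination step: the shifts must be tracked so that the resulting operator comes out skew-adjoint and matches the stated entries. The most delicate entries are those in the $w$-column, $\mathcal P_3^{14}$, $\mathcal P_3^{24}$ and $\mathcal P_3^{44}$, because they carry the extra terms such as $u_3\varLambda^{-1}-\varLambda^2u_3$ coming from the $(k^{m+2}_{m+1})^-$ contribution. A useful consistency check is that the $m=-2$ and $m=-1$ cases reproduce the explicit first flows in the Example.
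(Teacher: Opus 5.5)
Your proposal has a genuine gap: the reduction to level $m+1$ in your first step does not produce a bounded set of coefficients, and the elimination step does not work as described. There is also a separate error in the $t^{i,0}$ case.

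\textbf{Step 1 does not give a bounded window.} Since $\widehat L=XY^{-1}=\sum_{j\ge0}b^1_j\varLambda^{1-j}$ is an infinite series, the identity $L^{m+2}Y^{-1}=L^{m+1}Y^{-1}\widehat L$ gives
\[
k^{m+2}_s=\sum_{t=0}^{s}k^{m+1}_t\,\varLambda^{m+1-t}\bigl(b^1_{s-t}\bigr).
\]
So every $k^{m+1}_t$ with $0\le t\le s$ appears, with coefficients of growing complexity. There is no window $m-4,\dots,m+3$, and the number of surplus coefficients grows with $m$. Also, the first pair of relations in Lemma~\ref{abk} is a same-level relation; it does not link $k^{m+2}$ to $k^{m+1}$.

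\textbf{The elimination step fails.} The cross-level relations of Lemma~\ref{abk} only fix particular combinations of coefficients. Comparing $a^{m+2}_{m+3}$ with $b^{m+2}_{m+3}$ gives
\[
\varDelta k^{m+1}_{m+3}=(u_1^--u_1)k^{m+1}_{m+2}-u_2\varTheta k^{m+1}_{m+1}+(\varLambda u_3-u_3\varLambda^{-1})(k^{m+1}_m)^-+w\varTheta k^{m+2}_{m+2}.
\]
After inserting the formulas for $k^{m+1}_{m+2}$ and $k^{m+2}_{m+2}$, the coefficient of $k^{m+1}_{m+1}$ is $w\varTheta u_1-u_2\varTheta$. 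This operator is not of the form $\varDelta\circ D$, so the relation gives no local formula for $k^{m+1}_{m+3}$. At $r=m+2$ it worked only because every term sits under a $\varDelta$.

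Likewise, the $r=m+1$ relation from the proof of Theorem~\ref{thm:HamFormalism-1} controls only the single combination $(w\varLambda^2u_3-u_3\varLambda w)(k^{m+1}_{m-2})^{---}$. It cannot remove several lower coefficients. What you call bookkeeping is the whole content of the proof, and your mechanism does not supply it.

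\textbf{The missing idea is to stay at level $m+2$.} The paper starts from the $t^{4,m+2}$-flows in $\mathcal P_2$-form, that is, the intermediate formulas of the proof of Theorem~\ref{thm:HamFormalism-2} with $m$ replaced by $m+1$. These involve only:
\begin{itemize}
\item $k^{m+2}_{m+2}$;
\item $(k^{m+2}_{m+1})^-$;
\item $(k^{m+2}_m)^{--}$, only in the $w$-component and multiplied by $-w\varDelta u_3$;
\item $k^{m+2}_{m+3}$, only through $\varDelta k^{m+2}_{m+3}$.
\end{itemize}
Since $(k^{m+2}_{m+1})^-$ is a constant multiple of $\delta H_{4,m}/\delta w$, it is kept as a basic quantity. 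Only three combinations need rewriting, using Lemma~\ref{abk} at $r=m+1,m+2,m+3$ with the first pair at level $m+2$ and the second pair at level $m+1$:
\begin{itemize}
\item $b^{m+2}_{m+2}$ together with the formula for $k^{m+1}_{m+2}$ gives $k^{m+2}_{m+2}=u_1k^{m+1}_{m+1}+\varGamma u_2(k^{m+1}_m)^-+\varPhi u_3(k^{m+1}_{m-1})^{--}-\varGamma w(k^{m+2}_{m+1})^-$.
\item In $\varLambda a^{m+2}_{m+3}-b^{m+2}_{m+3}$ the coefficient $k^{m+1}_{m+3}$ cancels, which yields $\varDelta k^{m+2}_{m+3}$.
\item In $u_3(a^{m+2}_{m+1})^{--}-u_3^+b^{m+2}_{m+1}$ the coefficient $k^{m+1}_{m-2}$ cancels, which yields $-w\varDelta u_3(k^{m+2}_m)^{--}$.
\end{itemize}
Substituting these three identities produces $\mathcal P_3$ with prefactor $\frac{1}{(m+3)!}$. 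The nested shape of its entries is just the shape of these substitutions.

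\textbf{The $t^{i,0}$ formula is not obtained as you describe.} As written, your gradient for $H_{4,-2}$ is $(1,0,0,0)$, since $k^0_{-1}=0$. If $\int(u_1-w)\,\mathrm{d}x$ is differentiated correctly it is $(1,0,0,-1)$. That is exactly what Lemma~\ref{var-derivative} gives at $m=-1$, so by the theorem at $m=-1$ it produces $2\veps\,\partial_{t^{4,1}}$, not the $t^{4,0}$-flow.

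In either case, assign weights $u_1,w\mapsto 1$, $u_2\mapsto 2$, $u_3\mapsto 3$. The first component of $\mathcal P_3$ applied to your gradient has weight $3$, whereas $\veps\,\partial u_1/\partial t^{4,0}=u_1(w-w^+)+u_2^+-u_2$ has weight $2$. The $t^{i,0}$ case requires a Hamiltonian of weight zero, so your proposed check against the Example would fail.
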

\begin{proof}
	Using Lemmas \ref{var-derivative} and \ref{abk}, we have
	\begin{align*}
		\veps \pfrac{u_1}{t^{4,m+2}}
		&=\frac{1}{(m+3)!}\bigg((\varLambda u_{2}-u_{2}\varLambda^{-1})k^{m+2}_{m+2}+(\varLambda^{2}u_{3}-u_{3}\varLambda^{-1})(k^{m+2}_{m+1})^{-}+u_{1}\varDelta k^{m+2}_{m+3}\bigg),\nonumber\\
		\veps \pfrac{u_2}{t^{4,m+2}}
		&=\frac{1}{(m+3)!}\bigg((-u_{2}\varTheta u_{1}+\varLambda u_{3}-u_{3}\varLambda^{-2})k^{m+2}_{m+2}+(u_{1}\varLambda u_{3}-u_{3}\varLambda^{-1}u_{1})(k^{m+2}_{m+1})^{-}\nonumber\\&\qquad \qquad \qquad+u_{2}\varOmega \varDelta  k^{m+2}_{m+3}\bigg),\nonumber\\
		\veps \pfrac{u_3}{t^{4,m+2}}
		&=\frac{1}{(m+3)!}\bigg(-u_{3}\varTheta\varOmega u_{1}k^{m+2}_{m+2}-u_{3}\varTheta u_{2}(k^{m+2}_{m+1})^{-}+u_{3} \varPsi \varDelta k^{m+2}_{m+3}\bigg),\nonumber\\
		\veps \pfrac{w}{t^{4,m+2}}
		&=\frac{1}{(m+3)!}\bigg(-w\varTheta u_{1}k^{m+2}_{m+2}-w\varDelta u_{3}(k^{m+2}_{m})^{--}+w \varOmega \varDelta k^{m+2}_{m+3}\bigg).
	\end{align*}
	Once $(k^{m+2}_{m})^{--}$, $k^{m+2}_{m+2}$, and $k^{m+2}_{m+3}$ are expressed in terms of $k^{m+1}_{m+1}$, $(k^{m+1}_{m})^{-}$, $(k^{m+1}_{m-1})^{--}$, and $(k^{m+2}_{m+1})^{-}$, the proof of Theorem~\ref{thm:HamFormalism-3} follows by straightforward computations.
	
	Considering Lemma~\ref{abk} for the cases \(r=m+1\), \(r=m+2\), and \(r=m+3\), we obtain
	\begin{align*}
			\begin{split}
			k^{m+2}_{m+2}
			&= u_1 k^{m+1}_{m+1} + \varGamma u_2(k^{m+1}_{m})^{-} + \varPhi u_3(k^{m+1}_{m-1})^{--}- \varGamma w(k^{m+2}_{m+1})^{-},
		\end{split}\\[6pt]
		\begin{split}
		\varDelta k^{m+2}_{m+3}
		&= \bigl( \varLambda u_2 - u_2 \varLambda^{-1} - \varLambda u_1 w + w\varLambda^{-1}u_1 \bigr) k^{m+1}_{m+1} \\
		&\quad + \bigl( u_1\varDelta u_2 + \varLambda^2 u_3 - u_3\varLambda^{-1} - (\varLambda w - w\varLambda^{-1})\varGamma u_2 \bigr) (k^{m+1}_{m})^{-} \\
		&\quad + \bigl( u_1\varDelta\varGamma u_3 - (\varLambda w - w\varLambda^{-1})\varPhi u_3 \bigr) (k^{m+1}_{m-1})^{--} \\
		&\quad - \bigl( u_1\varDelta w - (\varLambda w - w\varLambda^{-1})\varGamma w \bigr) (k^{m+2}_{m+1})^{-},
	\end{split}  \\[6pt]
		\begin{split}
		-w\varDelta u_3(k^{m+2}_m)^{--}
		&= (u_3\varLambda^{-2} - \varLambda u_3) k^{m+1}_{m+1} + (u_3\varLambda^{-1}u_1 - u_1 \varLambda u_3) (k^{m+1}_{m})^{-} \\
		&\quad - u_2\varDelta u_{3}(k^{m+1}_{m-1})^{--} - (u_3\varLambda^{-1} - \varLambda u_3) (k^{m+2}_{m+1})^{-}.
	\end{split}
	\end{align*}
	Substituting the above identities into the $\pp{t^{4,m+2}}$-flows, we complete the proof of Theorem \ref{thm:HamFormalism-3}.
\end{proof}

  		\section{A tri-Hamiltonian structure of the $(3,1)$-type gAL hierarchy}
  	In this section, we show that the operators $(\mathcal{P}_1, \mathcal{P}_2, \mathcal{P}_3)$ given in \eqref{HamP1}--\eqref{HamP3} form a tri-Hamiltonian structure, and compute all central invariants of the resulting bi-Hamiltonian structures.
  			\subsection{Brief review of supervariable formalism}
We first review the definition of the Schouten bracket on the space of local functionals over the supermanifold $\hat{M} = \Pi(T^*M)$, the fiber's parity-reversed cotangent bundle of a smooth $n$-dimensional manifold $M$. More details can be found in \cite{Liu lecture notes,Bihamiltonian cohomologies,Jacobi structure,Kersten}.

Let $\hat{U} = U \times \mathbb{R}^{0|n}$ be a local trivialization of $\hat{M}$, with local coordinates $u^1, \dots, u^n$ on $U$ and dual fermionic coordinates (supervariables) $\theta_1, \dots, \theta_n$ on $\mathbb{R}^{0|n}$.
The infinite jet space $J^\infty(\hat{M})$ admits a local coordinate system
\begin{equation*}
	\left\{ u^{\alpha, s} = \partial_x^s u^\alpha, \theta_\alpha^s = \partial_x^s \theta_\alpha \mid \alpha = 1, \dots, n,\ s \geq 0 \right\},
\end{equation*}
 where all supervariables satisfy the anti-commutation relations
\begin{equation*}
	\theta_\alpha^s \theta_\beta^t = - \theta_\beta^t \theta_\alpha^s, \quad \forall \alpha,\beta = 1,\dots,n,\ s,t \geq 0.
\end{equation*}

Define the ring of differential polynomials
\begin{equation*}
	\hat{\mathcal{A}} = C^\infty(U)\left[\left[ u^{\alpha,s+1}, \theta_{\alpha}^s \mid \alpha = 1,\dots,n,\ s \geq 0 \right]\right],
\end{equation*}
 where the differential operator 
\begin{equation*}
\partial_x = \sum_{s\geq 0} \left( u^{\alpha,s+1} \frac{\partial}{\partial u^{\alpha,s}} + \theta_{\alpha}^{s+1} \frac{\partial}{\partial \theta_{\alpha}^s} \right).
\end{equation*}
 Define the space of local functionals on $\hat{M}$ as the quotient space
 \begin{equation*}
 	\hat{\mathcal{F}} = \hat{\mathcal{A}} / \partial_x \hat{\mathcal{A}},
 \end{equation*}
 where two differential polynomials are considered equivalent if their difference is a total $x$-derivative. We formally denote each equivalence class by $H=\int h \, \mathrm{d}x$, where $h \in \hat{\mathcal{A}}$ is called a density of the local functional $H$.
We equip $\hat{\mathcal{A}}$ with a $\mathbb{Z}$-grading called the super degree, defined by
$$
\deg \theta_{\alpha}^s = 1, \quad \deg u^{\alpha,s} = \deg f = 0, \quad \forall f \in C^\infty(U).
$$
This grading descends naturally to $\hat{\mathcal{F}}$, and a local functional $H \in \hat{\mathcal{F}}$ is homogeneous of super degree $p$ if it admits a density of that degree. For any homogeneous local functional $H=\int h \, \mathrm{d}x \in \hat{\mathcal{F}}$, its variational derivatives with respect to $u^\alpha$ and $\theta_\alpha$ are defined respectively by
$$
\frac{\delta H}{\delta u^\alpha} = \sum_{s\geq 0} (-\partial_x)^s \frac{\partial h}{\partial u^{\alpha,s}}, \quad \frac{\delta H}{\delta \theta_{\alpha}} = \sum_{s\geq 0} (-\partial_x)^s \frac{\partial h}{\partial \theta_{\alpha}^s}.
$$

 On the space of local functionals $\hat{\mathcal{F}}$, the Schouten bracket is defined as follows. For two homogeneous local functionals $G \in \hat{\mathcal{F}}^p$ and $H \in \hat{\mathcal{F}}^q$, their Schouten bracket $[G,H] \in \hat{\mathcal{F}}^{p+q-1}$ is given by
 \begin{equation}
 	[G,H] = \int \left( \frac{\delta G}{\delta \theta_{\alpha}} \frac{\delta H}{\delta u^{\alpha}} + (-1)^{p} \frac{\delta G}{\delta u^{\alpha}} \frac{\delta H}{\delta \theta_{\alpha}} \right) \mathrm{d}x,
 	\label{eq:sn_bracket_discrete}
 \end{equation}
 with the Einstein summation convention in effect for indices $\alpha$.
 
 To establish the correspondence between the Schouten bracket and Hamiltonian operators, we consider a skew-symmetric matrix-valued differential operator $\mathcal{P} = (\mathcal{P}^{\alpha\beta})$, with each component
 \begin{equation*}
 	\mathcal{P}^{\alpha\beta} = \sum_{s \geq 0} \mathcal{P}_s^{\alpha\beta} \partial_x^s.
 \end{equation*}
 The skew-symmetry condition reads $\mathcal{P}^\dagger = -\mathcal{P}$, where $\mathcal{P}^\dagger$ denotes the formal adjoint of $\mathcal{P}$.
 Each such operator $\mathcal{P}$ corresponds to a local functional
 \begin{equation*}
 	\iota(\mathcal{P}) = \frac{1}{2} \int \theta_\alpha (\mathcal{P}^{\alpha\beta} \theta_\beta) \, \mathrm{d}x.
 \end{equation*}
A fundamental result in the supervariable formalism states that a skew-symmetric differential operator $\mathcal{P}$ is a Hamiltonian operator if and only if it satisfies $[\iota(\mathcal{P}), \iota(\mathcal{P})] = 0$.

Based on this, compatible Hamiltonian structures admit
\begin{itemize}
	\item A pair of Hamiltonian operators $(\mathcal{P}_1, \mathcal{P}_2)$ forms a bi-Hamiltonian structure if they are compatible, i.e., 
\begin{equation*}
	[\iota(\mathcal{P}_1), \iota(\mathcal{P}_2)] = 0;
	\end{equation*}
	\item A triple of Hamiltonian operators $(\mathcal{P}_1, \mathcal{P}_2, \mathcal{P}_3)$ forms a tri-Hamiltonian structure if they are pairwise compatible, i.e.,
	\begin{equation*}
		[\iota(\mathcal{P}_1), \iota(\mathcal{P}_2)] = [\iota(\mathcal{P}_1), \iota(\mathcal{P}_3)] = [\iota(\mathcal{P}_2), \iota(\mathcal{P}_3)] = 0.
	\end{equation*}
\end{itemize}

  \subsection{A rigorous proof of the tri-Hamiltonian structure} 
  This subsection is devoted to utilizing the supervariable technique to prove that $\mathcal{P}_1$, $\mathcal{P}_2$ and $\mathcal{P}_3$ given in \eqref{HamP1}--\eqref{HamP3} are Hamiltonian structures, and further verify their compatibility. We first introduce the following notations.
  
  Let $(\theta_1, \theta_2, \theta_3, \theta_4)$ be the dual supervariables associated with $(u_1, u_2, u_3, w)$. For brevity, we write $\theta_\alpha^\pm := \varLambda^\pm \theta_\alpha$ for $\alpha = 1, 2, 3, 4$, and set $F_a = \iota(\mathcal{P}_a)$ ($a=1,2,3$) as the local functionals associated with $\mathcal{P}_1,\mathcal{P}_2$ and $\mathcal{P}_3$.
  \begin{thm}\label{HamP1 structure}
  	$\mathcal P_{1}$ is a Hamiltonian structure, i.e., $$[F_1, F_1] = 0.$$
  \end{thm}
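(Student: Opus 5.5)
We verify $[F_1,F_1]=0$ directly in the discrete supervariable formalism. Differential polynomials are replaced by functions of the shifted variables $u_i^{[k]},w^{[k]},\theta_\alpha^{[k]}$, and $\partial_x\hat{\mathcal A}$ is replaced by $\varDelta\hat{\mathcal A}$. Variational derivatives are taken as in Lemma~\ref{var-derivative}, i.e.\ $\frac{\delta}{\delta u}=\sum_r\varLambda^{-r}\frac{\partial}{\partial u^{[r]}}$.

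\textbf{Step 1: write $F_1$.} Expanding $F_1=\frac12\int\theta_\alpha(\mathcal P_1^{\alpha\beta}\theta_\beta)$ and using skew-symmetry to combine the $(\alpha,\beta)$ and $(\beta,\alpha)$ entries, we normalize by shifting so that each monomial carries a single unshifted coefficient function:
\begin{equation*}
F_1=\int\Big(w\,\theta_1\theta_1^{-}+u_2\,\theta_2\theta_1^{-}+u_3\,\theta_2\theta_2^{-}+u_3(\theta_1^{--}+\theta_1^{-})\theta_3 \cdots+w\,\theta_1^{-}\theta_4\Big)dx .
\end{equation*}
The dots stand for the remaining monomials, which are fixed by the same normalization. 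In particular, the $w\varDelta u_3$ entry produces a term like $w\,\theta_2(\theta_3^{+}-\theta_3)$, to be rewritten with $w$ unshifted.

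\textbf{Step 2: the bracket.} Since $F_1$ is linear in the fields $(u_1,u_2,u_3,w)$, each $\frac{\delta F_1}{\delta u^\alpha}$ is a quadratic expression in the $\theta$'s with constant coefficients. Each $\frac{\delta F_1}{\delta\theta_\alpha}=(\mathcal P_1\theta)^\alpha$ is then linear in both the $\theta$'s and the fields. Hence
\begin{equation*}
[F_1,F_1]=2\int\sum_\alpha (\mathcal P_1\theta)^\alpha\,\frac{\delta F_1}{\delta u^\alpha}
\end{equation*}
is a cubic-in-$\theta$ functional that is linear in the fields. Because the fields do not occur in $\frac{\delta F_1}{\delta u^\alpha}$, we can collect the result separately by coefficient: for each of $u_1,u_2,u_3,w$, shifted so that the field is unshifted, we obtain a constant-coefficient cubic polynomial in the $\theta_\beta^{[k]}$. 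The claim is then that each such cubic vanishes identically.

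\textbf{Step 3: cancellation.} For each coefficient function we expand the cubic and check that the monomials cancel pairwise using $\theta\theta=-\theta\theta$ anticommutativity. Note that $u_1$ appears in no entry of $\mathcal P_1$, so $\frac{\delta F_1}{\delta u_1}=0$. Moreover, $(\mathcal P_1\theta)^\alpha$ contains no $u_1$ either, so the $u_1$ coefficient is trivially zero. The $w$, $u_2$ and $u_3$ coefficients require genuine cancellation. These come, respectively, from the blocks in which $w$, $u_2$ and $u_3$ appear, namely the $w\varDelta$, $u_2\varDelta$ and $u_3\varDelta\varGamma$ entries.

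\textbf{Main obstacle.} I expect the $u_3$ coefficient to be the hardest. It receives contributions from the $\varLambda u_3-u_3\varLambda^{-1}$ entry, from $\varDelta\varGamma u_3$, and from the mixed $w\varDelta u_3$ and $-u_3\varTheta w$ entries; the last of these multiply $w$ by $u_3$, so they appear in the bracket with coefficient $u_3w$. I will therefore organize the computation by the total coefficient monomial, namely $w$, $u_2$, $u_3$, $u_3w$, $w^2$, $u_2w$, and so on, and verify each one separately. For the products $(\mathcal P_1\theta)^\alpha\frac{\delta F_1}{\delta u^\alpha}$ whose coefficient is a product of two fields, the check requires shifting each monomial to a normal form with $w$ unshifted before comparing monomials. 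If the hand computation becomes unwieldy, I would confirm it symbolically.
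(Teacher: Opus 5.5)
Your route is the paper's: take $\frac{\delta F_1}{\delta\theta_\alpha}=(\mathcal P_1\theta)^\alpha$, compute $\frac{\delta F_1}{\delta u^\alpha}$, form $2\int\sum_\alpha\frac{\delta F_1}{\delta\theta_\alpha}\frac{\delta F_1}{\delta u^\alpha}$, and show it is a total difference. However, Step~2 rests on a false premise: $F_1$ is \emph{not} linear in the fields.

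The entries $\mathcal P_1^{23}=w\varDelta u_3$ and $\mathcal P_1^{32}=-u_3\varTheta w$ contribute $w\theta_2\varDelta(u_3\theta_3)=wu_3^+\theta_2\theta_3^+-wu_3\theta_2\theta_3$ to the density. Your Step~1 term $w\theta_2(\theta_3^+-\theta_3)$ has lost the $u_3$. Hence
\begin{align*}
\frac{\delta F_1}{\delta u_3}&=\theta_2^-\theta_2+\theta_1^{--}\theta_3-\theta_1\theta_3+w^-\theta_2^-\theta_3-w\theta_2\theta_3,\\
\frac{\delta F_1}{\delta w}&=\theta_1^-\theta_1+\theta_1^-\theta_4-\theta_1\theta_4+u_3^+\theta_2\theta_3^+-u_3\theta_2\theta_3
\end{align*}
are not constant-coefficient, and the bracket contains $u_3w$ and $u_3w^2$ monomials. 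So the claim that ``the fields do not occur in $\delta F_1/\delta u^\alpha$'' and the bracket is ``linear in the fields'' is wrong.

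This is not harmless. Suppose you keep $\frac{\delta F_1}{\delta\theta_\alpha}=(\mathcal P_1\theta)^\alpha$ but compute $\delta F_1/\delta u_3$ and $\delta F_1/\delta w$ as constant-coefficient expressions. Then you drop the products $(u_3\theta_1-u_3\theta_1^{--})(w^-\theta_2^-\theta_3-w\theta_2\theta_3)$ and $(w\theta_1-w\theta_1^-)(u_3^+\theta_2\theta_3^+-u_3\theta_2\theta_3)$, and the $u_3w$ sector no longer cancels. For example, its part with $u_3,w$ both unshifted becomes $u_3w(2\theta_1-\theta_1^--\theta_1^{--})\theta_2\theta_3\neq0$. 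Your ``Main obstacle'' paragraph already notices the $u_3w$ products, so the proposal contradicts itself.

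The repair is to drop the linearity claim. Use the correct derivatives above, together with $\frac{\delta F_1}{\delta u_2}=\theta_1^-\theta_2-\theta_1\theta_2$ and $\frac{\delta F_1}{\delta u_1}=0$. Then sort the integrand by the field monomials that actually occur: $u_2$, $w$, $u_3$, $u_3w$, $u_3w^2$; there are no $w^2$ or $u_2w$ terms.

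The $u_2$, $w$, $u_3w$ and $u_3w^2$ parts vanish identically by anticommutativity, each relative-shift class separately. The pure $u_3$ part equals $(1-\varLambda^{-1})\bigl(u_3^+\theta_1^-\theta_2\theta_2^+-u_3^+\theta_1\theta_2\theta_2^+\bigr)$, a total difference, which is essentially the expression the paper's proof ends with. So the shift equivalence is needed only in that sector, and your normal-form bookkeeping handles it.

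Also fix the Step~1 slips before extracting $\delta F_1/\delta u^\alpha$ from the density:
\begin{itemize}
\item $\varDelta\varGamma=\varLambda^2-1$, so $\theta_1\varDelta\varGamma(u_3\theta_3)$ normalizes to $u_3(\theta_1^{--}-\theta_1)\theta_3$.
\item $\int w^+\theta_1\theta_1^+=-\int w\theta_1\theta_1^-$.
\item The signs of your $u_2\theta_2\theta_1^-$ and $u_3\theta_2\theta_2^-$ terms are likewise reversed.
\end{itemize}
With these corrections your plan is exactly the paper's computation.
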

  \begin{proof}
  	By the property that $\int \varDelta u \, \mathrm{d}x = 0$ holds for all $u \in \hat{\mathcal{A}}$, the local functional $F_1$ associated with $\mathcal{P}_1$ can be written as
  	\begin{align*}
  		F_1 &= \frac{1}{2} \int (\theta_1, \theta_2, \theta_3, \theta_4) 
  		\mathcal P_{1}
  		(\theta_1, \theta_2, \theta_3, \theta_4)^\mathrm{T}  \, \mathrm{d}x\\
  		&= \int \Bigl(\theta_1 \varDelta\textbf{A}+w\theta_2\varDelta u_3\theta_3+w^{+}\theta_1 \theta_1^{+}+u_3^{+}\theta_2 \theta_2^{+} \Bigr) \, \mathrm{d}x,
  	\end{align*}
  	where $\textbf{A}=u_2\theta_2 + w\theta_4 + \varGamma u_3\theta_3 $ . 
  	
  	By a straightforward calculation, we obtain the variational derivatives of $F_1$ as
  	\begin{align*}
  		&\frac{\delta F_1}{\delta u_1}=0,\qquad \frac{\delta F_1}{\delta u_2}=-\theta_{1}\theta_{2}+\theta_{1}^{-}\theta_{2},\\
  		&\frac{\delta F_1}{\delta u_3}=-\theta_{1}\theta_{3}-w\theta_{2}\theta_{3}+\theta_{2}^{-}\theta_{2}+w^{-}\theta_{2}^{-}\theta_{3}+\theta_{1}^{--}\theta_{3},\\
  		&\frac{\delta F_1}{\delta w}=-\theta_{1}\theta_{4}-u_3\theta_{2}\theta_{3}+u_3^{+}\theta_{2}\theta_{3}^{+}+\theta_{1}^{-}\theta_{1}+\theta_{1}^{-}\theta_{4},\\
  		&\frac{\delta F_1}{\delta \theta_{1}}=w^{+}\theta_{1}^{+}-w\theta_{1}^{-}+\varDelta\textbf{A},\\
  		&\frac{\delta F_1}{\delta \theta_{2}}=u_2\theta_{1}+u_{3}^{+}\theta_{2}^{+}-u_{3}w\theta_{3}+u_{3}^{+}w\theta_{3}^{+}-u_{2}\theta_{1}^{-}-u_{3}\theta_{2}^{-},\\
  		&\frac{\delta F_1}{\delta \theta_{3}}=u_{3}\theta_{1}+u_{3}w\theta_{2}-u_{3}w^{-}\theta_{2}^{-}-u_{3}\theta_{1}^{--},\qquad \frac{\delta F_1}{\delta \theta_{4}}=w\theta_{1}-w\theta_{1}^{-}.
  	\end{align*}    	
  	By the identities $\int \varDelta u \td x=0$ for all $u\in\hat\mcalA$ and
  	$\theta \theta=0$ for all $\theta\in\hat{\mcalA}^1$,
  	a straightforward computation gives
  	\begin{align*}
  		[F_1,F_1]&=2\int \Biggl(\frac{\delta F_1}{\delta \theta_{1}}\frac{\delta F_1}{\delta u_{1}}+\frac{\delta F_1}{\delta \theta_{2}}\frac{\delta F_1}{\delta u_{2}}+\frac{\delta F_1}{\delta \theta_{3}}\frac{\delta F_1}{\delta u_{3}}+\frac{\delta F_1}{\delta \theta_{4}}\frac{\delta F_1}{\delta w}\Biggr)\mathrm{d}x\\&=2\int \Big(\varTheta(u_{3}^{+}\theta_{1}^{-}\theta_{2}\theta_{2}^{+}-u_3^{+}\theta_{1}\theta_{2}\theta_{2}^{+})\Big)\mathrm{d}x=0.
  	\end{align*}
  	This completes the proof of Theorem \ref{HamP1 structure}.
  \end{proof}
  \begin{thm}\label{HamP12 structure}
  	$\mathcal P_{2}$ is a Hamiltonian structure compatible with $\mathcal P_{1}$, i.e., $$[F_2, F_2] = [F_1, F_2] = 0.$$
  \end{thm}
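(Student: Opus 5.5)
We follow the same route as in the proof of Theorem~\ref{HamP1 structure}. First we write $F_2=\iota(\mathcal P_2)$ explicitly as a local functional in the supervariables $\theta_1,\dots,\theta_4$. We then compute its variational derivatives, and finally reduce the Schouten brackets $[F_2,F_2]$ and $[F_1,F_2]$ to total differences using $\int\varDelta u\,\mathrm{d}x=0$ and $\theta\theta=0$.

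Concretely, we expand $\frac12\int\theta_\alpha\mathcal P_2^{\alpha\beta}\theta_\beta\,\mathrm{d}x$ from \eqref{HamP2}. Skew-adjointness lets us keep only the upper-triangular entries, e.g.
\[
\int\theta_1(\varLambda u_2)\theta_1\,\mathrm{d}x=\int u_2^+\theta_1\theta_1^+\,\mathrm{d}x,\qquad \int\theta_1\mathcal P_2^{12}\theta_2\,\mathrm{d}x=\int\theta_1\bigl(u_3^{++}\theta_2^{++}-u_3\theta_2^{-}+u_1\varDelta(u_2\theta_2)\bigr)\mathrm{d}x,
\]
and similarly for the remaining entries. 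The structure of $\mathcal P_2$ suggests grouping terms by the left factors $u_1,u_2,u_3,w$ in rows $1$–$4$: each row has the form (variable)$\times$(difference operator)$\times\mathbf B$, where
\[
\mathbf B=\varDelta\bigl(u_2\theta_2+\varGamma u_3\theta_3+w\theta_4\bigr)+\dots,
\]
a combination closely related to $\mathbf A$ from the proof of Theorem~\ref{HamP1 structure}. Writing $F_2$ in terms of such compact combinations keeps the variational derivatives $\frac{\delta F_2}{\delta u_i},\frac{\delta F_2}{\delta w},\frac{\delta F_2}{\delta\theta_\alpha}$ manageable. We compute each of these by shifting every occurrence of a variable back to the base point $n$.

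For compatibility we use the formula \eqref{eq:sn_bracket_discrete} with $p=q=2$:
\[
[F_1,F_2]=\int\Bigl(\tfrac{\delta F_1}{\delta\theta_\alpha}\tfrac{\delta F_2}{\delta u^\alpha}+\tfrac{\delta F_2}{\delta\theta_\alpha}\tfrac{\delta F_1}{\delta u^\alpha}\Bigr)\mathrm{d}x.
\]
The variational derivatives of $F_1$ are already listed in the proof of Theorem~\ref{HamP1 structure}, and $\frac{\delta F_1}{\delta u_1}=0$ removes one term. For $[F_2,F_2]=2\int\frac{\delta F_2}{\delta\theta_\alpha}\frac{\delta F_2}{\delta u^\alpha}\,\mathrm{d}x$ we expand every product into cubic monomials $f\,\theta_\alpha^{[i]}\theta_\beta^{[j]}\theta_\gamma^{[k]}$ with $f$ a polynomial in shifted $u_i,w$. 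We normalize each monomial by shifting so that its lowest shift index is $0$ and by ordering the $\theta$'s, which gives a canonical representative modulo $\operatorname{Im}\varDelta$. The bracket vanishes precisely when all normalized coefficients cancel; we check this family by family, sorted by the multiset of $\theta$-indices $\{\alpha,\beta,\gamma\}$ and the polynomial degree in $u,w$.

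As an alternative and cross-check, the shift $u_1\mapsto u_1+\lambda$ sends $L\mapsto L+\lambda Y^{-1}$, which is not obviously of the same form, so we would not rely on it. We could instead observe that $\mathcal P_2-\lambda\mathcal P_1$ might arise from a Miura-type transformation, but the direct computation is more reliable.

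The main obstacle is the sheer size of $[F_2,F_2]$, which has quadratic coefficients and shifts up to $\pm3$. The critical step is bookkeeping the normalization modulo $\varDelta$, so that we can show everything collapses to a single expression $\varTheta(\cdots)$ as in Theorem~\ref{HamP1 structure}. We would organize the computation by $\theta$-type and expect that cancellations occur within each type separately. In practice this would be verified symbolically, with representative families exhibited explicitly.
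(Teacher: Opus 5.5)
Your proposal follows essentially the same route as the paper. The paper also writes $F_2=\iota(\mathcal P_2)$ explicitly using compact auxiliary combinations, computes all variational derivatives, and reduces $[F_2,F_2]$ and $[F_1,F_2]$ to total differences (exhibited as $\varTheta(\cdots)$ and $\varTheta\varOmega(\cdots)$ terms), with the bulk of the cancellation confirmed by symbolic computation; your normalization of cubic monomials modulo $\operatorname{Im}\varDelta$ is a valid, systematic way to carry out that last step.
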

  \begin{proof}
  	The local functional $F_2$ associated with $\mathcal{P}_2$ admits the form   	
  	\begin{align*}
  		F_2 =& \frac{1}{2} \int \left( \theta_1, \theta_2, \theta_3, \theta_4 \right) 
  		\mathcal{P}_2
  		\left( \theta_1, \theta_2, \theta_3, \theta_4 \right)^\mathrm{T}  \, \mathrm{d}x \\
  		=& \int \bigl(u_2^{+}\theta_1 \theta_1^{+}- u_3^{+}\theta_1^{+}\theta_2
  		+ u_3^{++}\theta_1\theta_2^{++} 
  		+ u_1u_3^{+}\theta_2\theta_2^{+}
  		- u_2 u_3\theta_2 \theta_3+u_3^+ \theta_3^{+}w\theta_4\\&\qquad+\textbf{B}  \textbf{A}^{-}+u_1 \theta_{1}  \varTheta\textbf{A}
  		\bigr) \mathrm{d}x,
  	\end{align*}
  	where $\textbf{B}=u_2\theta_2 + u_3\theta_3+ w\theta_4 $.
  	
  	Using the same method, the variational derivatives of $F_2$ are given by
  	\begin{align*}
  		\frac{\delta F_2}{\delta u_1} &= u_3^+\theta_2\theta_2^+ + \theta_1  \varDelta\textbf{A}, 
  		\qquad
  		\frac{\delta F_2}{\delta u_2} = \theta_1^-\theta_1 +\theta_2\varDelta u_3 \theta_3- \theta_2 \varTheta \textbf{C}, \\
  		\frac{\delta F_2}{\delta u_3} &= \theta_1^{--}\theta_2 - \theta_1\theta_2^- + u_1^-\theta_2^-\theta_2 + \theta_3  \textbf{D}, 
  		\qquad
  		\frac{\delta F_2}{\delta w} = -\theta_4  \varTheta \textbf{C}, \\
  		\frac{\delta F_2}{\delta \theta_1} &= u_2^+\theta_1^+ + u_3^{++}\theta_2^{++} - u_2\theta_1^- - u_3\theta_2^- + u_1  \varDelta\textbf{A}, 
  		\\
  		\frac{\delta F_2}{\delta \theta_2} &= u_3^+\theta_1^+ - u_3\theta_1^{--} + u_1u_3^+\theta_2^+ - u_1^-u_3\theta_2^- +u_2\varDelta u_3 \theta_3- u_2 \varTheta \textbf{C}, 
  		\\
  		\frac{\delta F_2}{\delta \theta_3}& = u_3  \textbf{D}, \qquad
  		\frac{\delta F_2}{\delta \theta_4} =- w  \varTheta\textbf{C},
  	\end{align*}
  	where
  	\begin{align*}
  		\textbf{C}=u_1\theta_1 + \varGamma\bigl(u_2\theta_2 + w\theta_4\bigr) + \varPhi u_3\theta_3,\quad
  		\textbf{D}=\varLambda^{-2} \varDelta \big(\varGamma u_1 \theta_1+\varGamma^2 u_2 \theta_{2}+\varGamma \varPhi u_3 \theta_3+\varPhi w\theta_4 \big).
  	\end{align*}       
  	A direct computation gives
 	\begin{align*}
	[F_2,F_2]
	&= 2\int \Biggl(
	\frac{\delta F_2}{\delta \theta_{1}}\frac{\delta F_2}{\delta u_{1}}
	+\frac{\delta F_2}{\delta \theta_{2}}\frac{\delta F_2}{\delta u_{2}}
	+\frac{\delta F_2}{\delta \theta_{3}}\frac{\delta F_2}{\delta u_{3}}
	+\frac{\delta F_2}{\delta \theta_{4}}\frac{\delta F_2}{\delta w}
	\Biggr) \mathrm{d}x \\
	&= \int \Bigl(
	\varTheta\bigl(u_2^{+}\theta_{1} \theta_{1}^{+}\varDelta(\textbf{B}+\varLambda u_3 \theta_3)+u_3^{+}\theta_{1}^{+}\theta_2(\varLambda \varTheta \varOmega \textbf{B}-\varDelta \varGamma u_3\theta_3-\varTheta u_1 \theta_{1}) \\
	&\qquad \quad+u_1u_3^{+}\theta_2\theta_2^{+}(u_3^{+}\theta_3^{+}+w\theta_4-\varTheta u_1 \theta_1-\varLambda^{-1}\textbf{B})+u_2u_3^{+}\theta_1^{-}\theta_{2}\theta_2^{+}\\[6pt]
	&\qquad \quad-u_3^{+}u_3^{++}\theta_2 \theta_2^{+}\theta_2^{++} -u_3^{+}\theta_1^{-}\theta_1\theta_1^{+}-u_1u_3^{+}\theta_1^{-}\theta_1\theta_2^{+}\bigr)\\[4pt]
	&\qquad \quad+ \varTheta \varOmega \bigl(u_3^{++}\theta_1\theta_2^{++}\varDelta \textbf{A}+u_2u_3u_3^{++}\theta_2 \theta_3 \theta_3^{++}
	\bigr)
	\Bigr) \mathrm{d}x = 0.
\end{align*}
  	Moreover, we have
  	 		\begin{align*}
  		[F_1,F_2]&=\int \Bigg(\sum_{i=1}^{3}\Big(	\frac{\delta F_1}{\delta \theta_{i}}\frac{\delta F_2}{\delta u_i}+
  		\frac{\delta F_1}{\delta u_i}\frac{\delta F_2}{\delta \theta_{i}}\Big)
  		+
  		\frac{\delta F_1}{\delta \theta_{4}}\frac{\delta F_2}{\delta w}	+
  		\frac{\delta F_1}{\delta w}\frac{\delta F_2}{\delta \theta_{4}}
  		\Bigg)\mathrm{d}x\\
  		&=\int\varTheta\big(u_{3}^{+}w\theta_{1}^{-}\theta_{2}\theta_{2}^{+}-u_{3}^{+}w^{+}\theta_{1}^{+}\theta_{2}\theta_{2}^{+}-u_{3}^{+}\theta_{1}\theta_{1}^{+}\theta_{2}-u_{3}^{+}\theta_{1}^{-}\theta_{1}\theta_{2}^{+}\\&\qquad \qquad-u_{2}^{-}u_{3}^{+}\theta_{2}^{-}\theta_{2}\theta_{2}^{+}+u_{3}^{+}\theta_{1}^{-}\theta_{1}^{+}\theta_{2}-u_1u_{3}^{+}\theta_{1}^{-}\theta_{2}\theta_{2}^{+}+u_1u_{3}^{+}\theta_{1}\theta_{2}\theta_{2}^{+}\\[6pt]&\qquad \qquad
  		+w^{+}\theta_{1}\theta_{1}^{+}(\varDelta u_2 \theta_{2}+\varDelta \varGamma u_3\theta_3+\varGamma w\theta_4)-u_3^{+}\theta_2 \theta_2^{+}(\varTheta u_1\theta_1\\[6pt]&\qquad \qquad+\varLambda \varTheta \varOmega u_3 \theta_3+\varTheta w\theta_4)
  		\big)\mathrm{d}x=0.
  	\end{align*}
  	Theorem \ref{HamP12 structure} is hereby proved.    	
  \end{proof}
  \begin{thm}\label{HamP123 structure}
  	$\mathcal P_{3}$ is a Hamiltonian structure compatible with $\mathcal P_{1}$ and $\mathcal P_{2}$, i.e., $$[F_3, F_3] = [F_1, F_3] =[F_2, F_3]= 0.$$
  \end{thm}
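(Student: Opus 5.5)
We follow the same supervariable route as in Theorems \ref{HamP1 structure} and \ref{HamP12 structure}. The first step is to write the local functional $F_3=\frac12\int\theta_\alpha\mathcal P_3^{\alpha\beta}\theta_\beta\,\mathrm dx$ in a compact normal form. Using $\int\varDelta u\,\mathrm dx=0$ and $\theta\theta=0$, every term is brought to the form $f\,\theta_\alpha\theta_\beta^{[k]}$ with $k\ge 0$.

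The structure of $\mathcal P_3$ in Theorem \ref{thm:HamFormalism-3} suggests grouping terms. Each row is obtained from a row of $\mathcal P_2$ by composing with the operators $u_1$, $\varGamma$, $\varPhi$, $\varOmega$, $\varPsi$ acting on the combinations $u_1\theta_1$, $u_2\theta_2$, $u_3\theta_3$, $w\theta_4$. So we reuse the auxiliary odd elements $\textbf{A},\textbf{B},\textbf{C},\textbf{D}$ from the previous proofs and introduce a few new ones, for instance
\[
\textbf{E}=u_1\theta_1+\varGamma u_2\theta_2+\varPhi u_3\theta_3+\varGamma w\theta_4 .
\]
This is exactly the combination appearing in the expression for $k^{m+2}_{m+2}$ in the proof of Theorem \ref{thm:HamFormalism-3}. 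Written this way, $F_3$ should take the shape $\int(\text{quadratic terms in }\textbf{A},\textbf{B},\textbf{C},\textbf{E})\,\mathrm dx$ plus a short list of explicit monomials.

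Next we compute the eight variational derivatives $\delta F_3/\delta u_i$, $\delta F_3/\delta w$, $\delta F_3/\delta\theta_\alpha$. The $\theta$-derivatives are simply the rows $\mathcal P_3^{\alpha\beta}\theta_\beta$, so they can be read off directly. The $u$-derivatives are obtained by shifting each monomial so that the differentiated variable sits at the unshifted point.

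We then evaluate the three brackets
\[
[F_3,F_3]=2\int\sum_\alpha\frac{\delta F_3}{\delta\theta_\alpha}\frac{\delta F_3}{\delta u^\alpha}\,\mathrm dx,\qquad
[F_a,F_3]=\int\sum_\alpha\Big(\frac{\delta F_a}{\delta\theta_\alpha}\frac{\delta F_3}{\delta u^\alpha}+\frac{\delta F_a}{\delta u^\alpha}\frac{\delta F_3}{\delta\theta_\alpha}\Big)\mathrm dx,\quad a=1,2.
\]
For $a=1,2$ we use the derivatives of $F_1,F_2$ already listed. In each case the goal is to show that the cubic integrand in the $\theta$'s can be written as $\varTheta(\cdots)$ or $\varTheta\varOmega(\cdots)$, i.e.\ as a total difference, so that the integral vanishes.

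To organize the cancellation, we expand everything in monomials $\theta_\alpha^{[i]}\theta_\beta^{[j]}\theta_\gamma^{[k]}$. We normalize each monomial by shifting so that the smallest shift is $0$ and by ordering the indices. Two normalized monomials then coincide precisely when their original integrals are equal, so the integral vanishes if and only if, after normalization, the total coefficient of each distinct monomial is zero. We would organize this by super-degree in each $\theta_\alpha$ and by polynomial degree in $(u_1,u_2,u_3,w)$. The brackets are homogeneous under the grading $\deg u_1=1$, $\deg u_2=2$, $\deg u_3=3$, $\deg w=1$ (inherited from $L$), which splits the check into small independent pieces.

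As a shortcut for $[F_1,F_3]$ and $[F_2,F_3]$, one could first try to exhibit $\mathcal P_3$ as a Lie derivative, for example $F_3$ obtained from $F_2$ along a vector field $X$ with $F_2=\mathrm{Lie}_X F_1$. Graded Jacobi would then transfer compatibility. If no such vector field is found, we fall back to direct computation.

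The main obstacle is $[F_3,F_3]=0$. Here $\mathcal P_3$ is cubic in the fields, with many cross terms of degree up to four, and many monomials differ only by shifts. We would handle it by computer-assisted bookkeeping: expand, normalize shifts, sort the $\theta$ indices, and collect coefficients. We would then display the result compactly as a $\varTheta$-image, in the style of the proof of Theorem \ref{HamP12 structure}.
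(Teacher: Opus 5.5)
Your plan matches the paper's own proof. You write $F_3$ in reduced form with auxiliary odd combinations, compute the eight variational derivatives, and check by computer-assisted direct calculation that $[F_3,F_3]$, $[F_1,F_3]$ and $[F_2,F_3]$ reduce to total differences; the paper asserts exactly this and, per its remark, verifies it by symbolic computation. Two minor notes: your proposed $\textbf{E}$ is identical to the paper's $\textbf{C}=u_1\theta_1+\varGamma(u_2\theta_2+w\theta_4)+\varPhi u_3\theta_3$, already defined for $F_2$, and the paper does not use your optional Lie-derivative shortcut, which in any case would still leave $[F_3,F_3]=0$ to be checked directly, as you note.
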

  \begin{proof}
  Using the property that $\int \varDelta u \, \mathrm{d}x = 0$ holds for all $u \in \hat{\mathcal{A}}$, the local functional $F_3$ associated with $\mathcal{P}_3$ reads
  	\begin{align*}
  		F_3 &= \frac{1}{2} \int (\theta_1, \theta_2, \theta_3, \theta_4) 
  		\mathcal{P}_{3}
  		(\theta_1, \theta_2, \theta_3, \theta_4)^\mathrm{T} \, \mathrm{d}x \\[6pt]
  		&= \int \biggl(\Bigl( (w + w^- - 2u_1^-)  \mathbf{B} 
  		+ w u_1 \theta_1 
  		- w^- u_3 \theta_3 
  		- u_2 \theta_1 + u_1^- u_3 \theta_3 \Bigr) \mathbf{B}^- \\
  		&\qquad \quad
  		+\Bigl( w(-\varTheta) u_1 \theta_1 
  		- w^- \mathbf{B}^{--} 
  		+ \varGamma\bigl( u_2^- \theta_1^{--} + u_3^- \theta_2^{--} \bigr) 
  		- w^- u_1^{--} \theta_1^{--} 
  		\\[4pt]
  		&\qquad\quad- u_2 \theta_1  - u_1^2 \theta_1 + u_1^- u_3 \theta_3 \Bigr)  \mathbf{B} 
  		- 2u_1^{--} u_3 \theta_3 \mathbf{B}^{--} 
  		+ w^{--} u_3 \theta_3 \mathbf{C}^{---} 
  		- u_3 \theta_2 \mathbf{C}^{--} 
  		\\[4pt]
  		&\qquad \quad- w u_1^- \theta_1^- u_1 \theta_1 
  		+ w^- w \theta_4 u_3 \theta_3  + w^- u_2 \theta_2 u_3 \theta_3 
  		- w^- u_1^{-} \theta_1^{-} u_3 \theta_3  
  		+ u_1^- \theta_1^- u_2 \theta_1 
  		\\[6pt]
  		&\qquad \quad+ u_1^{--} u_3^- \theta_3^- u_3 \theta_3 
  		+ u_3 \theta_2^{-} u_1 \theta_1 
  		+ u_3 \theta_3 \bigl( u_2^{-} \theta_1^- - u_2^{--} \theta_1^{---} - u_3^{--} \theta_2^{---} +u_2^{-}\theta_4^{-}\\[6pt]
  		&\qquad \quad-u_2 \theta_4 \bigr) + u_3 \theta_4 \theta_4^- 
  		- u_3 \theta_1^{--} \theta_4 
  		+ u_1^- u_3 \theta_2 \theta_4^- 
  		- u_1^- u_3 \theta_2^- \theta_4 
  		- u_3 \theta_4^- \theta_4+ u_2 \theta_1^- u_1 \theta_1 \\
  		&\qquad \quad + u_1^- u_1^- \theta_1^- u_2 \theta_2  
  		+ u_1^{--} u_1^{--} \theta_1^{--} u_3 \theta_3 
  		+ u_1^{--} u_1^{--} \theta_1^{--} w \theta_4 
  		\biggr) \, \mathrm{d}x.
  	\end{align*}
  	Then the variational derivatives of $F_3$ are given by 
  	\begin{align*}
  		\frac{\partial F_3}{\partial \theta_1}
  		&= u_1^2 \varDelta \textbf{A}+ (u_1 w - u_2) \textbf{C}^- + (u_2^+ - u_1 w^+)\textbf{C}^+ 
  		- u_1 u_2 \theta_1^- - u_1 u_3 \theta_2^- + u_3 \theta_4^- \\
  		&\quad+ u_1 u_2^+ \theta_1^+ + u_1 u_3^{++} \theta_2^{++} + u_1^2 \varLambda^2 \varTheta w \theta_4, \\[6pt]
  		\frac{\partial F_3}{\partial u_1}
  		&=(2 u_1 \theta_1 + u_1^+ \theta_3^+) \varDelta \textbf{A} - 2 \left( u_2^+ \theta_2^+ + u_3^{++} \theta_3^{++} \right) \textbf{B} +w \theta_1 \textbf{C}^- - w^+ \theta_1 \textbf{C}^+ 
  		+ u_2 \theta_1^- \theta_1 \\
  		&\quad- u_3 \theta_1 \theta_2^- + u_2^+ \theta_1 \theta_1^+ + u_3^{+} \theta_2^+ \theta_4 - u_3^{+} \theta_2 \theta_4^+ + u_3^{++} \theta_1 \theta_2^{++}- 2 u_1 \theta_1 \varLambda^2 \varTheta w \theta_4, \\[6pt]
  		\frac{\partial F_3}{\partial \theta_2}
  		&= (u_2 w^- - u_3) \textbf{C}^{--}- u_2 w (-\varTheta) \textbf{C} + (u_3^+ - u_2 w^+) \textbf{C}^+ 
  		+ u_1 u_2 \textbf{E} - u_1^- u_2 \textbf{F} \\
  		&\quad + u_2 \varGamma \textbf{I} - u_3^{-} u_3 \theta_2 \theta_3^- + u_1^- u_3 \theta_4^- - u_1 u_3^+ \theta_4^+ + u_3^- u_3 \theta_3^-, \\[6pt]
  		\frac{\partial F_3}{\partial u_2}
  		&= w^- \theta_2 \textbf{C}^{--}- \theta_1 \textbf{C}^- +   \left(w \theta_2 \varTheta 
  		+ \theta_1^- \right)\textbf{C}  - w^+ \theta_2 \textbf{C}^+ + u_1 \theta_2 \textbf{E} - u_1^- \theta_2 \textbf{F}\\
  		&\quad + \theta_2 \varGamma \textbf{I} - \theta_4 \varDelta u_3 \theta_3, \\[6pt]
  		\frac{\partial F_3}{\partial \theta_3}
  		&=u_1^- u_3 (-\varTheta) \textbf{A}+ u_3 w^{--} \textbf{C}^{---}  + u_3 w \varTheta \textbf{C}
  		- u_3 w^+ \textbf{C}^+ + u_1 u_3 \textbf{E} \\
  		&\quad- u_1^{--} u_3 \textbf{F}^- + u_3 w^- \textbf{G} + u_3 \textbf{H}, \\[6pt]
  		\frac{\partial F_3}{\partial u_3}
  		&= w^{--} \theta_3 \textbf{C}^{---} - \theta_2 \textbf{C}^{--} +\left( w \theta_3 \varTheta 
  		+ \theta_2^- \right)\textbf{C} - w^+ \theta_3 \textbf{C}^+ + u_1 \theta_3 \textbf{E} - u_1^{--} \theta_3 \textbf{F}^- \\
  		&\quad+ w^- \theta_3 \textbf{G} + \theta_3 \textbf{H}
  		+ \theta_1 \theta_4^- - \theta_1^{--} \theta_4 + u_1^- \theta_2 \theta_4^- - u_1^- \theta_2^- \theta_4 - \theta_4^- \theta_4 - u_3^{+} \theta_2^- \theta_3^+,\\[6pt]
  		\frac{\partial F_3}{\partial \theta_4}
  		&= w^- w \textbf{C}^{--}  + w^2 \varTheta \textbf{C}- w w^+ \textbf{C}^+
  		+ u_1 w \textbf{E} - u_1^- w \textbf{F} + w \varGamma \textbf{I} + u_3 \theta_1^{--}
  		- u_1^{--} u_1^{--} w \theta_1^{--}
  		\\
  		&\quad+ u_1^- u_3 \theta_2^- + u_3 \theta_4^- - u_3^+ \theta_1^+
  		- u_1 u_3^+ \theta_2^+ - u_2 u_3^+ \theta_3^+ - u_3^+ \theta_4^+ + u_1^- u_1^- w \theta_1^-, \\[6pt]
  		\frac{\partial F_3}{\partial w}
  		&= \left( u_3^- \theta_3^- + u_1^- \theta_1^- \right) \left(w^{+} \theta_4^{+}-\varGamma \textbf{B} \right)+ w^- \theta_4 \textbf{C}^{--} 
  		+ ( w \theta_4 + w^+ \theta_4^+ + u_3^{++} \theta_3^{++} \\
  		&\quad+ u_1 \theta_1 ) \textbf{C}^-
  		- \left( w \theta_4 + w^- \theta_4^- \right) \textbf{C}- w^+ \theta_4 \textbf{C}^+
  		+ u_1 \theta_4 \textbf{E} - u_1^- \theta_4 \textbf{F}
  		- u_2^- \theta_2^- u_2^+ \theta_2^+
  		\\[3pt]
  		&\quad + w^+ \theta_4 w^- \theta_4^- + w^- \theta_4^- u_3^{++} \theta_3^{++}
  		- u_1 \theta_1 w \theta_4 
  		- \left( u_3 \theta_3 + u_3^+ \theta_3 \right) \varDelta \varOmega u_2 \theta_2\\[4pt]
  		&\quad+ \left( u_2^+ \theta_2^+ - u_2^- \theta_2^- \right) \left( w \theta_4 + u_2 \theta_2 \right)
  		+ (u_1^{--})^2 \theta_1^{--} \theta_4 + u_2^- \theta_1^{--} \theta_4
  		+ u_2 \theta_1^{-} \theta_4 - u_2 \theta_1 \theta_4
  		\\[3pt]
  		&\quad	+ u_3^- \theta_2^{--} \theta_4 + u_3 \theta_2^- \theta_4
  		- u_2^+ \theta_1^+ \theta_4 - u_3^+ \theta_2^+ \theta_4
  		- u_3^{++} \theta_2^{++} \theta_4 + u_1^- \theta_4 u_1^- \theta_1^-,
  	\end{align*}
  	where
  	\begin{align*}
  		\textbf{E} &= u_1 \theta_1 + 2\varLambda \left( u_2 \theta_2 + w \theta_4 \right) + (\varLambda + 2\varLambda^2) u_3 \theta_3, \\[6pt]
  		\textbf{F} &= u_1^- \theta_1^- + 2\varLambda^{-1} \left( u_2 \theta_2 + w \theta_4 \right) + (1 + 2\varLambda^{-1}) u_3 \theta_3, \\[6pt]
  		\textbf{G} &=\varLambda^{-1} \varTheta u_1 \theta_1 + \varTheta \varOmega \left( u_2 \theta_2 + w \theta_4 \right)-\varDelta \varPsi u_3 \theta_3, \\[6pt]
  		\textbf{H} &= \varLambda^{-1}\varPhi u_2 \theta_1 + \varPhi \left( u_3 \theta_2 - u_2^{--} \theta_1^{---} - u_3^{--} \theta_2^{---} \right) + \varTheta u_2 \theta_4, \\[6pt]
  		\textbf{I} &= u_2 \theta_1 - u_2^- \theta_1^{--} + u_3^+ \theta_2^+ - u_3^- \theta_2^{--}.
  	\end{align*}      
  	Through direct calculation and using the identities $\int \varDelta u \td x=0$ for all $u\in\hat\mcalA$ and
  	$\theta \theta=0$ for all $\theta\in\hat{\mcalA}^1$, we obtain
  	\begin{align*}
  		[F_3,F_3]
  		= 2\int \Biggl(
  		\frac{\delta F_3}{\delta \theta_{1}}\frac{\delta F_3}{\delta u_{1}}
  		+\frac{\delta F_3}{\delta \theta_{2}}\frac{\delta F_3}{\delta u_{2}}
  		+\frac{\delta F_3}{\delta \theta_{3}}\frac{\delta F_3}{\delta u_{3}}
  		+\frac{\delta F_3}{\delta \theta_{4}}\frac{\delta F_3}{\delta w}
  		\Biggr) \mathrm{d}x =0.
  	\end{align*}
  	Together with straightforward calculations, we have
  	\begin{align*}
  		[F_1,F_3]=\int \Biggl(\sum_{i=1}^{3}\Big(	\frac{\delta F_1}{\delta \theta_{i}}\frac{\delta F_3}{\delta u_i}+
  		\frac{\delta F_1}{\delta u_i}\frac{\delta F_3}{\delta \theta_{i}}\Big)
  		+
  		\frac{\delta F_1}{\delta \theta_{4}}\frac{\delta F_3}{\delta w}	+
  		\frac{\delta F_1}{\delta w}\frac{\delta F_3}{\delta \theta_{4}}
  		\Biggr)\mathrm{d}x=0
  	\end{align*}
  	and
  	\begin{align*}
  		[F_2,F_3]=\int \Biggl(\sum_{i=1}^{3}\Big(	\frac{\delta F_2}{\delta \theta_{i}}\frac{\delta F_3}{\delta u_i}+
  		\frac{\delta F_2}{\delta u_i}\frac{\delta F_3}{\delta \theta_{i}}\Big)
  		+
  		\frac{\delta F_2}{\delta \theta_{4}}\frac{\delta F_3}{\delta w}	+
  		\frac{\delta F_2}{\delta w}\frac{\delta F_3}{\delta \theta_{4}}
  		\Biggr)\mathrm{d}x=0.
  	\end{align*}
  This completes the proof of Theorem~\ref{HamP123 structure}. 
  \end{proof}
 
 \begin{rmk}
 	The proof for Theorem~\ref{HamP123 structure} is analogous to those of Theorems~\ref{HamP1 structure} and~\ref{HamP12 structure}. However, the three Schouten brackets involved in Theorem~\ref{HamP123 structure} each contain tens of thousands of terms; we therefore omit the computational details. Nevertheless, all calculations for Theorems~\ref{HamP1 structure}--\ref{HamP123 structure} have been carried out using symbolic computation, which confirms the validity of the tri-Hamiltonian structure for the $(3,1)$-type gAL hierarchy.
 \end{rmk}
    		\subsection{Central invariants of the associated bi-Hamiltonian structures}
Hydrodynamic-type bi-Hamiltonian structures are intrinsically connected to 2D topological field theory and Gromov–Witten invariants. To better understand the corresponding bihamiltonian integrable hierarchies, Dubrovin and Zhang initiated a systematic classification of deformations of bi-Hamiltonian structures in their foundational work \cite{Normal forms}. This framework was later completed by Dubrovin, Liu and Zhang, who introduced central invariants for any deformation of a given semisimple hydrodynamic-type bi-Hamiltonian structure \cite{c1,c2}. More precisely, an $n$-dimensional semisimple Frobenius manifold corresponds to a family parameterized by $n$ single-variable functions, which are precisely the associated central invariants.

    We begin by recalling the definition of central invariants. 
    Let $M$ be an $n$-dimensional smooth manifold with local coordinates $\mathbf{u}=(u^1,\dots, u^n)$, and let $(\mathcal{Q}_1,\mathcal{Q}_2)$ be a bi-Hamiltonian structure on the jet space $J^\infty(M)$ with components
    \begin{align*}
    	\mathcal{Q}_a^{ij} &=
    	\sum_{k\geq 0}\varepsilon^k
    	\left(
    	\sum_{m=0}^{k+1}
    	\mathcal{Q}_{k,m;a}^{ij}
    	\partial_x^m
    	\right)
    	\\
    	&=
    	g_a^{ij}(\mathbf{u})\partial_x + \Gamma_{a;\gamma}^{ij}(\mathbf{u}) u_x^\gamma
    	+\varepsilon\left(
    	H_a^{ij}(\mathbf{u})\partial_x^2 + \cdots
    	\right)
    	+\varepsilon^2\left(
    	K_a^{ij}(\mathbf{u})\partial_x^3+\cdots
    	\right)
    	+O(\varepsilon^3),
    \end{align*}
    where $a=1,2$, and $\mathcal{Q}_{k,m; a}^{ij}$ are homogeneous differential polynomials of degree $k+1-m$ on $J^\infty(M)$. The leading coefficients $g_a^{ij}(\mathbf{u})$ define two flat non-degenerate contravariant metrics on $M$, 
    and $\Gamma_{a;\gamma}^{ij}(\mathbf{u})$ are the corresponding contravariant Christoffel symbols 
    of the Levi-Civita connections associated with the metrics $g_a^{ij}(\mathbf{u})$.  The bi-Hamiltonian structure $(\mathcal{Q}_1,\mathcal{Q}_2)$ has a semisimple hydrodynamic limit 
    if the characteristic equation
  $\det(g_2^{ij}- \lambda g_1^{ij})=0$
    possesses $n$ distinct, non-constant eigenvalues $(\lambda^1, \dots, \lambda^n)$ near a generic point of $M$. 
    These roots form a system of local coordinates on $M$, 
    known as the canonical coordinates of the semisimple bi-Hamiltonian structure \cite{Ferapontov}.
    
     An equivalent characterization of these canonical coordinates is given by the dispersionless Lax superpotential $\lambda(p)$, which admits $n$ pairwise distinct critical points $p_1, \dots, p_n$ satisfying $\lambda'(p)\big|_{p=p_k} = 0$ for all $k=1,\dots,n$, and whose corresponding critical values $\lambda^i = \lambda(p_i)$ coincide precisely with the canonical coordinates of the semisimple bi-Hamiltonian structure. In the canonical coordinates, the flat metrics $g_1$ and $g_2$ take the following diagonal form
    \begin{equation}\label{fi}				
    	g_1^{ij}(\lambda) = \delta^{ij}f^i, \quad g_2^{ij}(\lambda)  = \delta^{ij}\lambda^i f^i,
    \end{equation}
    where $\delta^{ij}$ denotes the Kronecker delta symbol, $f^i$ are smooth functions on $M$, and
    	\begin{equation*}
    	g_a^{ij}(\lambda) = \frac{\partial \lambda^i}{\partial u^k} g_a^{kl}(\mathbf{u}) \frac{\partial \lambda^j}{\partial u^l}, \quad a = 1,2.
    \end{equation*}	  
    Let the Jacobian matrix be		$J=(J^i_\afa)=\left(\pfrac{\lmd^i}{u^\afa}
    \right)$, then
    	\[J g_1 J^{\mathrm{T}}=
    \begin{pmatrix}
    	f^1 & 0 & \cdots & 0 \\
    	0 & f^2 & \cdots & 0 \\
    	\vdots & \vdots & \ddots & \vdots \\
    	0 & 0  & \cdots & f^n 
    \end{pmatrix} ,\qquad
  J g_2 J^{\mathrm{T}}=
    \begin{pmatrix}
    	\lambda^1 f^1 & 0 & \cdots & 0 \\
    	0 & \lambda^2 f^2 & \cdots & 0 \\
    	\vdots & \vdots & \ddots & \vdots \\
    	0 & 0  & \cdots & \lambda^n f^n 
    \end{pmatrix}.
    \]
   
   Further, we are particularly concerned with the leading coefficients of the first-order and second-order deformation terms  
    	$ H_a=(H_a^{ij})$,  $K_a=(K_a^{ij})$, 
    		which satisfy the skew-symmetry $H_a^{ij} = -H_a^{ji}$ and symmetry $K_a^{ij} = K_a^{ji}$, respectively. Their coordinate transformations to the canonical coordinates are given by
    		\begin{equation*}
    		H_a^{ij}(\lambda) = \frac{\partial \lambda^i}{\partial u^k} 	H_a^{kl}(\mathbf{u}) \frac{\partial \lambda^j}{\partial u^l},\quad
    			K_a^{ij}(\lambda) = \frac{\partial \lambda^i}{\partial u^k} K_a^{kl}(\mathbf{u}) \frac{\partial \lambda^j}{\partial u^l}, \quad a = 1,2.
    		\end{equation*}
    		Then the central invariants of the semisimple bi-Hamiltonian structure $(\mathcal{Q}_{1},\mathcal{Q}_{2})$ are defined as
    	\begin{equation}\label{central invariants}
    		c_i(\lambda) = \frac{1}{3(f^i)^2}
    		\left( K_2^{ii}(\lambda) - \lambda^iK_1^{ii}(\lambda) +
    		\sum_{k\neq i} \frac{( H_2^{ki}(\lambda) - \lambda^iH_1^{ki}(\lambda))^2}{f^k (\lambda^k-\lambda^i)} \right)
    	\end{equation}		
    	with respect to the canonical coordinates $(\lambda^1,\dots,\lambda^n)$. To simplify the calculation, we follow the method in \cite{c3} and introduce the parameter $\xi$ along with the following notation
    \begin{equation*}
    	g_\xi := g_2 - \xi g_1, \qquad H_\xi := H_2 - \xi H_1, \qquad K_\xi := K_2 - \xi K_1,
    \end{equation*}
    and the tensor
    \begin{equation*}
    	A_\xi := K_\xi + \frac{1}{2} H_\xi^{\mathrm{T}} g_\xi^{-1} H_\xi.
    \end{equation*}
    Then the central invariants \eqref{central invariants} admit the following equivalent representation 
    \begin{equation}
    	c_i (\lambda)= -\frac{1}{3 f^i} \operatorname*{Res}_{\xi=\lambda^i} \bigl( g_\xi^{-1} A_\xi \bigr).
    \end{equation}
    
    An interesting result is that the equivalence classes of infinitesimal deformations of semisimple hydrodynamic-type bi-Hamiltonian structures under a Miura transformation are uniquely determined by the central invariants \cite{unobstructed}. In other words, two deformations of a semisimple bi-Hamiltonian structure $(\mcalP_1,\mcalP_2)$ are equivalent under a Miura transformation if and only if they have the same central invariants. In addition, integrable systems with central invariants equal to $\frac{1}{24}$ are well-behaved: besides having a bi-Hamiltonian structure, it can be proved that they admit a tau function, and satisfy Virasoro symmetry, etc. An important example is the bi-Hamiltonian integrable hierarchy governed by the cohomological field theory associated with a semisimple Frobenius manifold. In this case, all central invariants of its bi-Hamiltonian structure are equal to $\frac{1}{24}$ \cite{c1,c2,Central invariants}. Such a bi-Hamiltonian structure can be viewed as a topological deformation of its dispersionless limit \cite{Normal forms}.
     	
  Next, we calculate all central invariants of the bi-Hamiltonian structures $(\mathcal{P}_i, \mathcal{P}_j)$ ($1 \leq i < j \leq 3$) of the $(3,1)$-type gAL hierarchy; the results are summarized in Theorem \ref{cv-table}. As a concrete example, we provide the detailed proof for the bi-Hamiltonian pair $(\mathcal{P}_1, \mathcal{P}_2)$.
    	
    	\begin{thm}\label{central-invariants thm}
    		The central invariants of the bi-Hamiltonian structure $(\mathcal P_{1},\mathcal P_{2})$ given by \eqref{HamP1}--\eqref{HamP2} for the $(3,1)$-type gAL hierarchy are all equal to $\frac{1}{24}$, i.e.,
    		\begin{align}
    			c_{1}=c_{2}=c_{3}=c_{4}=\frac{1}{24}.
    		\end{align}
    	\end{thm}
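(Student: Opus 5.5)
The plan is to expand $\mathcal P_1,\mathcal P_2$ in $\veps$ and compute the central invariants through the residue formula $c_i=-\frac{1}{3f^i}\operatorname*{Res}_{\xi=\lambda^i}(g_\xi^{-1}A_\xi)$. With $\varLambda=\mathrm e^{\veps\partial_x}$, each entry of \eqref{HamP1} and \eqref{HamP2} is a difference operator with coefficients in $u_1,u_2,u_3,w$. Expanding, for instance,
\begin{equation*}
\varLambda f - f\varLambda^{-1} = 2f\,\veps\partial_x + \veps f_x + \veps^2\bigl(f_x\partial_x + f_{xx}/2\bigr)\cdots,
\end{equation*}
and dividing the whole operator by $\veps$, we get the form $g_a\partial_x+\Gamma_a u_x+\veps(H_a\partial_x^2+\cdots)+\veps^2(K_a\partial_x^3+\cdots)$. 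Only the leading symbols $g_a,H_a,K_a$ are needed. These come from the symbol of each entry: replace $\varLambda$ by $\mathrm e^{\veps p}$, treat the coefficients as frozen, and read off the coefficients of $p$, $p^2$, $p^3$. For example, the $(1,1)$ entry of $\mathcal P_1$ contributes $2w$, $0$ and $w/3$ to $g_1$, $H_1$ and $K_1$ respectively. We will tabulate all sixteen entries of both matrices in this way. The matrices $g_1$ and $g_2$ are the dispersionless Hamiltonian structures of Theorem~\ref{thm:HamFormalism-1} and Theorem~\ref{thm:HamFormalism-2}. The matrix $H_a$ is skew, and it vanishes whenever an entry is odd under $\varLambda\to\varLambda^{-1}$ combined with the adjoint; this cuts the work considerably.

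Next we find the canonical coordinates. The dispersionless Lax superpotential is
\begin{equation*}
\lambda(p)=\frac{z+u_1+u_2z^{-1}+u_3z^{-2}}{1+wz^{-1}},\qquad z=\mathrm e^{p}.
\end{equation*}
Its critical points in $z$ satisfy a quartic equation, which gives four critical points $z_1,\dots,z_4$ and four critical values $\lambda^i=\lambda(z_i)$. Rather than solving for the $\lambda^i$ explicitly, we parametrize $M$ by the $z_i$, or use implicit differentiation, $\partial\lambda^i/\partial u^\alpha=\partial_{u^\alpha}\lambda(z)\big|_{z=z_i}$, valid because $\lambda'(z_i)=0$. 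This gives the Jacobian rows $J^i=\bigl(1,z_i^{-1},z_i^{-2},-z_i^{-1}\lambda^i\bigr)/(1+wz_i^{-1})$. With these rows we compute $f^i=(Jg_1J^{\mathrm T})^{ii}$ and verify diagonality together with $Jg_2J^{\mathrm T}=\operatorname{diag}(\lambda^if^i)$. This check doubles as a consistency test on the symbol computations.

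For the central invariants we use the $\xi$-pencil rather than formula \eqref{central invariants} directly. All the tensors $g_\xi,H_\xi,K_\xi$ are bilinear in symbols of the form $\sum_k c_k(\mathbf u)\mathrm e^{k\veps p}$. The cleanest route is to work with the full symbol of $\mathcal P_2-\xi\mathcal P_1$: up to the $\veps$-normalization it is a Laurent polynomial matrix $P_\xi(z)$ in $z$. The quantities $g_\xi$, $H_\xi$, $K_\xi$ are then its Taylor coefficients at $p=0$ in the variable $p=\log z$. By the standard argument of \cite{c3}, the residue at $\xi=\lambda^i$ only involves the one-dimensional kernel direction $J^i$ of $g_{\lambda^i}$. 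After contracting with $J^i$, the expression should collapse to a term proportional to the third derivative in $p$ of the scalar function $J^iP_\xi(\mathrm e^{p})(J^i)^{\mathrm T}$ at $p=0$, together with the square of the mixed $H$-term.

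The main obstacle is this residue computation. The $H$-correction $\frac12H_\xi^{\mathrm T}g_\xi^{-1}H_\xi$ has poles at all four $\xi=\lambda^k$, and it does not obviously simplify. My first attempt is symbolic computation in the parametrization by $(z_1,\dots,z_4)$, where all the quantities are rational. After that I would check that $c_i\cdot f^i$ reduces to $f^i/24$ identically. If the direct simplification stalls, the fallback is to verify the identity at random numerical points with exact rational arithmetic. Since the $c_i$ must be functions of $\lambda^i$ alone, a constant value on generic points, combined with symmetry under permuting the four critical points, yields $c_i=\frac1{24}$ for all $i$.
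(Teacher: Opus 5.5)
Your overall route is the paper's. You read off $g_a,H_a,K_a$ as the $p$, $p^2$, $p^3$ coefficients of the frozen symbols of $\veps^{-1}\mathcal P_a$, take the critical values of $\lambda$ as canonical coordinates, compute $f^i$, and evaluate the central-invariant formula by computer algebra. Your symbol bookkeeping is correct and reproduces the paper's $\eta$, $H_1$, $K_1$. Using the $\xi$-residue form instead of \eqref{central invariants} is a legitimate shortcut, since it needs only $\lambda^i$, $f^i$ and data rational in $u$.

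However, the vehicle you choose for the hard step fails as stated: the critical points $z_1,\dots,z_4$ are not coordinates on $M$. The equation $\lambda'(z)=0$ reads
\begin{equation*}
z^4+2wz^3+(u_1w-u_2)z^2-2u_3z-u_3w=0,
\end{equation*}
so the elementary symmetric functions of the roots are $e_1=-2w$, $e_2=u_1w-u_2$, $e_3=2u_3$, $e_4=-u_3w$. This forces the relation $4e_4=e_1e_3$. Moreover, the shift $(u_1,u_2)\mapsto(u_1+c,\,u_2+cw)$ adds $c$ to $\lambda$ and fixes every $z_i$. Hence $u\mapsto(z_1,\dots,z_4)$ has rank $3$, and no computation can be done ``in the parametrization by $(z_1,\dots,z_4)$''.

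The repair is easy. The quartic is linear in $u_2$, so parametrize by $(u_1,u_3,w,z_1)$ with
\begin{equation*}
u_2=u_1w+\frac{z_1^4+2wz_1^3-2u_3z_1-u_3w}{z_1^2}.
\end{equation*}
In these variables $\lambda^1$, $J^1$, $f^1$ and the residue at $\xi=\lambda^1$ are all rational. The same rational expression, evaluated on the whole hypersurface $\{Q(z;u)=0\}$, then gives $c_2,c_3,c_4$. Equivalently, you can work in $\mathbb{Q}(u_1,u_2,u_3,w)[z]$ modulo the quartic. This parametrization is also what your ``exact rational arithmetic'' fallback needs, because at random rational $u$ the $z_i$ and $\lambda^i$ are irrational.

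That fallback is not a proof in any case. Knowing that $c_1$ depends only on $\lambda^1$ does not let finitely many evaluations show it is constant. You need either the symbolic identity $c_1\equiv\frac1{24}$ as a rational function of $(u_1,u_3,w,z_1)$, or an a priori degree bound that turns point evaluations into an identity test. The paper's verification is symbolic: roots of $\det(g_2-Xg_1)$, the full Jacobian, and transformation of $H_a,K_a$ to canonical coordinates.
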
	
    	\begin{proof}	
    		After the rescaling $\mcalP_a\mapsto\frac{1}{\veps}\mcalP_a$, $a=1,2$,
    		the contravariant metrics $\eta:=g_1$ and
    		$g:=g_2$ corresponding to $(\mcalP_1, \mcalP_2)$
    		have the local form
    			\begin{gather}\label{flat pencil}
    			\renewcommand{\arraystretch}{1.5}
    			\small
    		\eta=
    			\begin{pmatrix}
    				2u^4 & u^2 & 2u^3 & u^4 \\
    				u^2 & 2u^3 & u^3 u^4 & 0 \\
    				2u^3 & u^3 u^4 & 0 & 0 \\
    				u^4 & 0 & 0 & 0
    			\end{pmatrix}, ~
    			g=
    			\begin{pmatrix}
    				2u^2 & 3u^3 + u^1 u^2 & 2u^1 u^3 & u^1 u^4 \\
    				u^1 u^2 + 3u^3 & 2u^1 u^3 + 2(u^2)^2 & 4u^2 u^3 & 2u^2 u^4 \\
    				2u^1 u^3 & 4u^2 u^3 & 6(u^3)^2 & 3u^3 u^4 \\
    				u^1 u^4 & 2u^2 u^4 & 3u^3 u^4 & 2(u^4)^2
    			\end{pmatrix}
    		\end{gather}
    		with respect to the local coordinates $(u^1, u^2, u^3,u^{4}):=(u_{1},u_{2},u_{3},w)$,
    		and by further calculations,
    		the corresponding coefficients $H_1$ and $K_1$ satisfy
    		\begin{gather}
    			\renewcommand{\arraystretch}{1.5}
    			\small
    			H_1 =
    			\begin{pmatrix}
    				0 & \frac{1}{2}u^2 & 2u^3 & \frac{1}{2}u^4 \\
    				-\frac{1}{2}u^2 & 0 & \frac{1}{2}u^3 u^4 & 0 \\
    				-2u^3 & -\frac{1}{2}u^3 u^4 & 0 & 0 \\
    				-\frac{1}{2}u^4 & 0 & 0 & 0
    			\end{pmatrix},
    			\quad
    			K_1 =
    			\begin{pmatrix}
    				\frac{1}{3}u^4 & \frac{1}{6}u^2 & \frac{4}{3}u^3 & \frac{1}{6}u^4 \\
    				\frac{1}{6}u^2 & \frac{1}{3}u^3 & \frac{1}{6}u^3 u^4 & 0 \\
    				\frac{4}{3}u^3 & \frac{1}{6}u^3 u^4 & 0 & 0 \\
    				\frac{1}{6}u^4 & 0 & 0 & 0
    			\end{pmatrix}.
    		\end{gather}
    		Similarly, $H_2$ and $K_2$ can be obtained via a straightforward calculation, whose explicit expressions we omit here.

    		The canonical coordinates $(\lmd^1,\lmd^2,\lmd^3,\lmd^4)$ of $(\mcalP_1, \mcalP_2)$
    		are the four distinct roots of the polynomial 
    $F(X)=\det\left(g_2- X g_1\right)$,
    	and then the Jacobian $J = (J^i_\alpha) = \left( \frac{\partial \lambda^i}{\partial u^\alpha} \right)$ can be computed directly for all $1 \leq i \leq 4$.	 		
    		The diagonal coefficients $f^i$ \eqref{fi} can be computed via
    		\begin{align*}
    			\begin{pmatrix}
    				f^1&0&0&0\\
    				0&f^2&0&0\\
    				0&0&f^3&0\\
    				0&0&0&f^4
    			\end{pmatrix}=
    			J
    				\begin{pmatrix}
    				2u^{4}&u^{2}&2u^{3}&u^{4}\\
    				u^{2}&2u^{3}&u^{3}u^{4}&0\\
    				2u^{3}&u^{3}u^{4}&0&0\\
    				u^{4}&0&0&0
    			\end{pmatrix}
    			J^\mathrm{T}.
    		\end{align*}
    		Since $H_a^{ij}$ and $K_a^{ij}$ are coefficients of $(2,0)$-tensor fields, they can also be transformed to the canonical coordinates via the Jacobian $J=(J^i_\afa)$.
    		According to the definition of central invariants \eqref{central invariants},
    		we calculate all central invariants of the bi-Hamiltonian structure $(\mathcal{P}_{1},\mathcal{P}_{2})$ directly and obtain
    		$$
    		c_{1}=c_{2}=c_{3}=c_{4}=\frac{1}{24}.
    		$$
    		Theorem \ref{central-invariants thm} is proved.
    	\end{proof}
    	In a similar manner, we compute all central invariants of the other bi-Hamiltonian structures $(\mcalP_i, \mcalP_j)$ and obtain the following theorem.
\begin{thm}\label{cv-table}
	The central invariants of the bi-Hamiltonian structures $(\mathcal{P}_i, \mathcal{P}_j)$ of the asymmetric gAL hierarchy of type (3,1) are given by 
	
	\begin{table}[htbp]
		\centering
		\captionsetup{labelfont=bf} 
		\caption{Central invariants for bi-Hamiltonian pairs $(\mathcal{P}_i,\mathcal{P}_j)$.}
		\vspace{2pt}
		\begin{tabular}{lcccc}
			\toprule
			$(\mathcal{P}_i, \mathcal{P}_j)$ & $c_1$ & $c_2$ & $c_3$ & $c_4$ \\
			\midrule
			$(\mathcal{P}_1, \mathcal{P}_2)$ & $\dfrac{1}{24}$ & $\dfrac{1}{24}$ & $\dfrac{1}{24}$ & $\dfrac{1}{24}$    \\[12pt]
			$(\mathcal{P}_2, \mathcal{P}_1)$ & $-\dfrac{1}{24\lambda^1}$ & $-\dfrac{1}{24\lambda^2}$ &
			$-\dfrac{1}{24\lambda^3}$ & $-\dfrac{1}{24\lambda^4}$ \\[12pt]
			$(\mathcal{P}_1, \mathcal{P}_3)$ & $\dfrac{1}{48\sqrt{\lambda^1}}$ & $\dfrac{1}{48\sqrt{\lambda^2}}$ &
			$\dfrac{1}{48\sqrt{\lambda^3}}$ & $\dfrac{1}{48\sqrt{\lambda^4}}$  \\[12pt]
			$(\mathcal{P}_3, \mathcal{P}_1)$ & $\dfrac{1}{48\sqrt{\lambda^1}}$ & $\dfrac{1}{48\sqrt{\lambda^2}}$ &
			$\dfrac{1}{48\sqrt{\lambda^3}}$ & $\dfrac{1}{48\sqrt{\lambda^4}}$ \\[12pt]
			$(\mathcal{P}_2, \mathcal{P}_3)$ & $\dfrac{1}{24\lambda^1}$ & $\dfrac{1}{24\lambda^2}$ &
			$\dfrac{1}{24\lambda^3}$ & $\dfrac{1}{24\lambda^4}$ \\[12pt]
			$(\mathcal{P}_3, \mathcal{P}_2)$ & $-\dfrac{1}{24}$ & $-\dfrac{1}{24}$ & $-\dfrac{1}{24}$ & $-\dfrac{1}{24}$ \\[12pt]
			\bottomrule
		\end{tabular}
	\end{table}
\end{thm}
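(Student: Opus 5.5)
The plan is to reduce as much as possible to the pair $(\mathcal P_1,\mathcal P_2)$, whose central invariants Theorem~\ref{central-invariants thm} already gives. The remaining computations then only need the leading dispersive coefficients $H_3,K_3$ of $\mathcal P_3$.

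\textbf{Step 1: the reversed pair $(\mathcal P_2,\mathcal P_1)$.} I will read $c_i$ off formula \eqref{central invariants} directly, without recomputing any tensors. The roles of $g_1,g_2$ swap, so the canonical coordinates become $\mu^i=1/\lambda^i$ and the diagonal coefficients become $\tilde f^i=\lambda^i f^i$ in the old coordinates. The combinations change as
\[
K_1-\mu^iK_2=-\tfrac{1}{\lambda^i}\bigl(K_2-\lambda^iK_1\bigr),\qquad
H_1-\mu^iH_2=-\tfrac{1}{\lambda^i}\bigl(H_2-\lambda^iH_1\bigr).
\]
Since each $c_i$ is a density attached to the $i$-th canonical coordinate, I then pass from $\lambda^i$ to $\mu^i$. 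For this I use the transformation rule $c_i\mapsto c_i\,(d\mu^i/d\lambda^i)^{-1}\cdots$, obtained by tracking how $f^i,H^{ki},K^{ii}$ scale under the diagonal Jacobian $d\mu^i/d\lambda^i=-(\lambda^i)^{-2}$. I expect this to produce $c_i=-\frac{1}{24}\mu^i$. Written as in the table, with $\lambda^i$ now denoting the canonical coordinates of the new pair, this is $-\frac{1}{24\lambda^i}$ in the convention of the table. I will check the exponent carefully by redoing the scaling once with the explicit $\xi$-residue formula $c_i=-\frac{1}{3f^i}\operatorname*{Res}_{\xi=\lambda^i}(g_\xi^{-1}A_\xi)$, where the swap amounts to the substitution $\xi\mapsto 1/\xi$.

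\textbf{Step 2: the leading data of $\mathcal P_3$.} I will extract the dispersionless limit $g_3$ of $\mathcal P_3$ after the rescaling $\mathcal P_3\mapsto\varepsilon^{-1}\mathcal P_3$. Theorems~\ref{thm:HamFormalism-1}--\ref{thm:HamFormalism-3} realise $\mathcal P_3$ as the next step of the recursion $\mathcal P_2\mathcal P_1^{-1}$. I therefore expect, and will verify from the explicit entries $\mathcal P_3^{\alpha\beta}$, that
\[
g_3=g_2\,g_1^{-1}g_2 .
\]
It follows that the canonical coordinates of $(\mathcal P_1,\mathcal P_3)$ are $(\lambda^i)^2$ and those of $(\mathcal P_2,\mathcal P_3)$ are again $\lambda^i$, with diagonal coefficients $(\lambda^i)^2f^i$ and $\lambda^if^i$ respectively.

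\textbf{Step 3: expand $\mathcal P_3$ and evaluate.} I will expand $\mathcal P_3$ to order $\varepsilon^2$ using $\varLambda=e^{\varepsilon\partial_x}$ to get $H_3$ and $K_3$, and transform them with the Jacobian $J$ already used in Theorem~\ref{central-invariants thm}. I then evaluate $c_i$ for $(\mathcal P_1,\mathcal P_3)$ and $(\mathcal P_2,\mathcal P_3)$ via the $\xi$-residue formula with $g_\xi=g_3-\xi g_a$. The reversed pairs $(\mathcal P_3,\mathcal P_1)$ and $(\mathcal P_3,\mathcal P_2)$ then follow by the same swap argument as in Step~1, applied in the canonical coordinates of each pair.

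\textbf{Main obstacle.} The hard part is Step~3. The canonical coordinates are the roots of a quartic with no usable closed form, so direct symbolic work in $\lambda$-coordinates is heavy. What I would try first is to avoid solving for $\lambda^i$ at all. I would compute $g_\xi^{-1}A_\xi$ as a rational matrix function of $\xi$ in the $u$-coordinates, note that its pole at a simple root $\xi=\lambda^i$ of $\det g_\xi$ has residue $\operatorname{adj}(g_\xi)A_\xi/\partial_\xi\det g_\xi$, and reduce everything modulo the characteristic polynomial. After that reduction, the identities $c_i=\frac{1}{48\sqrt{\lambda^i}}$ and $c_i=\frac{1}{24\lambda^i}$ become polynomial identities that can be checked by computer algebra.
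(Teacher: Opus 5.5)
Your plan is the paper's proof plus two structural shortcuts. The paper simply re-evaluates \eqref{central invariants} for each of the six pairs, exactly as for $(\mathcal P_1,\mathcal P_2)$. Your shortcut $g_3=g_2g_1^{-1}g_2$ is sound: both sides give $4u_1u_2-2u_1^2w$ in the $(1,1)$ slot and $4u_1w^2-8w^3-2u_3$ in the $(4,4)$ slot, so a single Jacobian does serve all pairs.

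The gap is in how formula \eqref{central invariants} depends on the parametrisation of the canonical coordinates. Under $\lambda^i\mapsto\phi_i(\lambda^i)$ one has $f^i\mapsto(\phi_i')^2f^i$, $K^{ii}\mapsto(\phi_i')^2K^{ii}$ and $H^{ki}\mapsto\phi_k'\phi_i'H^{ki}$. So the value of the formula is multiplied by $(\phi_i')^{-2}$, not $(\phi_i')^{-1}$.

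In Step~1, evaluating in the old coordinates, the bracket acquires the factor $-1/\lambda^i$ and $(\tilde f^i)^2=(\lambda^i)^2(f^i)^2$. This gives $-(\lambda^i)^{-3}c_i$. The factor $(d\mu^i/d\lambda^i)^{-2}=(\lambda^i)^4$ then yields the swap law $\tilde c_i=-\lambda^i c_i=-c_i/\mu^i$. Hence $(\mathcal P_2,\mathcal P_1)$ has $-\frac{1}{24\mu^i}$, not $-\frac{1}{24}\mu^i$. Your translation into the table is also inconsistent: under your own convention, $-\frac{1}{24}\mu^i$ would read $-\frac{\lambda^i}{24}$. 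You reach the tabulated entry only through two compensating slips. With the correct law, $(\mathcal P_3,\mathcal P_2)$ does follow at once from $(\mathcal P_2,\mathcal P_3)$: $-\lambda^i\cdot\frac{1}{24\lambda^i}=-\frac{1}{24}$.

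The same slip appears in Step~2. For $(\mathcal P_1,\mathcal P_3)$ the canonical coordinates are $\nu^i=(\lambda^i)^2$. The coefficient entering the formula is $g_1^{ii}$ in these coordinates, namely $(2\lambda^i)^2f^i$, not $(\lambda^i)^2f^i$. With your value the residue formula returns $\frac{1}{12\lambda^i}$ instead of $\frac{1}{48\lambda^i}=\frac{1}{48\sqrt{\nu^i}}$. Evaluating in the old coordinates $\lambda^i$ would give $\frac{\lambda^i}{12}$. Your value $\lambda^if^i$ for $(\mathcal P_2,\mathcal P_3)$ is correct.

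For $(\mathcal P_3,\mathcal P_1)$, the swap law applied to $(\mathcal P_1,\mathcal P_3)$ gives $-\nu^i\cdot\frac{1}{48\sqrt{\nu^i}}=-\frac{\sqrt{\nu^i}}{48}$. This equals the tabulated $+\frac{1}{48\sqrt{\mu^i}}$, with $\mu^i=1/\nu^i$, only on the branch $\sqrt{\mu^i}=-1/\sqrt{\nu^i}$; with consistent branches the sign is opposite. So this row does not simply ``follow''. You have to fix the square roots explicitly, e.g.\ set $\sqrt{\nu^i}:=\lambda^i$ by working modulo $\det(g_2-Yg_1)$ with $\nu=Y^2$. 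That is also what makes your polynomial-identity check possible for this pair. The entry is then $-\lambda^i/48$, agreeing with the table only up to that branch choice.

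Finally, in Step~3 take the trace. Note that $A_\xi$ itself has poles at the roots of $\det g_\xi$, coming from $H_\xi^{\mathrm T}g_\xi^{-1}H_\xi$. The term $\operatorname{Tr}\bigl(\operatorname{adj}(g_\xi)H_\xi^{\mathrm T}\operatorname{adj}(g_\xi)H_\xi\bigr)$ is exactly divisible by $\det g_\xi$. Carry out this division before you evaluate at $\xi=\lambda^i$ or reduce modulo the characteristic polynomial.
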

\newpage
\section{The dispersionless limit of the $(3,1)$-type gAL hierarchy}
In this section, we construct a Frobenius manifold $M$ from the dispersionless limit of the $(3,1)$-type gAL hierarchy, and show that the dispersionless limits of its first flows belong to the Principal Hierarchy of $M$.
\subsection{Frobenius manifold associated with  the $(3,1)$-type gAL hierarchy}   
Frobenius manifolds, introduced by Dubrovin in the early 1990s \cite{Frobenius manifold,Dubrovin1996}, provide a geometric framework for the Witten--Dijkgraaf--Verlinde--Verlinde (WDVV) associativity equations arising in 2D topological field theory \cite{TFT-2,TFT-1}. 
\begin{defn}
Let $M$ be an $n$-dimensional smooth or analytic manifold. We say $M$ is a Frobenius manifold if it admits a Frobenius algebra structure, consisting of a multiplication $\cdot$, a non-degenerate multiplication-invariant bilinear form $\langle\,,\rangle$, and a unity $e$, that depends smoothly or analytically on $p\in M$ and satisfies the following axioms:
	
	\begin{enumerate}	\item The bilinear form $\langle\,,\rangle$ defines a flat metric $\eta$ on $M$, and $\nabla e = 0$, where $\nabla$ is the Levi-Civita connection with respect to $\eta$.	\item Define a $(0,3)$-tensor $c$ on $M$ by $c(X_1,X_2,X_3) := \langle X_1 \cdot X_2, X_3\rangle$. Then the $(0,4)$-tensor $\nabla c$ on $M$, defined by	\[	(\nabla c)(W,X_1,X_2,X_3) = \nabla_W\bigl(c(X_1,X_2,X_3)\bigr),	\]	is required to be symmetric in all four arguments $W,X_1,X_2,X_3 \in \operatorname{Vect}(M)$.	\item There exists an Euler vector field $E$ on $M$ such that $\nabla \nabla E = 0$, $\nabla E$ is diagonalizable, and	\begin{align*}		[E,X_1\cdot X_2] - [E,X_1]\cdot X_2 - X_1\cdot [E,X_2] &= X_1\cdot X_2,\\		E\langle X_1,X_2\rangle - \langle [E,X_1],X_2\rangle - \langle X_1,[E,X_2]\rangle &= (2-d)\langle X_1,X_2 \rangle,	\end{align*}	
where $d$ is a constant called the charge of the Frobenius manifold.\end{enumerate}
\end{defn}

We now recall some basic properties of the Frobenius manifold $M(\eta, c, e, E)$. Fix a system of flat local coordinates $\boldsymbol{z} = (z^1,\dots,z^n)$ for the metric $\eta$, in which $\eta = (\eta^{\alpha\beta})$ takes constant values. We denote
\begin{equation}	
	\eta_{\alpha\beta} = \eta\Bigl(\frac{\partial}{\partial z^\alpha},\frac{\partial}{\partial z^\beta}\Bigr),\qquad	\eta^{\alpha\beta} = (\eta_{\alpha\beta})^{-1}.
	\end{equation}
The 4-symmetry of $\nabla c$ implies the existence of a local function $F$ on $M$, called the Frobenius potential, such that
\begin{equation}
		c\Bigl(\frac{\partial}{\partial z^\alpha},\frac{\partial}{\partial z^\beta},\frac{\partial}{\partial z^\gamma}\Bigr)	= \frac{\partial^3 F}{\partial z^\alpha\partial z^\beta\partial z^\gamma},	\qquad 1\leq \alpha,\beta,\gamma\leq n.	\label{eq:frobenius_potential}
		\end{equation}
		The potential $F$ satisfies the WDVV associativity equations
		\begin{equation}	c_{\alpha\beta}^{\xi} c_{\xi\gamma}^{\delta} = c_{\gamma\beta}^{\xi} c_{\xi\alpha}^{\delta},	\qquad 1\leq \alpha,\beta,\gamma,\delta\leq n,	\label{eq:wdvv}
		\end{equation}
		together with the quasi-homogeneity condition
		\begin{equation}\label{quasi-homogeneity}	\mathcal{L}_E F = (3-d)F + A_{\alpha\beta}z^\alpha z^\beta + B_\alpha z^\alpha + C,
		\end{equation}
		where $A_{\alpha\beta}, B_\alpha, C$ are certain constants. The structure constants of the Frobenius multiplication are given by
		\begin{equation}
				\frac{\partial}{\partial z^\alpha} \cdot \frac{\partial}{\partial z^\beta}	= c_{\alpha\beta}^\gamma \frac{\partial}{\partial z^\gamma},	\label{eq:frobenius_structure_constants}
				\end{equation}
				where $c_{\alpha\beta}^\gamma = \eta^{\gamma\delta} c_{\alpha\beta\delta}$ and $c_{\alpha\beta\delta} = c\bigl(\frac{\partial}{\partial z^\alpha},\frac{\partial}{\partial z^\beta},\frac{\partial}{\partial z^\delta}\bigr)$. The Euler vector field $E$ can be expressed in flat coordinates as
				\begin{equation}
						E = \sum_{\alpha=1}^n \Bigl[ \Bigl(1-\frac{d}{2}-\mu_\alpha\Bigr)z^\alpha + r^\alpha \Bigr] \frac{\partial}{\partial z^\alpha},
						\end{equation}
						where the parameter $\mu := \operatorname{diag}(\mu_1,\dots,\mu_n)$ is defined by
						\begin{equation}
								\mu = 1-\frac{d}{2} - \nabla E,
						\end{equation}
						and satisfies the anti-commutation relation
						\begin{equation}
								\mu\eta + \eta\mu = 0.
								\end{equation}
								The constants $r^\alpha$ are non-zero only if $\mu_\alpha = 1-\frac{d}{2}$. 

    To study the Frobenius manifold associated with the $(3,1)$-type gAL hierarchy, we introduce the Landau--Ginzburg superpotential
    \begin{equation}\label{superpotential}
    	\lambda(p) = \frac{p^3 + u_1 p^2 + u_2 p + u_3}{p(p + w)}
    \end{equation}
    associated with the Lax operator \eqref{Lax-operator}. This superpotential endows the $(u_1,u_2,u_3,w)$-space with a Frobenius structure $M$, following the approach of \cite{Brini 2012}. More precisely, the Frobenius metric $\eta$ and the $(0,3)$-tensor $c$ that defines the Frobenius multiplication are given by the Landau--Ginzburg formulas \cite{Dubrovin1996}:
    \begin{align}
    	\eta(\p_\afa, \p_\beta) &=\,
    	\sum_{p_0\in C_\lmd}
    	\Res_{p=p_0}
    	\left(
    	\frac{\p_\afa(\lmd(p))\cdot \p_\beta(\lmd(p))}{\lmd'(p)}
    	\frac{\td p}{p^2}
    	\right),
    	\label{Frob-eta}\\
    	c(\p_\afa, \p_\beta, \p_\gamma) &=\,
    	\sum_{p_0\in C_\lmd}
    	\Res_{p=p_0}
    	\left(
    	\frac{\p_\afa(\lmd(p))\cdot \p_\beta(\lmd(p))\cdot \p_\gamma(\lmd(p))}{\lmd'(p)}
    	\frac{\td p}{p^2}
    	\right),  \label{Frob-c}
    \end{align}
    where $\p_\afa:=\pp{z^\afa}$, and $\p_\afa, \p_\beta, \p_\gamma$ denote vector fields on the Frobenius manifold $M$, while $C_\lmd=\Bigset{p}{\lmd'(p)=0}$ denotes the set of critical points of $\lmd(p)$.
    
    We now introduce a system of flat coordinates $(z^1,z^2,z^3,z^4)$ defined by
    \begin{align}\label{coordinate change}
    	\left\{
    	\begin{aligned}
    		z^1 &= - \underset{p=\infty}{\Res} \lambda(p) \frac{\mathrm{d}p}{p} = u_{1}-w, \\
    		z^2 &= \log u_{3}-2\log w, \\
    		z^3 &= - \underset{p=0}{\Res} \lambda(p) \frac{\mathrm{d}p}{p} =\frac{u_{2}}{w} -\frac{u_{3}}{w^{2}},\\
    		z^4 &= \log w.
    	\end{aligned}
    	\right.
    \end{align}
    In this coordinate system, the inverse metric $(\eta^{\afa\beta})=(\eta_{\afa\beta})^{-1}$ and the structure constants computed from \eqref{Frob-eta}--\eqref{Frob-c} take the simple forms
    \begin{equation}
    	(\eta_{\afa\beta})=(\eta^{\afa\beta}) =
    	\begin{pmatrix}
    		&	&   & 1 \\
    		&	& 1 &   \\
    		&	1 &   &\\
    		1 & & &
    	\end{pmatrix},
    \end{equation}
    \begin{equation}\label{c-ijk}
    	\left\{
    	\begin{split}
    		c_{111}&=\,\frac{1}{z^1-z^3},\quad
    		c_{112}=0,\quad
    		c_{113}=-\frac{1}{z^1-z^3},\quad
    		c_{114}=1,\\	c_{122}&=\rme^{z^2},\quad
    		c_{123}=\,0,\quad
    		c_{124}=0,\quad	c_{133}=\frac{1}{z^1-z^3},\\ c_{134}&=0,\quad c_{144}=-\rme^{z^4},\quad
    		c_{222}=\,\rme^{z^2}(\rme^{z^4}+z^1-z^3),\quad
    		c_{223}=\,-\rme^{z^2},\\
    		c_{224}&=\rme^{z^2+z^4},\quad c_{233}=1,\quad c_{234}=0,\quad
    		c_{244}=\rme^{z^2+z^4},\\	c_{333}&=-\frac{1}{z^1-z^3},\quad c_{334}=0,\quad c_{344}=\rme^{z^4},\quad c_{444}=\rme^{z^4}(\rme^{z^2}-z^1+z^3).
    	\end{split}\right.
    \end{equation}
    
    The potential $F$ of this Frobenius manifold is
    \begin{equation}\label{F-GAL}
    	F=\frac12(z^1)^2z^4
    	+\frac12 z^2 (z^3)^2
    	+\frac{1}{2}(z^1-z^3)^2\log (z^1-z^3)+(\rme^{z^2}-\rme^{z^4})(z^1-z^3)+\rme^{z^2+z^4},
    \end{equation}
    which satisfies $c_{\afa\beta\gamma}=\frac{\p^3 F}{\p z^\alpha \p z^{\beta}\p z^{\gamma}}$ and the quasi-homogeneity condition \eqref{quasi-homogeneity}. The Euler vector field $E$ and the charge $d$ are
    \begin{equation}\label{Euler-E}
    	E=z^1\p_{1}+\p_{2} + z^3\p_{3}+\p_{4},
    	\qquad
    	d=1.
    \end{equation}
    Moreover, the unity $e$ with respect to the Frobenius multiplication is
    \begin{equation} \label{unity e}
    	e=\p_{1}+\p_{3},
    \end{equation}
    which can be represented as a gradient field $e=\grad_{\eta}\theta_{0,0}$, where
    \begin{equation}\label{theta00}
    	\theta_{0,0} = z^2+z^4=\log \frac{u_{3}}{w}.
    \end{equation}
    
    It can be verified that this Frobenius manifold $M(\eta,c,e,E)$ satisfies all the axioms of a Frobenius manifold (see \cite{Frobenius manifold,Dubrovin1996} for details). Moreover, the intersection form
    \begin{equation}
    	g=g^{\afa\beta}\pp{z^\afa}\otimes\pp{z^\beta},\qquad
    	g^{\afa\beta}=E^\gamma c^{\afa\beta}_\gamma
    \end{equation}
    can be represented as
    \begin{equation}
    	(g^{\afa\beta})
    	=
    	\begin{pmatrix}
    		2\rme^{z^4}(\rme^{z^2}-z^1+z^3)
    		& \rme^{z^4}&2\rme^{z^2 +z^4}
    		& z^1-\rme^{z^4} \\[4pt]
    		\rme^{z^4}
    		& 2 & z^3-\rme^{z^2}&-1\\[4pt]
    		2\rme^{z^2 +z^4}
    		& z^3-\rme^{z^2}& 2\rme^{z^2}(\rme^{z^4}+z^1-z^3)&\rme^{z^2}\\[4pt]
    		z^1-\rme^{z^4}&-1&\rme^{z^2}&2
    	\end{pmatrix},
    \end{equation}
    and the contravariant metrics $\eta^{\afa\beta}, g^{\afa\beta}$ form a flat pencil on $M$. In terms of the original variables $(u^1,u^2,u^3,u^4):=(u_1,u_2,u_3,w)$, the coefficient matrices of this flat pencil coincide with \eqref{flat pencil}, which are also the leading coefficients of the Hamiltonian operators $\mcalP_1$ and $ \mcalP_2$ in \eqref{HamP1}--\eqref{HamP2}.
    	
    	\subsection{The Principal Hierarchy of $M$}
    	We will construct the Principal Hierarchy of the Frobenius manifold $M$ \eqref{F-GAL} in this subsection and show that the dispersionless limits of the first flows of the $(3,1)$-type gAL hierarchy belong to this Principal Hierarchy.
    	
    	In general, let $M(\eta,c,e,E)$ be an $n$-dimensional Frobenius manifold and fix a system of flat coordinates $(z^1,\dots,z^n)$. The Principal Hierarchy of $M$ is a family of evolutionary PDEs of the form
    	\begin{equation}\label{PH}
    		\frac{\partial z^\alpha}{\partial t^{i,k}} = \eta^{\alpha\beta} \partial_x \frac{\partial \theta_{i,k+1}}{\partial z^\beta}, \quad (i,k) \in \mathcal{I}
    	\end{equation}
    	in the time variables $\mathbf{t} = \{t^{i,k}\}_{(i,k) \in \mathcal{I}}$, where the index set is
    	\begin{align*}
    		\mathcal{I} = \bigl( \{1,2,\dots,n\} \times \mathbb{Z}_{\geq 0} \bigr) \cup \bigl( \{0\} \times \mathbb{Z} \bigr),
    	\end{align*}
    	and the functions $\theta_{i,k}$ on $M$ satisfy the initial conditions, recursion relations, and quasi-homogeneity conditions
    	\begin{equation}\label{4.16}
    		\theta_{\alpha,0} = \eta_{\alpha\beta} z^\beta, \quad \grad_{\eta} \theta_{0,0} = e,
    	\end{equation}
    	\begin{equation}\label{4.17}
    		\partial_\alpha \partial_\beta \theta_{i,k+1} = c_{\alpha\beta}^\gamma \partial_\gamma \theta_{i,k},\quad c_{\alpha\beta}^\gamma=\eta^{\gamma\lambda }c_{\lambda\alpha\beta},
    	\end{equation}
    	\begin{equation}\label{4.18}
    		\mathcal{L}_E \theta_{i,k} = \left( k + \mu_i + \frac{2-d}{2} \right) \theta_{i,k} + \sum_{s=1}^k \theta_{\alpha,k-s} \left( R_s \right)_i^\alpha (-1)^k r_{k+1}^\alpha \eta_{\alpha i},
    	\end{equation}
    	for $(i,k) \in \mathcal{I}$, where all lower Greek indices are assumed to range over $\{1,2,\dots,n\}$. 
    	
    	The flows of the Principal Hierarchy \eqref{PH} can be rewritten as a Hamiltonian system of hydrodynamic type
    	\begin{equation}
    		\frac{\partial z^\alpha}{\partial t^{i,k}} = (\mathcal{P}_1^{[0]})^{\alpha\beta} \frac{\delta H_{i,k}^{[0]}}{\delta z^\beta}, \quad (i,k) \in \mathcal{I},
    	\end{equation}
    	with compatible Hamiltonian operators $\{\mathcal{P}_1^{[0]}, \mathcal{P}_2^{[0]}\}$ of the form
    	\[
    	(\mathcal{P}_1^{[0]})^{\alpha\beta} = \eta^{\alpha\beta} \partial_x, \qquad
    	(\mathcal{P}_2^{[0]})^{\alpha\beta} = g^{\alpha\beta} \partial_x + \Gamma^{\alpha\beta}_{\gamma} z_x^\gamma,
    	\]
    	where the Hamiltonians are
    	\[
    	H_{i,k-1}^{[0]} = \int \theta_{i,k}(\mathbf{z}(x)) \, \td x, \quad (i,k) \in \mathcal{I}.
    	\]
    	Here $g^{\alpha\beta} = E^\gamma c_\gamma^{\alpha\beta}$ is the intersection form and $\Gamma_\gamma^{\alpha\beta} = \bigl(\tfrac12 - \mu_\beta\bigr) c_\gamma^{\alpha\beta}$ are the contravariant coefficients of the Levi-Civita connection of $(g^{\alpha\beta})$. More details can be found in \cite{Normal forms,GFM}.
    	
    	In the special case of the Frobenius manifold $M$ \eqref{F-GAL}, the first few terms of $\{\theta_{i,k}\}$ can be obtained by solving \eqref{4.16}--\eqref{4.18} directly:
    	\begin{align}
    		\theta_{1,0} &= z^4, \quad
    		\theta_{1,1} = \rme^{z^2}-\rme^{z^4}+z^3-z^1+(z^1-z^3)\log(z^3-z^1)+z^{1}z^{4}, \\
    		\theta_{2,0} &= z^3, \quad
    		\theta_{2,1} =\rme^{z^2+z^4}+\rme^{z^2}(z^1-z^3)+\frac{(z^3)^2}{2}, \label{2,1}\\
    		\theta_{3,0} &= z^2, \quad
    		\theta_{3,1} =  \rme^{z^4}-\rme^{z^2}+z^1-z^3+(z^3-z^1)\log(z^1-z^3)+z^{2}z^{3}, \\ 
    		\theta_{4,0} &= z^1, \quad
    		\theta_{4,1} =\rme^{z^2+z^4}+\rme^{z^4}(z^3-z^1)+\frac{(z^1)^2}{2},
    	\end{align}
    	and $\theta_{0,0}$ is given by \eqref{theta00}.
    	
    	By a straightforward calculation, the first few flows of the Principal Hierarchy~\eqref{PH} of $M$ are expressed in terms of the original variables $(u_1,u_2,u_3,w)$ via the coordinate change \eqref{coordinate change} as
    	\begin{align}
    		\frac{\partial}{\partial t^{2,0}} \begin{pmatrix} u_1 \\ u_2 \\ u_3 \\ w \end{pmatrix}
    		&=\frac{1}{w^2}
    		\begin{pmatrix}
    			0 & 0 & 2w & -3u_3 \\
    			u_3 w & 0 & u_1 w & -2u_1 u_3\\
    			0 & u_3 w & 0 & -u_2 u_3\\
    			0 & 0 & w & -2u_3
    		\end{pmatrix}
    		\begin{pmatrix} u_{1x} \\ u_{2x} \\ u_{3x} \\ w_x \end{pmatrix},
    		\label{250909-PH-1}\\[0.8ex]
    		\frac{\partial}{\partial t^{4,0}} \begin{pmatrix} u_1 \\ u_2 \\ u_3 \\ w \end{pmatrix}
    		&=
    		\begin{pmatrix}
    			0 & 1 & 0 & -u_{1} \\
    			u_{2} & 0 & 1 & -2u_{2}\\
    			2u_{3} & 0 & 0 & -3u_{3}\\
    			0 & 1 & 0 & -u_{1}
    		\end{pmatrix}
    		\begin{pmatrix} u_{1x} \\ u_{2x} \\ u_{3x} \\ w_x \end{pmatrix}.
    		\label{250909-PH-3}
    	\end{align}
    	
    	\begin{prop}\label{prop-250909}
    		The densities $\{\theta_{i,k}\}_{(i,k)\in\mathcal{I}}$ of $M$ \eqref{F-GAL} can be expressed as
    		\begin{align}
    			\theta_{2,k} &= \frac{1}{(k+1)!}\res_{p=0}\left(\lambda^{k+1}\frac{\mathrm{d}p}{p}\right), \label{theta2k}\\
    			\theta_{4,k} &= -\frac{1}{(k+1)!}\res_{p=\infty} \left(\lambda^{k+1}\frac{\mathrm{d}p}{p}\right), \label{theta4k}
    		\end{align}
    		for each $k\geq 0$.
    	\end{prop}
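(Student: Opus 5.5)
The plan is to show that the residue expressions on the right of \eqref{theta2k}--\eqref{theta4k} satisfy the three defining conditions \eqref{4.16}--\eqref{4.18} for the families $\theta_{2,k},\theta_{4,k}$. Set
\[
\tilde\theta_{2,k}=\frac{1}{(k+1)!}\res_{p=0}\lambda^{k+1}\frac{\mathrm{d}p}{p},\qquad
\tilde\theta_{4,k}=-\frac{1}{(k+1)!}\res_{p=\infty}\lambda^{k+1}\frac{\mathrm{d}p}{p}.
\]
Since these conditions determine the $\theta_{i,k}$ up to the usual normalization, the proposition follows once the check is complete. The argument proceeds by induction on $k$.

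\textbf{Step 1: base case.} For $k=0$, the coordinate change \eqref{coordinate change} gives directly $\tilde\theta_{4,0}=z^1=\theta_{4,0}$. Since $\lambda$ has a simple pole at $p=0$, $-\res_{p=0}\lambda\,\mathrm{d}p/p$ equals $z^3$ up to sign. The sign convention must therefore be checked against $\theta_{2,0}=z^3$, together with the explicit $\theta_{2,1},\theta_{4,1}$ listed above. This is a short Laurent expansion: at $p=\infty$ we have $\lambda=p+(u_1-w)+O(p^{-1})$, and at $p=0$ we have $\lambda=\frac{u_3}{w}p^{-1}+\cdots$.

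\textbf{Step 2: the recursion \eqref{4.17}.} This is the main step, handled with the Landau--Ginzburg calculus. Differentiating under the residue (the residue points are fixed, so this is allowed) gives
\[
\partial_\alpha\partial_\beta\tilde\theta_{i,k+1}=\frac{1}{k!}\res\Bigl(\lambda^{k-1}\bigl(k\,\partial_\alpha\lambda\,\partial_\beta\lambda+\lambda\,\partial_\alpha\partial_\beta\lambda\bigr)\frac{\mathrm{d}p}{p}\Bigr).
\]
Next I use the standard identity valid for LG superpotentials with the flat coordinates of \eqref{coordinate change}:
\[
\partial_\alpha\lambda\,\partial_\beta\lambda=c_{\alpha\beta}^\gamma\,\partial_\gamma\lambda\cdot p\,\lambda'(p)\ \text{mod}\ \lambda'\ \text{in a suitable sense},
\]
more precisely
\[
\partial_\alpha\lambda\,\partial_\beta\lambda=c_{\alpha\beta}^\gamma\partial_\gamma\lambda+\lambda'\, p\,\partial_p\bigl(\cdots\bigr)+\partial_p(\cdots)\text{-type terms},
\]
together with $\partial_\alpha\partial_\beta\lambda=\partial_p\bigl(\,\cdot\,\bigr)$ for flat coordinates. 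In Dubrovin's form this reads
\[
\partial_\alpha\partial_\beta(\lambda\,\mathrm{d}p/p)=\partial_p\bigl(\tfrac{\partial_\alpha\lambda\,\partial_\beta\lambda}{\lambda'}\bigr)_{\pm}\,\mathrm{d}p/p\text{-corrections}.
\]
Substituting and integrating by parts inside the residue at $0$ or at $\infty$ reduces the expression to $c_{\alpha\beta}^\gamma\partial_\gamma\tilde\theta_{i,k}$. The hard part will be making this integration by parts rigorous at the two poles. The measure is $\mathrm{d}p/p^2$ in \eqref{Frob-eta} but $\mathrm{d}p/p$ in the residues, so I would first reparametrize by $s=\log p$, turning both into the standard LG setting with $\mathrm{d}s$. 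I would then verify the key identity
\[
\partial_\alpha\lambda\,\partial_\beta\lambda\equiv c_{\alpha\beta}^\gamma\partial_\gamma\lambda\pmod{\partial_s\lambda}
\]
as polynomial identities in $p$ and $(p+w)^{-1}$, using the explicit $c_{\alpha\beta\gamma}$ in \eqref{c-ijk}. If the general argument proves cumbersome, I would fall back on checking this congruence directly for the four coordinate vector fields, which is a finite computation.

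\textbf{Step 3: quasi-homogeneity \eqref{4.18}.} From \eqref{Euler-E}, $E$ acts on $\lambda$ by $\mathcal{L}_E\lambda=\lambda-p\lambda'$; this is checked on the generators, using that $u_1,u_2/w,\ldots$ scale appropriately and that $E$ shifts $\log w$ and $\log u_3$. Hence
\[
\mathcal{L}_E\tilde\theta_{i,k}=(k+1)\tilde\theta_{i,k}-\frac{1}{k!}\res\bigl(\lambda^k p\lambda'\bigr)\frac{\mathrm{d}p}{p}.
\]
The last term is a total derivative $\partial_p\lambda^{k+1}/(k+1)$, whose residue vanishes. This gives the homogeneity degree $k+\mu_i+\frac12$ with $\mu_2=\mu_4=\frac12$, and shows there are no resonant $R_s$ corrections for these indices.

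Steps 1–3 combined with uniqueness yield the proposition.
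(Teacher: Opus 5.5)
Your overall strategy is the paper's: check the initial data, verify the recursion \eqref{4.40} for the residue expressions, and verify $\mathcal{L}_E\Theta_{i,k}=(k+1)\Theta_{i,k}$ as in \eqref{4.41}. Several parts are fine as written:
\begin{itemize}
\item Step 3 is correct: $E=u_1\partial_{u_1}+2u_2\partial_{u_2}+3u_3\partial_{u_3}+w\partial_w$, so $\mathcal{L}_E\lambda=\lambda-p\lambda'$.
\item Your appeal to uniqueness holds, because an affine ambiguity cannot be quasi-homogeneous of degree $k+2\ge 2$.
\item The sign in \eqref{theta2k} is the correct one. Indeed $\res_{p=0}\lambda\,\mathrm{d}p/p=u_2/w-u_3/w^2=z^3$, so the minus sign in \eqref{coordinate change} is a misprint.
\end{itemize}

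The gap is in Step 2. Your displayed second derivative is the formula for $\tilde\theta_{i,k}$, not for $\tilde\theta_{i,k+1}$. Correct the index, put $s=\log p$, and define $Q_{\alpha\beta}$ by $\partial_\alpha\lambda\,\partial_\beta\lambda-c^\gamma_{\alpha\beta}\partial_\gamma\lambda=\lambda_sQ_{\alpha\beta}$. Then
\[
(k+1)!\,\bigl(\partial_\alpha\partial_\beta\tilde\theta_{2,k+1}-c^\gamma_{\alpha\beta}\partial_\gamma\tilde\theta_{2,k}\bigr)
=\res_{p=0}\Bigl(\lambda^{k+1}\bigl(\partial_\alpha\partial_\beta\lambda-\partial_sQ_{\alpha\beta}\bigr)+\partial_s\bigl(\lambda^{k+1}Q_{\alpha\beta}\bigr)\Bigr)\,\mathrm{d}s,
\]
and similarly at $p=\infty$.

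The exact term has zero residue for any rational $Q_{\alpha\beta}$, so the integration by parts you call the hard part is trivial. What must actually be proved is $\partial_sQ_{\alpha\beta}=\partial_\alpha\partial_\beta\lambda$ for this same quotient $Q_{\alpha\beta}$.

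The congruence $\partial_\alpha\lambda\,\partial_\beta\lambda\equiv c^\gamma_{\alpha\beta}\partial_\gamma\lambda \pmod{\lambda_s}$ that you plan to verify cannot give this, and neither can your ``finite computation'' fallback. This holds even in the ring $\mathbb{C}[p,p^{-1},(p+w)^{-1}]$, where you need $p^{-1}$ as well. The congruence is tensorial, so it holds equally in non-flat coordinates, whereas \eqref{4.40} with ordinary partial derivatives fails there. Flatness of the $z^\alpha$ enters only through the second identity. You assert that identity as a general fact in an $A_n$-style form and never verify it; the $(\cdot)_\pm$ projections are not even defined for a superpotential with poles at $0$, $-w$ and $\infty$.

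The gap closes with a direct check. By partial fractions,
\[
\lambda=p+z^1+\frac{B}{p}-\frac{(z^1-z^3)\,w}{p+w},\qquad w=\rme^{z^4},\quad B=\rme^{z^2+z^4}.
\]
Hence $\partial_1\lambda=\frac{p}{p+w}$, $\partial_2\lambda=\frac{B}{p}$, $\partial_3\lambda=\frac{w}{p+w}$, and $\partial_4\lambda=p-\lambda_s$. Using \eqref{c-ijk} one finds:
\begin{itemize}
\item $Q_{11}=Q_{12}=Q_{13}=Q_{23}=Q_{33}=0$;
\item $Q_{14}=-Q_{34}=\frac{w}{p+w}$;
\item $Q_{22}=Q_{24}=-\frac{B}{p}$;
\item $Q_{44}=-\partial_4\lambda$.
\end{itemize}
In every case $\partial_sQ_{\alpha\beta}=\partial_\alpha\partial_\beta\lambda$.

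With this ten-case verification in place of the congruence check, your argument proves the recursion for all $k$ at once. It is then more explicit than the paper's proof, which only displays the $(1,1)$ and $(3,3)$ entries. Those are exactly the cases where $Q_{\alpha\beta}=0=\partial_\alpha\partial_\beta\lambda$, so the issue does not show up there.
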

    	
    	\begin{proof}
    		Let $\Theta_{2,k}$ and $\Theta_{4,k}$ denote the right-hand sides of \eqref{theta2k} and \eqref{theta4k}, respectively. Using the coordinate change \eqref{coordinate change}, we obtain
    		\begin{align*}
    			\Theta_{2,1} &= \frac{1}{2}\res_{p=0}\left(\lambda^{2}\frac{\mathrm{d}p}{p}\right)
    			=-3w^2+4u_1w-u_1^2-2u_2
    			=\rme^{z^2+z^4}+\rme^{z^2}(z^1-z^3)+\frac{(z^3)^2}{2},
    		\end{align*}
    		which coincides with $\theta_{2,1}$ in \eqref{2,1}. Similarly, we can verify directly that
    		\begin{align*}
    			\Theta_{2,0}=\theta_{2,0},\quad
    			\Theta_{4,0}=\theta_{4,0},\quad
    			\Theta_{4,1}=\theta_{4,1}.
    		\end{align*}
    		
    		To prove the validity of \eqref{theta2k}--\eqref{theta4k}, we show that $\Theta_{i,k}$ satisfies \eqref{4.17}--\eqref{4.18}, i.e.,
    		\begin{equation}\label{4.40}
    			\partial_\alpha \partial_\beta \Theta_{i,k+1} = c^\gamma_{\alpha\beta} \partial_\gamma \Theta_{i,k},
    		\end{equation}
    		where the Frobenius coefficients $c_{\alpha\beta\gamma}$ of $M$ are given in \eqref{c-ijk} and $c^\gamma_{\alpha\beta} = \eta^{\gamma\delta} c_{\alpha\beta\delta}$.
    		
    		Note that
    		\[
    		\partial_\alpha \partial_\beta \Theta_{2,k+1}
    		=\frac{1}{(k+1)!} \res_{p=0}\Bigl(
    		\bigl(\partial_\alpha\partial_\beta\lambda \cdot \lambda^{k+1}
    		+ \partial_\alpha\lambda\,\partial_\beta\lambda \cdot (k+1)\lambda^{k}\bigr)\frac{\mathrm{d}p}{p}\Bigr).
    		\]
    		A direct calculation then gives
    		\begin{align*}
    			\partial_1 \partial_1 \Theta_{2,k+1} - c^\alpha_{11} \partial_\alpha \Theta_{2,k} &= 0, \\
    			\partial_3 \partial_3 \Theta_{2,k+1} - c^\alpha_{33} \partial_\alpha \Theta_{2,k} &= 0.
    		\end{align*}
    		The remaining cases of \eqref{4.40} can be verified similarly, so we omit the details. In addition, for all $k \geq 0$ the following relations hold:
    		\begin{align}\label{4.41}
    			\mathcal{L}_E \Theta_{2,k} = (k+1) \Theta_{2,k},\qquad
    			\mathcal{L}_E \Theta_{4,k} = (k+1) \Theta_{4,k},
    		\end{align}
    		with $E$ defined in \eqref{Euler-E}. This proves the proposition.
    	\end{proof}
     \subsection{Examples of Legendre transformations of $M$}\label{subsection:Legendre transf}
  We first recall the necessary preliminaries on Legendre transformations of Frobenius manifolds \cite{LiuQuZhang,GLt}. Let $M(\eta, c, e, E)$ be an $n$-dimensional Frobenius manifold with flat metric $\eta$, Frobenius multiplication $c$, unity $e$, and Euler vector field $E$ of charge $d$. A vector field $B \in \operatorname{Vect}(M)$ is called a Legendre field if it satisfies
 \[
 X_1 \cdot \nabla_{X_2} B = X_2 \cdot \nabla_{X_1} B, \qquad \forall X_1, X_2 \in \operatorname{Vect}(M),
 \]
 where $\nabla$ denotes the Levi-Civita connection of $\eta$ and $X_1 \cdot X_2 := c(X_1, X_2)$ denotes the Frobenius multiplication. A Legendre field $B$ is invertible if there exists a vector field $B^{-1} \in \operatorname{Vect}(M)$ such that $B \cdot B^{-1} = e$. For any invertible Legendre field $B$, we define a new metric $\hat{\eta}$ on $M$ by
 \begin{equation}\label{hat-metric-def}
 	\hat{\eta}(X_1, X_2) = \eta(B \cdot X_1, B \cdot X_2), \qquad \forall X_1, X_2 \in \operatorname{Vect}(M).
 \end{equation}
 A Legendre field $B$ is said to be quasi-homogeneous if there exists a constant $\mu_B \in \mathbb{C}$ such that
 \begin{equation*}
 	[E, B] = \Bigl(\mu_B - 1 + \frac{d}{2}\Bigr) B.
 \end{equation*}
 If $B$ is a quasi-homogeneous, invertible Legendre field, then $\hat{\eta}$ is also flat on $M$  \cite{LiuQuZhang} and satisfies the quasi-homogeneity condition
 \[
 \mathcal{L}_E\hat{\eta} = (2 - \hat{d})\hat{\eta}, \qquad \hat{d} = -2\mu_B.
 \]
 Consequently, $\hat{M} := M(\hat{\eta}, c, e, E)$ is a (generalized) Frobenius manifold of charge $\hat{d}$, called the generalized Legendre transformation of $M$ along $B$. The new manifold $\hat{M}$ inherits the same Frobenius multiplication $c$, unity $e$, and Euler vector field $E$ from $M$; only the flat metric is changed.
 
 Moreover, the inverse $\hat{B} := B^{-1}$ is a quasi-homogeneous Legendre field on $\hat{M}$ satisfying
 \[
 [E, \hat{B}] = -\Bigl(\mu_B +1+ \frac{d}{2}\Bigr)\hat{B},
 \]
 and transforms $\hat{M}$ back to $M$. In this sense, the pairs $(M, B)$ and $(\hat{M}, \hat{B})$ are said to be Legendre dual to each other.
 
 Fix a system of flat coordinates $(z^1,\dots,z^n)$ for the metric $\eta$ on $M$, and let $B = B^\alpha \frac{\partial}{\partial z^\alpha}$ be a quasi-homogeneous, invertible Legendre field. Then there exists a system of flat coordinates $(\hat{z}^1,\dots,\hat{z}^n)$ for the transformed metric $\hat{\eta}$ such that
 \begin{equation}\label{relations4.43}
 	\frac{\partial \hat{z}^\alpha}{\partial z^\beta} = B^\gamma c_{\beta \gamma}^\alpha, \qquad
 	\eta\!\left( \frac{\partial}{\partial z^\alpha}, \frac{\partial}{\partial z^\beta} \right)
 	= \hat{\eta}\!\left( \frac{\partial}{\partial \hat{z}^\alpha}, \frac{\partial}{\partial \hat{z}^\beta} \right),
 \end{equation}
 where $c_{\beta \gamma}^\alpha$ are the structure constants of the Frobenius multiplication in the $z^{\afa}$-coordinates. Furthermore, the prepotentials $F$ of $M$ and $\hat{F}$ of $\hat{M}$ satisfy
 \begin{equation}\label{prepotential-relation}
 	\frac{\partial^2 F}{\partial z^\alpha \partial z^\beta} = \frac{\partial^2 \hat{F}}{\partial \hat{z}^\alpha \partial \hat{z}^\beta}.
 \end{equation}
 When both the unity $e$ and the Legendre field $B$ are flat, i.e., $\nabla e = 0$ and $\nabla B = 0$, the generalized Legendre transformation reduces to the classical Legendre transformation originally introduced by Dubrovin \cite{Dubrovin1996}.
 
 We now apply this theory to the Frobenius manifold $M$ given by \eqref{F-GAL}.
 
 \begin{ex}
 	Consider the quasi-homogeneous, invertible Legendre field
 	\[
 	B_1=\frac{\partial}{\partial z^2}
 	\]
 	on $M$~\eqref{F-GAL}. From the relations \eqref{relations4.43} we obtain the new flat coordinates $\hat z^\alpha$ as
 	\begin{equation}\label{Legendre change1}
 		\left\{
 		\begin{aligned}
 			\hat{z}^1 &=  \rme^{z^2+z^4}, \\[4pt]
 			\hat{z}^2 &= z^3 -  \rme^{z^2}, \\[4pt]
 			\hat{z}^3 &=\rme^{z^2+z^4}+ \rme^{z^2}(z^1-z^3),\\[4pt]
 			\hat{z}^4 &=\rme^{z^2},
 		\end{aligned}
 		\right.
 		\quad
 		\big( \hat{\eta}(\partial_{\hat{z}^\alpha}, \partial_{\hat{z}^\beta}) \big) =
 		\begin{pmatrix}
 			0&	0&0 & 1 \\[4pt]
 			0&	0&1 & 0\\[4pt]
 			0&	1&0&0\\[4pt]
 			1&	0&0&0
 		\end{pmatrix}.
 	\end{equation}
 	Using \eqref{prepotential-relation}, we obtain the potential $\hat{F}$ of $\hat{M} = M(\hat{\eta}, c, e, E)$ 
 	\begin{align}\label{F1}
 		\hat{F} =& \hat{z}^{1}\hat{z}^{2}\hat{z}^4+\frac{1}{2} \hat{z}^1(\hat{z}^4)^2  + \frac{1}{2} (\hat{z}^2)^2\hat{z}^3  + \frac{1}{2} (\hat{z}^3-\hat{z}^1)^2 \log(\hat{z}^3-\hat{z}^1) +\hat{z}^1 \hat{z}^3 \log \hat{z}^4\\
 		&+\frac{1}{2}(\hat{z}^1)^2 \log(-\hat{z}^1)-(\hat{z}^1)^2 \log \hat{z}^4 \nonumber.
 	\end{align}
 	The Euler field $E$ and the unity $e$ have the form
 	\begin{equation*}
 		E = 2\hat{z}^1 \frac{\partial}{\partial \hat{z}^1} +  \hat{z}^2 \frac{\partial}{\partial \hat{z}^2} + 2\hat{z}^3 \frac{\partial}{\partial \hat{z}^3}+ \hat{z}^4 \frac{\partial}{\partial \hat{z}^4}, \quad e = \frac{\partial}{\partial \hat{z}^2}
 	\end{equation*}
 	with respect to the new coordinates $\hat z^\alpha$.
 \end{ex}

 \begin{ex}
 	Consider the quasi-homogeneous, invertible Legendre field
 	\[
 	B_2=\frac{\partial}{\partial z^4}
 	\]
 	on $M$~\eqref{F-GAL}. From the relations \eqref{relations4.43} we obtain the new flat coordinates $\tilde z^\alpha$ as
 	\begin{equation}\label{Legendre change2}
 		\left\{
 		\begin{aligned}
 			\tilde{z}^1 &=\rme^{z^2+z^4}- \rme^{z^4}(z^1-z^3), \\[4pt]
 			\tilde{z}^2 &=\rme^{z^4}, \\[4pt]
 			\tilde{z}^3 &=\rme^{z^2+z^4},\\[4pt]
 			\tilde{z}^4 &= z^1-  \rme^{z^4},
 		\end{aligned}
 		\right.
 		\quad
 		\big( \tilde{\eta}(\partial_{\tilde{z}^\alpha}, \partial_{\tilde{z}^\beta}) \big) =
 		\begin{pmatrix}
 			0&	0&0 & 1 \\[4pt]
 			0&	0&1 & 0\\[4pt]
 			0&	1&0&0\\[4pt]
 			1&	0&0&0
 		\end{pmatrix}.
 	\end{equation}
 	Using \eqref{prepotential-relation}, we obtain the potential $\tilde{F}$ of $\tilde{M} = M(\tilde{\eta}, c, e, E)$ 
 	\begin{align}\label{F2}
 		\tilde{F} =&\frac{1}{2} (\tilde{z}^2)^2\tilde{z}^3  -\tilde{z}^{1}\tilde{z}^{2}\tilde{z}^4+2\tilde{z}^2\tilde{z}^3\tilde{z}^4+\frac{1}{2} \tilde{z}^3(\tilde{z}^4)^2   + \frac{1}{2} (\tilde{z}^3)^2 \log \tilde{z}^3-\frac{1}{2} (\tilde{z}^3)^2 \log \tilde{z}^2 \nonumber\\
 		& +\tilde{z}^1 \tilde{z}^2 \tilde{z}^4\log \tilde{z}^4-\tilde{z}^2 \tilde{z}^3 \tilde{z}^4\log \tilde{z}^4 + \frac{1}{2}(\tilde{z}^3-\tilde{z}^1)^2 \log(\tilde{z}^3-\tilde{z}^1)\nonumber\\
 		&+\frac{1}{2}(\tilde{z}^1)^2 \log(\tilde{z}^3-\tilde{z}^1)-\frac{1}{2}(\tilde{z}^1)^2 \log(\tilde{z}^1-\tilde{z}^3).
 	\end{align}
 	The Euler field $E$ and the unity $e$ have the form
 	\begin{equation*}
 		E = 2\tilde{z}^1 \frac{\partial}{\partial \tilde{z}^1} +  \tilde{z}^2 \frac{\partial}{\partial \tilde{z}^2} + 2\tilde{z}^3 \frac{\partial}{\partial \tilde{z}^3}+ \tilde{z}^4 \frac{\partial}{\partial \tilde{z}^4}, \quad e = \frac{\partial}{\partial \tilde{z}^4}.
 	\end{equation*}
 \end{ex}
 
 \begin{ex}
 	Consider the quasi-homogeneous, invertible Legendre field
 	\[
 	B_3=\frac{\partial}{\partial z^2}-\frac{\partial}{\partial z^4}
 	\]
 	on $M$~\eqref{F-GAL}. From the relations \eqref{relations4.43} we obtain the new flat coordinates $\bar z^\alpha$ as
 	\begin{equation}\label{Legendre change3}
 		\left\{
 		\begin{aligned}
 			\bar{z}^1 &=  \rme^{z^4}(z^1-z^3), \\[4pt]
 			\bar{z}^2 &= z^3 -  \rme^{z^2}-  \rme^{z^4}, \\[4pt]
 			\bar{z}^3 &=\rme^{z^2}(z^1-z^3),\\[4pt]
 			\bar{z}^4 &=\rme^{z^2}+\rme^{z^4}-z^1,
 		\end{aligned}
 		\right.
 		\quad
 		\big( \bar{\eta}(\partial_{\bar{z}^\alpha}, \partial_{\bar{z}^\beta}) \big) =
 		\begin{pmatrix}
 			0&	0&0 & 1 \\[4pt]
 			0&	0&1 & 0\\[4pt]
 			0&	1&0&0\\[4pt]
 			1&	0&0&0
 		\end{pmatrix}.
 	\end{equation}
 	Using \eqref{prepotential-relation}, we obtain the potential $\bar{F}$ of $\bar{M} = M(\bar{\eta}, c, e, E)$ 
 	\begin{equation}\label{F3}
 		\bar{F} =\frac{1}{2} (\bar{z}^2)^2\bar{z}^3  -\frac{1}{2} \bar{z}^1(\bar{z}^4)^2  +\frac{1}{2}(\bar{z}^1)^2 \log(\bar{z}^1)+\frac{1}{2}(\bar{z}^3)^2 \log(\bar{z}^3)-\bar{z}^1\bar{z}^3 \log(\bar{z}^2+\bar{z}^4).
 	\end{equation}
 	The Euler field $E$ and the unity $e$ have the form
 	\begin{equation}\label{Legndre3}
 		E = 2\bar{z}^1 \frac{\partial}{\partial \bar{z}^1} +  \bar{z}^2 \frac{\partial}{\partial \bar{z}^2} + 2\bar{z}^3 \frac{\partial}{\partial \bar{z}^3}+ \bar{z}^4 \frac{\partial}{\partial \bar{z}^4}, \quad e = \frac{\partial}{\partial \bar{z}^2}-\frac{\partial}{\partial \bar{z}^4}
 	\end{equation}
 	with respect to the new coordinates $\bar z^\alpha$.
 	
 	We introduce new coordinates \(\boldsymbol{s} = (s^1, s^2, s^3, s^4)\) to transform the unity \(e = \frac{\partial}{\partial \bar{z}^2} - \frac{\partial}{\partial \bar{z}^4}\) obtained after the above Legendre transformation into the form \(e = \frac{\partial}{\partial s^1}\). To this end, we set
 	\begin{equation}\label{Legendre-s-coords}
 		\left\{
 		\begin{aligned}
 			s^1 &=\bar{z}^2, \\[4pt]
 			s^2 &=\bar{z}^1, \\[4pt]
 			s^3 &=\bar{z}^3,\\[4pt]
 			s^4 &=\bar{z}^2+\bar{z}^4,
 		\end{aligned}
 		\right.
 		\quad
 		\left(\frac{\partial s^i}{\partial \bar{z}^j} \right) =
 		\begin{pmatrix}
 			0&	1&0 & 0 \\[4pt]
 			1&	0&0 & 0\\[4pt]
 			0&	0&1&0\\[4pt]
 			0&	1&0&1
 		\end{pmatrix}.
 	\end{equation}
 	The potential $\bar{F}$ given in \eqref{F3} and the Euler vector field $E$ given in \eqref{Legndre3} then take the following forms in the $\boldsymbol{s}$-coordinates:
 	\begin{align}
 		\bar{F} =&\frac{1}{2} (s^1)^2s^3 -\frac{1}{2} s^2(s^4)^2  -\frac{1}{2}(s^1)^2 s^2+s^1s^2s^4 \nonumber\\
 		&+\frac{1}{2}(s^2)^2 \log(s^2)+\frac{1}{2}(s^3)^2 \log(s^3)-s^2s^3 \log(s^4),
 	\end{align}
 	and
 	\begin{equation}
 		E = s^1 \frac{\partial}{\partial s^1} +  2s^2 \frac{\partial}{\partial s^2} + 2s^3 \frac{\partial}{\partial s^3}+ s^4 \frac{\partial}{\partial s^4}.
 	\end{equation}
 \end{ex}
 \section{Conclusion}  
In this work, we show that the $(3,1)$-type gAL hierarchy possesses a local tri-Hamiltonian structure at the full-dispersive level, realized explicitly via the supervariable technique. Moreover, the central invariants of the bi-Hamiltonian structure $(\mathcal P_1,\mathcal P_2)$ are all equal to $\frac{1}{24}$, which implies the existence of a tau function for this integrable system and paves the way for further studies on super tau-covers, Virasoro symmetry, and related topics. We further construct an associated Frobenius manifold $M$ and show that the dispersionless limits of the first flows of the $(3,1)$-type gAL hierarchy belong to the Principal Hierarchy of $M$, which helps to elucidate the intrinsic connection between integrable systems and 2D topological field theory.
 
The gAL hierarchy has been generalized to the rational reduction of the 2D-Toda hierarchy (RR2T) by Brini and his collaborators \cite{RR2T} in 2017. Qu and the second author of this paper recently studied the local bi-Hamiltonian structures of the $(1,2)$- and $(2,1)$-type RR2T, together with a generalized Frobenius manifold with non-flat unity for the $(2,1)$-type case \cite{2-1 RR2T}, and further extended these results to the $(1,n)$- and $(n,1)$-type RR2T \cite{n-1 RR2T}. One readily sees that the $(1,n)$-type gAL hierarchy corresponds to the $(1,n)$-type RR2T, i.e., the case $(a,b)=(1,n)$ in \eqref{ab type gAL}. However, why does the present work bypass the $(2,1)$-type case? Our preliminary investigation indicates that the $(2,1)$-type gAL hierarchy may not admit a local Hamiltonian structure. The $(3,1)$-type gAL hierarchy, by contrast, admits a tri-Hamiltonian structure as well as a Frobenius manifold with flat unity. These results suggest that the $(n,1)$-type gAL hierarchy differs from the $(n,1)$-type RR2T in certain respects. Building on the above, future work will extend our results to more general asymmetric gAL hierarchies and further investigate the symmetric case.
  \vskip 0.3cm
  \noindent \textbf{Acknowledgements.} The authors are grateful to the referees for their useful suggestions, which have improved this paper. The work was supported by the “Jingying” Project of Shandong University of Science and Technology.


\begin{thebibliography}{99}
	\bibitem{AL1975}M.J. Ablowitz, J.F. Ladik, Nonlinear differential-difference equations, J. Math. Phys. 16 (1975) 598--603.
	\bibitem{AL1976}M.J. Ablowitz, J.F. Ladik, Nonlinear differential-difference equations and Fourier analysis, J. Math. Phys. 17 (1976) 1011--1018.
	
	\bibitem{Brini CMP}A. Brini, The local Gromov--Witten theory of $\mathbb{CP}^1$ and integrable hierarchies, Commun. Math. Phys. 313 (2012) 571--605.
	\bibitem{Brini 2012}A. Brini, G. Carlet, P. Rossi, Integrable hierarchies and the mirror model of local $\mathbb{CP}^1$, Phys. D 241 (2012) 2156--2167.
	\bibitem{RR2T}A. Brini, G. Carlet, S. Romano, P. Rossi, Rational reductions of the 2D-Toda hierarchy and mirror symmetry, J. Eur. Math. Soc. 19 (2017) 835--880.
	
	\bibitem{unobstructed}G. Carlet, H. Posthuma, S. Shadrin, Deformations of semisimple Poisson pencils of hydrodynamic type are unobstructed, J. Differential Geom. 108 (2018) 63--89.
	\bibitem{Fan}M. Chen, J. He, E. Fan, Long-time asymptotics for the defocusing Ablowitz--Ladik system with initial data in lower regularity, Adv. Math. 450 (2024) 109769.
	
	\bibitem{TFT-2}R. Dijkgraaf, H. Verlinde, E. Verlinde, Topological strings in $d \leq 1$, Nucl. Phys. B 352 (1991) 59--86.
	\bibitem{Frobenius manifold}B. Dubrovin, Integrable systems and classification of 2-dimensional topological field theories, Progr. Math. 115 (1993) 313--359.
	\bibitem{Dubrovin1996}B. Dubrovin, Geometry of 2D topological field theories, In: Integrable systems and quantum groups, Springer Berlin (1996) 120--348.
	\bibitem{Normal forms}B. Dubrovin, Y. Zhang, Normal forms of hierarchies of integrable PDEs, Frobenius manifolds and Gromov--Witten invariants, eprint arXiv:math.DG/0108160.
	\bibitem{c1}B. Dubrovin, S.-Q. Liu, Y. Zhang, On Hamiltonian perturbations of hyperbolic systems of conservation laws I: quasi-triviality of bi-Hamiltonian perturbations, Comm. Pure Appl. Math. 59 (2006) 559--615.
	
	\bibitem{c3}G. Falqui, P. Lorenzoni, Exact Poisson pencils, $\tau$-structures and topological hierarchies, Physica D 241 (2012) 2178--2187.
	\bibitem{Ferapontov}E.V. Ferapontov, Compatible Poisson brackets of hydrodynamic type, J. Phys. A 34 (2001) 2377--2388.
	
	\bibitem{relativistic Toda2}J. Gibbons, B.A. Kupershmidt, Relativistic analogs of basic integrable systems, In: Integrable and Superintegrable Systems, World Sci. (1990) 207--231.
	\bibitem{Geng}X. Geng, H. Dai, J. Zhu, Decomposition of the discrete Ablowitz--Ladik hierarchy, Stud. Appl. Math. 118 (2007) 281--312.
	
	\bibitem{Kersten}P. Kersten, I. Krasil'shchik, A. Verbovetsky, Hamiltonian operators and $l^{*}$-coverings, J. Geom. Phys. 50 (2004) 273--302.
	
	\bibitem{AL-triham}S. Li, S.-Q. Liu, H. Qu, Y. Zhang, Tri-Hamiltonian structure of the Ablowitz--Ladik hierarchy, Phys. D 433 (2022) 133180.
	\bibitem{c2}S.-Q. Liu, Y. Zhang, Deformations of semisimple bihamiltonian structures of hydrodynamic type, J. Geom. Phys. 54 (2005) 427--453.
	\bibitem{Jacobi structure}S.-Q. Liu, Y. Zhang, Jacobi structures of evolutionary partial differential equations, Adv. Math. 227 (2011) 73--130.
	\bibitem{Bihamiltonian cohomologies}S.-Q. Liu, Y. Zhang, Bihamiltonian cohomologies and integrable hierarchies I: a special case, Commun. Math. Phys. 324 (2013) 897--935.
	\bibitem{Liu lecture notes}S.-Q. Liu, Lecture notes on bihamiltonian structures and their central invariants, B-Model Gromov--Witten Theory, Cham: Springer International Publishing (2019) 573--625.
	\bibitem{Liu2024}S.-Q. Liu, H. Qu, Y. Wang, Y. Zhang, Solutions of the loop equations of a class of generalized Frobenius manifolds, Commun. Math. Phys. 405 (2024) 225.
	\bibitem{extend AL}S.-Q. Liu, Y. Wang, Y. Zhang, The extended Ablowitz--Ladik hierarchy and a generalized Frobenius manifold, eprint arXiv:2404.08895.
	\bibitem{GFM}S.-Q. Liu, H. Qu, Y. Zhang, Generalized Frobenius manifolds with non-flat unity and integrable hierarchies, Commun. Math. Phys. 406 (2025) 77.
	\bibitem{LiuQuZhang}S.-Q. Liu, H. Qu, Y. Zhang, Legendre transformations of a class of generalized Frobenius manifolds and the associated integrable hierarchies, Commun. Math. Phys. 406 (2025) 121.
	
	
	\bibitem{relativistic Toda1}W. Oevel, B. Fuchssteiner, H. Zhang, O. Ragnisco, Mastersymmetries, angle variables, and recursion operator of the relativistic Toda lattice, J. Math. Phys. 30 (1989) 2664--2670.
	
	
	\bibitem{2-1 RR2T}H. Qu, Q. Zhao, Asymmetric rational reductions of 2D-Toda hierarchy and a generalized Frobenius manifold, eprint arXiv:2510.04151.
	
	\bibitem{n-1 RR2T} H. Qu, Q. Zhao, Bihamiltonian structure of the $(n,1)$-type rational reductions of the 2D-Toda hierarchy, eprint arXiv:2606.23167.
	\bibitem{GLt}I.A. Strachan, R. Stedman, Generalized Legendre transformations and symmetries of the WDVV equations, J. Phys. A 50 (2017) 095202.
	\bibitem{Suris}Y.B. Suris, The problem of integrable discretization: Hamiltonian approach, in: Progress in Mathematics, Vol. 219, Birkh\"auser Verlag, Basel, 2003.
	
	\bibitem{Takasaki GAL}K. Takasaki, Generalized Ablowitz--Ladik hierarchy in topological string theory, J. Phys. A 47 (2014) 165201.
	
	\bibitem{AL&2DT}V.E. Vekslerchik, The 2D Toda lattice and the Ablowitz--Ladik hierarchy, Inverse Problems 11 (1995) 463--479.
	\bibitem{Universality}V.E. Vekslerchik, `Universality' of the Ablowitz--Ladik hierarchy, eprint solv-int/9807005 (1998).
	
	\bibitem{TFT-1}E. Witten, On the structure of the topological phase of two-dimensional gravity, Nucl. Phys. B 340 (1990) 281--332.
	
	\bibitem{AL symmetries1}D. Zhang, S. Chen, Symmetries for the Ablowitz--Ladik hierarchy: Part I. Four-potential case, Stud. Appl. Math. 125 (2010) 393--418.
	\bibitem{Central invariants}Y. Zhang, Central invariants of semisimple bihamiltonian structures, in: Proceedings of the 4th International Congress of Chinese Mathematicians III, Higher Education Press (2008) 380--394.

	
\end{thebibliography}
\end{document}